\begin{document}
\oddsidemargin 6pt\evensidemargin 6pt\marginparwidth
48pt\marginparsep 10pt

\thispagestyle{empty}


\noindent \vskip3.3cm

\bigskip \bigskip

\begin{center}
\large {A. ALIKHANYAN NATIONAL SCIENCE LABORATORY
\\YEREVAN PHYSICS INSTITUTE}
\end{center}

\bigskip\bigskip\bigskip

\begin{center}
{DEPARTMENT OF THEORETICAL PHYSICS}
\end{center}

\bigskip\bigskip\bigskip\bigskip \bigskip\bigskip

\begin{center}
{\Large\bf Geometric Measure of Entanglement and Schmidt Decomposition of Multipartite Systems}
\end{center}

\bigskip\bigskip\bigskip\bigskip \bigskip\bigskip \bigskip\bigskip

\begin{center}
{\large
Levon Tamaryan}\\
\medskip
{\small\it Yerevan Physics Institute\\ Alikhanian Br. Str.
2, 0036 Yerevan, Armenia}\\
\medskip

\bigskip \bigskip \bigskip \bigskip \bigskip \bigskip\bigskip

{\large PhD Thesis}\\

\bigskip \bigskip\bigskip \bigskip

Advisor: Dr. Lekdar Gevorgyan \\

\bigskip \bigskip

\end{center}

\bigskip \bigskip \bigskip\bigskip


\begin{center}
{\sc Abstract}
\end{center}

\bigskip

\noindent The thesis includes the original results of our articles
~\cite{dual,analyt,jung08-1,shared,univers,inequal,toward}. These results
are described in a concise form  below.

\bigskip

A method is developed to compute analytically entanglement
measures of three-qubit pure states. The methods leans on the
theorem stating that entanglement measures of the n-party pure
state can be  expressed by the (n-1)-party reduced state density
operator directly. Owing to this theorem algebraic equations are derived
for the geometric measure of entanglement and solved explicitly in
the cases of most interest. The solutions give analytic
expressions for the geometric entanglement measure in a wide range
of three-qubit systems, including the general class of W-type
states and states which are symmetric under the permutation of two
qubits~\cite{analyt,jung08-1}.

\bigskip

The same method is used to find the geometric measure of
entanglement of generic three-qubit pure states. Closed-form
expressions are presented for the geometric measure of
entanglement for three-qubit states that are linear combinations
of four orthogonal product states. It turns out that the geometric
measure for these states has three different expressions depending
on the range of definition in parameter space. Each expression of
the measure has its own geometrically meaningful interpretation
and thus the Hilbert space of three-qubits consists of three
different entangled regions. The states that lie on joint surfaces
separating different entangled regions, designated as shared
states, have particularly interesting features and are dual
quantum channels for the perfect teleportation and superdense
coding~\cite{shared}.

\bigskip

A powerful method is developed to compute analytically
multipartite entanglement measures. The method uses the duality
concept and creates a bijection between highly entangled quantum
states and their nearest separable states. The bijection gives
explicitly the geometric entanglement measure of arbitrary
generalized W states of n qubits and singles out two critical
points of entanglement in quantum state parameter space. The first
critical value separates symmetric and asymmetric entangled
regions of highly entangled states, while the second one separates
highly and slightly entangled states~\cite{dual,toward}.

\bigskip

The behavior of the geometric entanglement measure of many-qubit W
states is analyzed and an interpolating formula is derived. The
importance of the interpolating formula in quantum information is
threefold. First, it connects  quantities that can be easily
estimated in experiments. Second, it is an example of how we
compute entanglement of a quantum state with many unknowns. Third,
one can prepare the W state with a given entanglement bringing
into the position a single quantity~\cite{univers}.

\bigskip

Generalized Schmidt decomposition of pure three-qubit states has four positive and one complex coefficients. In contrast to the bipartite case, they are not arbitrary and the largest Schmidt coefficient restricts severely other coefficients. It is derived a non-strict inequality between three-qubit Schmidt coefficients, where the largest coefficient defines the least upper bound for the three nondiagonal coefficients or, equivalently, the three nondiagonal coefficients together define the greatest lower bound for the largest coefficient. Besides, it is shown the existence of another inequality which should establish an upper bound for the remaining Schmidt coefficient~\cite{inequal}.

\newpage

\tableofcontents

\newpage


\chapter*{Introduction}\label{intro}
\addcontentsline{toc}{chapter}{Introduction}

\newcommand{\uv}{{\bm u}}
\newcommand{\rv}{{\bm r}}
\newcommand{\ra}{\rangle}
\newcommand{\la}{\langle}
\newcommand{\vp}{\varphi}
\newcommand{\tr}{\mathrm{tr}}
\renewcommand{\t}{\theta}

\setcounter{equation}{0}

Building quantum information processing devices is a great
challenge for scientists and engineers of the third millennium
~\cite{fey-82,deu-85}. Compound quantum systems have potential for
many quantum processes, including the following applications:
factoring of large composite numbers ~\cite{shor-94,shor-exp},
quantum cryptography~\cite{ek-91,exp-cr}, superdense
coding~\cite{dense-92,exp-den}, quantum
teleportation~\cite{tele-93,exp-tel} and exponential speedup of
quantum computers~\cite{vidal-03,joz-03,12qubit}. These remarkable
phenomena have provided a basis for the development of modern
quantum information science.

The superior performance of quantum systems in computation and
communication applications is rooted in a property of quantum
mechanical states called entanglement~\cite{epr,bel,wer-89}. Quantum
entanglement is a physical resource associated with the peculiar
nonclassical correlations that are possible between separated
quantum systems. It is a fundamental property of quantum systems
and a basic physical resource for quantum information science~\cite{hor-09}.
In general, any task involving distant parties and using up entangled states as
a resource benefits from a better understanding of entanglement. It is
increasingly realized that quantum entanglement is at the heart of
quantum physics and as such it may be of very broad importance for
modern science and future technologies.

Entanglement is usually created by direct interactions between
subatomic particles. If two particles are entangled, then there is
a correlation between the results of measurements performed on
entangled pairs, and this correlation is observed even though the
entangled pair may have been separated by arbitrarily large
distances. In the multipartite case the entanglement is more
complicated concept and to distinguish entangled and unentangled
quantum states in this case it is necessary to define product
states and separable states.

Consider multipartite systems. The Hilbert space of a such system
is the tensor product of the Hilbert spaces of single particles.
There is a simple definition of unentangled states in the case of
pure states. Indeed, the vector(pure state) belonging to the
Hilbert space of the multipartite system is called a product state
if it is a tensor product of vectors(pure states) belonging to
Hilbert spaces of single particles. In other words, a pure state
of a multi-particle system is a product state if and only if all
subsystems are pure states. Clearly, there is no correlation
between subsystems of product states and they are unentangled
states.

Consider now mixed states of a multipartite system. The
generalization of the definition of unentangled states to mixed
states leans on the local operations and classical
communication(LOCC). Local operations and classical communication
is a method in quantum information theory where a local operation
is performed on part of the system, and where the result of that
operation is communicated classically to another part where
usually another local operation is performed. Since no quantum
interaction occurs within these actions it is natural to assume
that no entanglement can be created by LOCC alone, which is to say
that LOCC can decrease, but never increases entanglement. Hence,
any mixed state that can be obtained from product states via LOCC
is unentangled. It is shown, that if a mixed state is a
probability distribution over product states, known as separable
states, can be created from product states by LOCC
alone~\cite{plenvir-07}. Then there are two definitions: first,
separable states are unentangled and second, non-separable states
are entangled.

One of the most difficult and at the same time fundamental
questions in entanglement theory is quantifying
entanglement~\cite{ben-conc,woot-98,ben-error}. The basic
requirements to an entanglement measures rely on LOCC and local
unitary transformations(LU) that is  unitary transformations which
act on single particles separately. These requirements can be
formulated as follows~\cite{plenvir-07}:

\begin{itemize}

\item {Separable states contain no entanglement.}

\item {All nonseparable states are entangled.}

\item {The entanglement of states does not increase under LOCC
operations.}

\item {Entanglement does not change under LU-transformations.}

\end{itemize}

Many entanglement measures have been proposed for the two-particle
as well as for the multi-particle case~\cite{plenvir-07}. They are
very difficult to compute as their definition contains
optimizations over certain quantum states or quantum information
protocols. In bipartite case entanglement is relatively well
understood, while in multipartite case quantifying entanglement of
pure states is a question of vital importance.

The geometric measure of entanglement(GM) is one of the most
reliable quantifiers of multipartite
entanglement~\cite{shim-95,vedr-97,barn-01,wei-03}. It measures
the distance of a given quantum state from the set of product
states and is a decreasing function of the maximal product overlap
of the quantum state. The maximal product overlap (MPO) of a
quantum pure state is the absolute value of the inner product of
the quantum state and its nearest separable state. It(or its
square) has several names and we list all of them for the
completeness: entanglement eigenvalue~\cite{wei-03}, injective
tensor norm~\cite{wern-02}, maximal probability of
success~\cite{biham-02}, maximum singular value~\cite{sud-geom}
and maximal product overlap~\cite{dual}.

The geometric measure of entanglement (GM) has the following
remarkable properties and applications:

\begin{enumerate}

\item {It has identified irregularity in channel capacity
additivity. Using this measure, one can show that a family of
quantities, which were thought to be additive in earlier papers,
actually are not}~\cite{wern-02}.

\item {It has an operational treatment and quantifies how
well a given state serves as an input state to Grover's search
algorithm}~\cite{grov,biham-02}.

\item {It has useful connections to other entanglement
measures and gives rise to a lower bound on the relative entropy
of entanglement~{\rm\cite{wei-04}} and generalized
robustness}~\cite{cav-06}.

\item {It quantifies the difficulty to distinguish
multipartite quantum states by local means}~\cite{local}.

\item {It exhibits interesting connections with
entanglement witnesses and can be efficiently estimated in
experiments}~\cite{guh-07}.

\item {It has been used to prove that one dimensional
quantum systems tend to be globally separable along
renormalization group flows by following a universal scaling law
in the correlation length of the system. Owing to this one can
understand the physical implication of Zamolodchikov's c-theorem
more deeply}~\cite{orus-08}.

\item {It has been used to study quantum phase transitions
in spin models}~\cite{orusand}.

\item {It singles out states that can be used as a quantum
channel for the perfect teleportation and superdence
coding}~\cite{analyt}.

\item {It gives the largest coefficient of the generalized
Schmidt decomposition and the corresponding nearest product state
uniquely defines the factorisable basis of the
decomposition}~\cite{hig}.

\item {It has been used to derive a single-parameter family
of the maximally entangled three-qubit  states, where the
paradigmatic Greenberger-Horne-Zeilinger and W states emerge as
the extreme members in this family of maximally entangled states}
~\cite{maxim}.
\end{enumerate}

\noindent Owing to these features, GM can play an important role in the
investigation of different problems related to entanglement. In
spite of its usefulness one obstacle to use GM fully in quantum
information theories is the that it is difficult to compute
it analytically for generic states. The usual maximization method
generates a system of nonlinear equations which are unsolvable in
general. Thus, it is important to develop a technique for the
computation of GM~ \cite{jung08-1,hay-wir,shared,wei-sev,wei-guh}.

For bipartite systems maim problems problems related to entanglement have been solved with the help of the Schmidt decomposition~\cite{schmidt1907,ek-sch}. Therefore its generalization to multipartite states can solve difficult problems related to multipartite entanglement. This generalization for three qubits is done by Ac\'in {\it et al}~\cite{acin}, where it is shown that an arbitrary pure state can be written as a linear combination of five product states. Independently, Carteret {\it et al} developed a method for such a generalization for pure states of arbitrary multipartite system, where the dimensions of the individual state spaces are finite but otherwise arbitrary~\cite{hig}.

However, for a given quantum state the canonical form is not unique and the same state can have different canonical forms and therefore different sets of such amplitudes. The reason is that the stationarity equations defining stationarity points are nonlinear equations and in general have several solutions of different types. Then the question is which of amplitude sets should be treated as Schmidt coefficients and which ones should be treated as insignificant mathematical solutions. A criterion should exist that can distinguish right Schmidt coefficients from false ones and we need such a criterion. It is unlikely that we can solve problems of three-qubit entanglement without knowledge of what quantities are the relevant entanglement parameters.

\bigskip

The main goals of the thesis are:
\begin{description}
 \item[\quad {\rm a)}] to develop methods that allow us to compute analytically multipartite entanglement measures,
 \item[\quad {\rm b)}] to derive analytic expressions for the geometric measure of entanglement of multi-particle systems,
 \item[\quad {\rm c)}] to analyze basic phenomena in quantum information theory using closed form solutions for geometric measure.
 \item[\quad {\rm d)}] to find inequalities which define a unique Schmidt decomposition for generic multipartite systems
\end{description}

We have developed two powerful methods to compute analytically
multipartite entanglement measures.

The first method, hereafter referred to as \textit{reduced density
method}, allows us to compute analytically entanglement measures
of three-qubit pure states. The three-qubit system is important in
the sense that it is the simplest system which gives a nontrivial
effect in the entanglement. Thus, we should understand the general
properties of the entanglement in this system as much as possible
to go further to more complicated higher-qubit systems. The
three-qubit system can be entangled in two inequivalent ways --
Greenberger-Horne-Zeilinger (GHZ)~\cite{ghz} and W -- and neither
form can be transformed into the other with any probability of
success~\cite{Chir}. This picture is complete: any fully entangled
three-qubit pure state can be obtained from either the GHZ or W
state via stochastic local operations and classical communication
(SLOCC).

The reduced density method leans on the theorem stating that any
reduced (n-1)-qubit state uniquely determines the entanglement of
the original n-qubit pure state~\cite{reduced}. This means that
two-qubit mixed states can be used to calculate the geometric
measure of three-qubit pure states. This idea converts the task
effectively into the maximization of the two-qubit mixed state
over product states and yields {linear eigenvalue
equations}. Owing to this substantial simplification closed form
expressions can be derived for the geometric measure of
three-qubit pure states. This is fully addressed in works
\cite{analyt,shared,jung08-1}.

The second method, hereafter referred to as \textit{duality
method}, allows us to compute analytically the entanglement
measures of highly entangled n-qubit pure states. The main point
of the method is the theorem stating that the nearest product
state is essentially unique if the quantum state is highly
entangled~\cite{dual}. This makes it possible to map highly
entangled state to its nearest product state and quickly obtain
its geometric measure of entanglement. More precisely, we
construct two bijections. The first one creates a map between
highly entangled n-qubit quantum states and n-dimensional unit
vectors. The second one does the same between n-dimensional unit
vectors and n-part product states. Thus we obtain a double map, or
{\it duality}, as follows

\bigskip
\centerline{{ n-qubit pure states $\leftrightarrow$ n-dimensional
spatial vectors $\leftrightarrow$ n-part product states.}}
\bigskip

The main advantage of the map is that if one knows any of the
three vectors, then one instantly finds the other two. Hence we
find the geometric measure of entanglement of general multiqubit W
states.

The derived answer shows that highly entangled W states have two
exceptional points in the parameter space. At the second
exceptional point the reduced density operator of a some qubit is
a constant multiple of the unit operator and then the maximal
product overlap of these states is a constant regardless how many
qubits are involved and what are the values of the remaining
entanglement parameters. These states are known as shared quantum
states and can be used as quantum channels for the perfect
teleportation and dense coding.

Next it is shown that W-states have two different entangled
regions: the symmetric and asymmetric entangled regions. In the
computational basis these regions can be defined as follows. If a
W state is in the symmetric region, then the entanglement is a
fully symmetric function on the state parameters. Conversely, if a
W state is in the asymmetric region, then there is an exceptional
parameter such that the entanglement dependence on the exceptional
parameter differs dramatically from the dependencies of the
remaining parameters. Hence the point of intersection of the
symmetric and asymmetric regions is the first exceptional point.

The first exceptional point is important for large-scale W
states~\cite{univers}. It approaches to a fixed point when number
of qubits n increases and becomes state-independent(up to 1/n
corrections) when $n \gg 1$. As a consequence the entanglement, as
well as the maximal product overlap, becomes state-independent too
and therefore many-qubit W states have two state-independent
exceptional points. The underlying concept is that states whose
entanglement parameters differ widely may nevertheless have the
same maximal product overlap and this phenomenon should occur at
two fixed points. This is an analog of the universality of
dynamical systems at critical points. It is an intriguing fact
that systems with quite different microscopic parameters may
behave equivalently at criticality. Fortunately the
renormalization group provides an explanation for the emergence of
universality in critical systems~\cite{crit-rev}.

To construct generalized Schmidt decomposition(GSD) for arbitrary systems we apply the variational principle~\cite{inequal}. In order to extend uniquely the Schmidt decomposition to multipartite systems we require that its largest coefficient, as in bipartite case, is the maximal product overlap, otherwise it is an irrelevant solution of stationarity equations.
It is clear how do we single out the canonical form whose largest coefficient is the maximal product overlap. We should single out the closest product state of a given quantum state that gives a true maximum for overlap.  Of course, we cannot find closest product states of generic three-qubit states because there is no method to solve generic stationarity equation so far. Hence to distinguish the true maximum from other stationary points we require that the second variation of the maximal product overlap is negative everywhere and this condition yields the desired inequality.

\bigskip

The thesis consists of Introduction, six Chapters, Summary and
Bibliography.

\bigskip

In {\bf Chapter}~\ref{analytic-3q} we use the reduced density
method to compute analytically the geometric measure of
entanglement of GHZ-type and W-type three-qubit pure states~\cite{analyt}.
We derive explicit expressions for the maximal product overlaps and
closest product states of those states and show that W-type states
consist of two different classes. They are: slightly entangled
W-states for which MPO is the absolute value of the largest
amplitude of the quantum state in the computational basis and
highly entangled W-states for which MPO is the circumradius of the
triangle whose sides are absolute values of the amplitudes of the
quantum state in the same basis.

\medskip

In {\bf Chapter}~\ref{groverian} the same method is used to
connect the maximal product overlap with the polynomial invariants
of three-qubit pure states~\cite{jung08-1}. It is well known that
these states have five polynomial invariants~\cite{sud00}, i.e. invariants
under LU-transformations. Since entanglement should be invariant
under LU-transformations polynomial invariants are real variables
of entanglement measures and the relation between MPO and
polynomial invariants is independent from the choice of a
particular computational basis. Hence we use this relation to
classify entangled regions of the Hilbert space as follows: in
each region some of polynomial invariants are important and define
uniquely MPO while the remaining polynomial invariants are
irrelevant. In this way we obtained six different entangled
regions for three-qubit pure states.

\medskip

In {\bf Chapter}~\ref{shared-1} we use the reduced density method
to compute analytically the geometric entanglement measure of
generic three-qubit pure states which are linear superpositions of
GHZ- and W-type states~\cite{shared}. We give an explicit expression for the
geometric measure of entanglement for three-qubit states that are
linear combinations of four orthogonal product states. It turns
out that the geometric measure for these states has three
different expressions depending on the range of definition in
parameter space. Each expression of the measure has its own
geometrically meaningful interpretation. Such an interpretation
allows oneself to take one step toward a complete understanding
for the general properties of the entanglement measure. The states
that lie on joint surfaces separating different ranges of
definition, designated as shared states, are dual quantum channels
for the perfect teleportation and superdense coding. The
properties of the shared states are fully discussed.

\medskip

In {\bf Chapter}~\ref{duality} we use the duality method to
compute analytically the geometric entanglement measure of generic
n-qubit W-type states~\cite{dual}. We have constructed correspondences among W
states, n-dimensional unit vectors, and separable pure states. The
map reveals two critical values for quantum state parameters. The
first critical value separates symmetric and asymmetric entangled
regions of highly entangled states, while the second one separates
highly and slightly entangled states. The method gives an explicit
expressions for the geometric measure when the state allows
analytical solutions; otherwise it expresses the entanglement as
an implicit function of state parameters.

\medskip

In {\bf Chapter}~\ref{universal} we analyze physical features of
entanglement of many-quabit pure states~\cite{univers}. We show that when $n\gg1$
the geometric entanglement measure of general n-qubit W-states,
except maximally entangled W-states, is a one-variable function
and depends only on the Bloch vector with the minimal $z$
component. Hence one can prepare a W state with the required
maximal product overlap by altering the Bloch vector of a single
qubit. Next we compute analytically the geometric measure of
large-scale W states by describing these systems in terms of very
few parameters. The final formula relates two quantities, namely
the maximal product overlap and the Bloch vector, that can be
easily estimated in experiments.

\medskip

In {\bf Chapter}~\ref{schmidt} we derive a non-strict inequality between three-qubit Schmidt coefficients, where the largest coefficient defines the least upper bound for the three nondiagonal coefficients or, equivalently, the three nondiagonal coefficients together define the greatest lower bound for the largest coefficient. The main role of the inequality is to separate out three-qubit Schmidt coefficients from the set of four positive and one complex numbers. Besides, it is shown the existence of another inequality which should establish an upper bound for the remaining Schmidt coefficient.

\medskip

In {\bf Summary} we give the main points of our results and conclusions.

\medskip

In {\bf Bibliography} we list our references in order of appearance.


\chapter[Three Qubit Geometric Measure]{ Analytic Expressions for Geometric Measure of Three Qubit States}\label{analytic-3q}

\renewcommand{\ra}{\rangle}
\renewcommand{\la}{\langle}
\renewcommand{\tr}{\mathrm{tr}}
\newcommand{\p}{{\bm r}}
\newcommand{\s}{{\bm \sigma}}
\newcommand{\sv}{{\bm s}}
\newcommand{\nv}{{\bm n}}
\newcommand{\iv}{{\bm i}}
\newcommand{\mv}{{\bm m}}
\newcommand{\sq}{\varrho}
\newcommand{\lm}{\Lambda_{\max}^2}
\newcommand{\openone}{{\textsl{1\!\!1}}}
\newcommand{\ket}[1]{\vert #1 \ra}
\newcommand{\bra}[1]{\la #1 \vert}
\newcommand{\bk}[2]{\la #1 \vert #2 \ra}
\newcommand{\kb}[2]{\vert #1 \ra \la #2 \vert}
\newcommand{\ov}[2]{\left\la #1 | #2 \right\ra}

In this chapter we compute analytically the geometric measure of
entanglement of three-qubit pure states \cite{analyt}.

The entanglement of bipartite systems is well-understood
\cite{ben-conc,woot-98,ben-error,niels}, while the entanglement of
multipartite systems offers a real challenge to physicists. The
main point which makes difficult to understand the entanglement
for the multi-qubit systems is mainly due to the fact that the
analytic expressions for the various entanglement measures is
extremely hard to derive.

We consider pure three qubit systems \cite{acin,lind,coff,red,swiz},
although the entanglement of mixed states attracts a considerable
attention. Only very few analytical results for tripartite
entanglement have been obtained so far and we need more light on
the subject.

Recently the idea was suggested that nonlinear eigenproblem can be
reduced to the linear eigenproblem for the case of three qubit
pure states \cite{reduced}. The idea is based on theorem stating that
any reduced $(n-1)$-qubit state uniquely determines the geometric
measure of the original $n$-qubit pure state. This means that two
qubit mixed states can be used to calculate the geometric measure
of three qubit pure states and this will be fully addressed in
this work.

The method gives two algebraic equations of degree six defining
the geometric measure of entanglement. Thus the difficult problem
of geometric measure calculation is reduced to the algebraic
equation root finding. Equations contain valuable information, are
good bases for the numerical calculations and may test numerical
calculations based on other numerical techniques \cite{Shim-shor}.

Furthermore, the method allows to find the nearest separable
states for three qubit states of most interest and get analytic
expressions for their geometric measures. It turn out that highly
entangled states have their own feature. Each highly entangled
state has a vicinity with no product state and all nearest product
states are on the boundary of the vicinity and form an
one-parametric set.

This chapter is organized as follows.
In Section 1.1 we define the geometric measure of
entanglement and derive stationarity equations.
In Section 1.2 we derive algebraic equations in the case of pure
three qubit states and give general solutions.
In Section 1.3 we examine W-type
states and deduce analytic expression for their geometric
measures. States symmetric under permutation of two qubits are
considered in Section 1.4, where the overlap of the state
functions with the product states are maximized directly. In last
Section 1.5 we make concluding remarks.

\section{Geometric measure of entanglement}

We start by developing a general formulation,
appropriate for multipartite systems comprising n parts,
in which each part has its distinct Hilbert space.
Let $|\psi\rangle$ be a pure state of an $n$-party system ${\cal H} =
{\cal H}_1\otimes{\cal H}_2\otimes\cdots\otimes{\cal H}_n$ , where
the dimensions of the individual state spaces ${\cal H}_k$ are
finite but otherwise arbitrary. Denote by $\ket{q_1q_2...q_n}$ product states
which are defined as the tensor products

$$\ket{q_1q_2...q_n}\equiv\ket{q_1}\otimes\ket{q_2}\otimes\cdots\otimes\ket{q_n},$$
where $\ket{q_k}\in{\cal H}_k, k=1,2,...,n $.

The geometric measure of entanglement $E_g$ for an $n$-part pure
state $\psi$ is defined as $E_g(\psi)=-\ln \lm(\psi)$, where the
maximal product overlap $\Lambda_{max}(\psi)$ is given
by~\cite{wei-03}
\begin{equation}\label{mpo}
\Lambda_{max}=\max_{q_1,q_2,...,q_n}|\ov{\psi}{q_1q_2...q_n}|,
\end{equation}
where the normalization condition $\ov{q_k}{q_k}=1(k=1,2,...,n)$ is understood
and the maximization is performed over all product states.

The nearest product state is a stationary point for the overlap
with $\ket{\psi}$, so the states $\ket{q_k}$ satisfy the nonlinear
eigenvalue equations
\begin{equation}\label{1.stat-eqq}
\bk{q_1q_2\cdots\widehat{q_k}\cdots
q_n}{\psi}=\Lambda_{k}\ket{q_k};\;k=1,2,\cdots,n,
\end{equation}
where the caret means exclusion and eigenvalues $\Lambda_{k}$
are associated with the Lagrange
multipliers enforcing constraints $\ov{q_k}{q_k}=1(k=1,2,...,n)$.

Since phases of local states $\ket{q_k}$ are irrelevant one can
choose them such that $\Lambda_{k}$'s are all positive. On the other
hand $|\Lambda_{k}|=\Lambda_{max}$ and therefore the stationarity
equations can be rewritten as
\begin{equation}\label{1.stat-eq}
\bk{q_1q_2\cdots\widehat{q_k}\cdots q_n}{\psi}=\Lambda_{max}\ket{q_k};\;k=1,2,\cdots,n.
\end{equation}
This is a system of nonlinear equations and its maximal eigenvalue
and corresponding eigenvector
are the maximal product overlap and the nearest product
state of a given pure states $\ket{\psi}$, respectively.

The extension of the geometric measure of entanglement to mixed states
can be made via the use of the convex roof (or hull ) construction,
as it is done for the entanglement of formation~\cite{woot-98}.
We omit it since  mixed states are not considered in this thesis.

\section{Algebraic equations.}

Consider now three qubits A,B,C with state function $|\psi\ra$.
The entanglement eigenvalue $\Lambda_{max}(\psi)$ is given by

\begin{equation}\label{gen.lm}
\Lambda_{max}=\max_{q^1q^2q^3}|\la q^1q^2q^3|\psi\ra|
\end{equation}

\noindent and the maximization runs over all normalized complete
product states $|q^1\ra\otimes|q^2\ra\otimes|q^3\ra$. Superscripts
label single qubit states and spin indices are omitted for
simplicity. Since in the following we will use density matrices
rather than state functions, our first aim is to rewrite
Eq.(\ref{gen.lm}) in terms of density matrices. Let us denote by
$\rho^{ABC}=|\psi\ra\la\psi|$ the density matrix of the
three-qubit state and by $\sq^k=|q^k\ra\la q^k|$ the density
matrices of the single qubit states. The equation for the square
of the entanglement eigenvalue takes the form

\begin{equation}\label{gen.pmax}
\lm(\psi)=\max_{\sq^1\sq^2\sq^3}
\tr\left(\rho^{ABC}\sq^1\otimes\sq^2\otimes\sq^3\right).
\end{equation}

An important equality

\begin{equation}\label{gen.unit}
\max_{\sq^3} \tr(\rho^{ABC}\sq^1\otimes\sq^2\otimes\sq^3)=
\tr(\rho^{ABC}\sq^1\otimes\sq^2\otimes\openone^3)
\end{equation}

\noindent was derived in \cite{reduced} where $\openone$ is a unit
matrix. It has a clear meaning. The matrix
$\tr(\rho^{ABC}\sq^1\otimes\sq^2)$ is $2\otimes2$ hermitian matrix
and has two eigenvalues. One of eigenvalues is always zero and
another is always positive and therefore the maximization of the
matrix simply takes the nonzero eigenvalue. Note that its
minimization gives zero as the minimization takes the zero
eigenvalue.

We use Eq.(\ref{gen.unit}) to reexpress the entanglement
eigenvalue by reduced density matrix $\rho^{AB}$ of qubits A and B
in a form

\begin{equation}\label{gen.pred.analytic}
\lm(\psi)=\max_{\sq^1\sq^2}
\tr\left(\rho^{AB}\sq^1\otimes\sq^2\right).
\end{equation}

We denote by $\sv_1$ and $\sv_2$ the unit Bloch vectors of the
density matrices $\sq^1$ and $\sq^2$ respectively and adopt the
usual summation convention on repeated indices $i$ and $j$. Then

\begin{equation}\label{gen.s1s2}
\lm=\frac{1}{4}\max_{s_1^2=s_2^2=1}\left(1+\sv_1\cdot
\p_1+\sv_2\cdot \p_2+g_{ij}\,s_{1i}s_{2j}\right),
\end{equation}

where

\begin{equation}\label{gen.vec}
\p_1=\tr(\rho^A\s),\,\,\p_2=\tr(\rho^B\s),\,\,
g_{ij}=\tr(\rho^{AB}\sigma_i\otimes\sigma_j)
\end{equation}

\noindent and $\sigma_i$'s are Pauli matrices. The matrix $g_{ij}$
is not necessarily to be symmetric but must has only real entries.
The maximization gives a pair of equations

\begin{equation}\label{gen.eq}
\p_1+g\sv_2=\lambda_1\sv_1,\quad\p_2+g^T\sv_1=\lambda_2\sv_2,
\end{equation}

\noindent where Lagrange multipliers $\lambda_1$ and $\lambda_2$
are enforcing unit nature of the Bloch vectors. The solution of
Eq.(\ref{gen.eq}) is

\begin{subequations}\label{gen.sol}
\begin{equation}\label{gen.sol1}
\sv_1=\left(\lambda_1\lambda_2\openone-g\,g^T\right)^{-1}
\left(\lambda_2\p_1+g\,\p_2\right),
\end{equation}
\begin{equation}\label{gen.sol2}
\sv_2=\left(\lambda_1\lambda_2\openone-g^Tg\right)^{-1}
\left(\lambda_1\p_2+g^T\p_1\right).
\end{equation}
\end{subequations}

\noindent Now, the only unknowns are Lagrange multipliers, which
should be determined by equations

\begin{equation}\label{gen.alg}
|\sv_1|^2=1,\quad |\sv_2|^2=1.
\end{equation}

In general, Eq.(\ref{gen.alg}) give two algebraic equations of
degree six. However, the solution (\ref{gen.sol}) is valid if
Eq.(\ref{gen.eq}) supports a unique solution and this is by no
means always the case. If the solution of Eq.(\ref{gen.eq})
contains a free parameter, then
$\det(\lambda_1\lambda_2\openone-gg^T)=0$ and, as a result,
Eq.(\ref{gen.sol}) cannot not applicable. The example presented in
Section III will demonstrate this situation.

In order to test Eq.(\ref{gen.sol}) let us consider an arbitrary
superposition of W

\begin{equation}\label{ww.w}
|W\ra=\frac{1}{\sqrt{3}}\left(|100\ra+|010\ra+|001\ra\right)
\end{equation}

\noindent and flipped W

\begin{equation}\label{ww.flip}
|\widetilde{W}\ra=\frac{1}{\sqrt{3}}\left(|011\ra+|101\ra+|110\ra\right)
\end{equation}

\noindent states, i.e. the state

\begin{equation}
|\psi\ra=\cos\theta|W\ra+\sin\theta|\widetilde{W}\ra.
\end{equation}

Straightforward calculation yields

\begin{subequations}\label{ww.matr}
\begin{equation}\label{ww.matrv}
\p_1=\p_2=\frac{1}{3}\left(2\sin2\theta\iv+\cos2\theta\nv\right),
\end{equation}
\begin{equation}\label{ww.matrm}
g=\frac{1}{3}
\begin{pmatrix}
2 & 0 & 0\\
0 & 2 & 0\\
0 & 0 &\!\! -1
\end{pmatrix}
,
\end{equation}
\end{subequations}

\noindent where unit vectors $\iv$ and $\nv$ are aligned with the
axes $x$ and $z$, respectively. Both vectors $\iv$ and $\nv$ are
eigenvectors of matrices $g$ and $g^T$. Therefore $\sv_1$ and
$\sv_2$ are linear combinations of $\iv$ and $\nv$. Also from
$\p_1=\p_2$ and $g=g^T$ it follows that $\sv_1=\sv_2$ and
$\lambda_1=\lambda_2$. Then Eq.(\ref{gen.sol}) for general
solution give

\begin{equation}\label{ww.sv}
\sv_1=\sv_2=\sin2\varphi\,\iv+ \cos2\varphi\,\nv
\end{equation}

\noindent where

\begin{equation}\label{ww.var}
\sin2\varphi=\frac{2\sin2\theta}{3\lambda-2},\quad
\cos2\varphi=\frac{\cos2\theta}{3\lambda+1}.
\end{equation}

The elimination of the Lagrange multiplier $\lambda$ from
Eq.(\ref{ww.var}) gives

\begin{equation}\label{ww.el}
3\sin2\varphi\cos2\varphi=\cos2\theta\sin2\varphi-
2\sin2\theta\cos2\varphi.
\end{equation}

Let us denote by $t=\tan\varphi$. After the separation of the
irrelevant root $t=-\tan\theta$, Eq.(\ref{ww.el}) takes the form

\begin{equation}\label{ww.eq}
\sin\theta\,t^3+2\cos\theta\,t^2-2\sin\theta\,t-\cos\theta=0.
\end{equation}

\noindent This equation exactly coincides with that derived in
\cite{wei-03}. Since a detailed analysis was given in
Ref.\cite{wei-03}, we do not want to repeat the same calculation
here. Instead we would like to consider the three-qubit states
that allow the analytic expressions for the geometric entanglement
measure by making use of Eq.(\ref{gen.eq}).

\section{W-type states.}

Consider W-type state

\begin{equation}\label{w.psi}
|\psi\ra=a|100\ra+b|010\ra+c|001\ra,\quad a^2+b^2+c^2=1.
\end{equation}

\noindent Without loss of generality we  consider only the case of
positive parameters $a,b,c$. Direct calculation yields

\begin{equation}\label{w.matr}
 \p_1=r_1\,\nv,\quad\p_2=r_2\,\nv,\quad g=
\begin{pmatrix}
\omega & 0 & 0\\
0 & \omega & 0\\
0 & 0 & -r_3
\end{pmatrix}
,
\end{equation}
where
\begin{equation}\label{w.eig}
r_1=b^2+c^2-a^2,\, r_2=a^2+c^2-b^2,\,r_3=a^2+b^2-c^2
\end{equation}

\noindent and $\omega=2ab$. The unit vector $\nv$ is  aligned with
the axis $z$. Any vector perpendicular to $\nv$ is an eigenvector
of $g$ with eigenvalue $\omega$. Then from Eq.(\ref{gen.eq}) it
follows that the components of vectors $\sv_1$ and $\sv_2$
perpendicular to $\nv$ are collinear. We denote by $\mv$ the unit
vector along that direction and parameterize vectors $\sv_1$ and
$\sv_2$ as follows

\begin{equation}\label{w.par}
\sv_1=\cos\alpha\,\nv+\sin\alpha\,\mv,\quad
\sv_2=\cos\beta\,\nv+\sin\beta\,\mv.
\end{equation}

Then Eq.(\ref{gen.eq}) reduces to the following four equations

\begin{subequations}\label{w.angle}
\begin{equation}\label{w.cos}
r_1-r_3\cos\beta=\lambda_1\cos\alpha,\quad
r_2-r_3\cos\alpha=\lambda_2\cos\beta,
\end{equation}
\begin{equation}\label{w.sin}
\omega\sin\beta=\lambda_1\sin\alpha,\quad
\omega\sin\alpha=\lambda_2\sin\beta,
\end{equation}
\end{subequations}

\noindent which are used to solve the four unknown constants
$\lambda_1,\lambda_2,\alpha$ and $\beta$. Eq.(\ref{w.sin}) impose
either

\begin{equation}\label{w.l1l2}
\lambda_1\lambda_2-\omega^2=0
\end{equation}

\noindent or

\begin{equation}\label{w.sinsin}
\sin\alpha\sin\beta=0.
\end{equation}

First consider the case $r_1>0,r_2>0,r_3>0$ and coefficients
$a,b,c$ form an acute triangle. Eq.(\ref{w.sinsin}) does not give
a true maximum and this can be understood as follows. If both
vectors $\sv_1$ and $\sv_2$ are aligned with the axis $z$, then
the last term in Eq.(\ref{gen.s1s2}) is negative. If vectors
$\sv_1$ and $\sv_2$ are antiparallel, then one of scalar products
in Eq.(\ref{gen.s1s2}) is negative. In this reason $\lm$ cannot be
maximal. Then Eq.(\ref{w.l1l2}) gives true maximum and we have to
choose positive values for $\lambda_1$ and $\lambda_2$ to get
maximum.

First we use Eq.(\ref{w.cos}) to connect the angles  $\alpha$ and
$\beta$ with the Lagrange multipliers $\lambda_1$ and $\lambda_2$

\begin{equation}\label{w.cosval}
\cos\alpha=\frac{\lambda_2r_1-r_2r_3}{\omega^2-r_3^2},\quad
\cos\beta=\frac{\lambda_1r_2-r_1r_3}{\omega^2-r_3^2}.
\end{equation}

Then Eq.(\ref{w.sin}) and (\ref{w.l1l2}) give the following
expressions for Lagrange multipliers $\lambda_1$ and $\lambda_2$

\begin{subequations}\label{w.lagroot}
\begin{equation}\label{w.lagroot1}
\lambda_1=\omega\left(\frac{\omega^2+r_1^2-r_3^2}
{\omega^2+r_2^2-r_3^2}\right)^{1/2},
\end{equation}
\begin{equation}\label{w.lagroot2}
\lambda_2=\omega\left(\frac{\omega^2+r_2^2-r_3^2}
{\omega^2+r_1^2-r_3^2}\right)^{1/2}.
\end{equation}
\end{subequations}

Eq.(\ref{gen.eq}) allows to write a shorter expression for the
entanglement eigenvalue

\begin{equation}\label{w.lsimple}
\lm=\frac{1}{4}\left(1+\lambda_2+r_1\cos\alpha\right).
\end{equation}

Now we insert the values of $\lambda_2$ and $\cos\alpha$ into
Eq.(\ref{w.lsimple}) and obtain

\begin{equation}\label{w.frac}
4\lm=1+\frac{\omega\sqrt{(\omega^2+r_1^2-r_3^2)
(\omega^2+r_2^2-r_3^2)}-r_1r_2r_3}{\omega^2-r_3^2}.
\end{equation}

The denominator in above expression is  multiple of the area $S$
of the triangle $a,b,c$

\begin{equation}\label{w.area}
\omega^2-r_3^2=16S^2.
\end{equation}

A little algebra yields for the numerator

\begin{eqnarray}\label{w.numer}
 & & \omega\sqrt{(\omega^2+r_1^2-r_3^2)+
(\omega^2+r_2^2-r_3^2)}-r_1r_2r_3    \\ \nonumber & &\hspace{.3cm}
= 16\,a^2b^2c^2-\omega^2+r_3^2.
\end{eqnarray}

Combining together the numerator and denominator, we obtain the
final expression for the entanglement eigenvalue

\begin{equation}\label{w.rad}
\lm=4R^2,
\end{equation}

\noindent where $R$ is the circumradius of the triangle $a,b,c$.
Entanglement value is minimal  when triangle is regular, i.e. for
W-state and $\lm(W)=4/9$ \cite{Shim-grov}.

Now consider the case $r_3<0$. Since $r_3+r_1=2b^2\geq0$, we have
$r_1>0$ and similarly $r_2>0$. Eq.(\ref{w.sinsin}) gives true
maximum in this case and both vectors are aligned with the axis
$z$

\begin{equation}\label{w.parz}
\sv_1=\sv_2=\nv
\end{equation}

\noindent resulting in $\lm=c^2$. In view of symmetry

\begin{equation}\label{w.max}
\lm=\max(a^2,b^2,c^2),\quad \max(a^2,b^2,c^2)>\frac{1}{2}.
\end{equation}

Since the matrix $g$ and vectors $\p_1$ and $\p_2$ are invariant
under rotations around axis $z$ the same properties must have
Bloch vectors $\sv_1$ and $\sv_2$. There are two possibilities:

\medskip

i)Bloch vectors are unique and aligned with the axis $z$. The
solution given by Eq.(\ref{w.parz}) corresponds to this situation
and the resulting entanglement eigenvalue Eq.(\ref{w.max})
satisfies the inequality

\begin{equation}\label{w.slight}
\frac{1}{2}<\lm\leq1.
\end{equation}

ii)Bloch vectors have nonzero components in $xy$ plane and the
solution is not unique.  Eq.(\ref{w.par}) corresponds to this
situation and contains a free parameter. The free parameter is the
angle defining the direction of the vector $\mv$ in the $xy$
plane. Then Eq.(\ref{w.rad}) gives the entanglement eigenvalue in
highly entangled region

\begin{equation}\label{w.high}
\frac{4}{9}\leq\lm<\frac{1}{2}.
\end{equation}

Eq.(\ref{w.rad}) and (\ref{w.max}) have joint curves when
parameters $a,b,c$ form a right triangle and give $\lm=1/2$. The
GHZ states have same entanglement value and it seems to imply
something interesting. GHZ state can be used for teleportation and
superdense coding, but W-state cannot be. However, the W-type
state with right triangle coefficients can be used for
teleportation and superdense coding \cite{agr}. In other words,
both type of states can be applied provided they have the required
entanglement eigenvalue $\lm=1/2$.

\section{Symmetric States.}

Now let us consider the state which is symmetric under permutation
of qubits A and B and contains three real independent parameters

\begin{equation}\label{ghz.psi}
|\psi\ra=a|000\ra+b|111\ra+c|001\ra+d|110\ra,
\end{equation}

\noindent where $a^2+b^2+c^2+d^2=1$. According to Generalized
Schmidt Decomposition \cite{acin} the states with different sets
of parameters are local-unitary(LU) inequivalent. The relevant
quantities are

\begin{equation}\label{ghz.matr}
 \p_1=\p_2=r\,\nv,\quad g=
\begin{pmatrix}
\omega & 0 & 0\\
0 & -\omega & 0\\
0 & 0 & 1
\end{pmatrix}
,
\end{equation}

\noindent where

\begin{equation}\label{ghz.eig}
r=a^2+c^2-b^2-d^2,\quad \omega=2ad+2bc
\end{equation}

\noindent and the unit vector $\nv$ again is  aligned with the
axis $z$.

All three terms in the l.h.s. of Eq.(\ref{gen.s1s2}) are bounded
above:
\begin{itemize}

\item $\sv_1\cdot\p_1\leq|r|$,

\item $\sv_2\cdot\p_2\leq|r|$,

\item and owing to inequality $|\omega|\leq1,
\,g_{ij}\,s_{1i}s_{2j}\leq1$.

\end{itemize}

Quite surprisingly all upper limits are reached simultaneously at

\begin{equation}\label{ghz.sign}
\sv_1=\sv_2={\rm Sign}(r)\nv,
\end{equation}

\noindent which results in

\begin{equation}\label{ghz.mod}
\lm=\frac{1}{2}\left(1+|r|\right).
\end{equation}

This expression has a clear meaning. To understand it we
parameterize the state as

\begin{equation}\label{ghz.repsi}
|\psi\ra=k_1|00q_1\ra+k_2|11q_2\ra,
\end{equation}

\noindent where $q_1$ and $q_2$ are arbitrary single normalized
qubit states and positive parameters $k_1$ and $k_2$ satisfy
$k_1^2+k_2^2=1$. Then

\begin{equation}\label{ghz.l}
\lm=\max(k_1^2,k_2^2),
\end{equation}

\noindent i.e. the maximization takes a larger coefficient in
Eq.(\ref{ghz.repsi}). In bipartite case the maximization takes the
largest coefficient in Schmidt decomposition
\cite{biham-02,vidjon} and in this sense Eq.(\ref{ghz.repsi})
effectively takes the place of Schmidt decomposition. When
$|q_1\ra=|0\ra$  and $|q_2\ra=|1\ra$, Eq.(\ref{ghz.l}) gives the
known answer for generalized GHZ state \cite{wei-03,Shim-grov}.

The entanglement eigenvalue is minimal $\lm=1/2$ on condition that
$k_1=k_2$. These states can be described as follows

\begin{equation}\label{ghz.half}
|\psi\ra=|00q_1\ra+|11q_2\ra
\end{equation}

\noindent where $q_1$ and $q_2$ are arbitrary single qubit
normalized states. The entanglement eigenvalue is constant
$\lm=1/2$ and does not depend on single qubit state parameters.
Hence one may expect that all these states can be applied for
teleportation and superdense coding. It would be interesting to
check whether this assumption is correct or not.

It turns out that GHZ state is not a unique state and is one of
two-parametric LU inequivalent states that have $\lm=1/2$. On the
other hand W-state is unique up to LU transformations and the low
bound $\lm=4/9$ is reached if and only if $a=b=c$. However, one
cannot make such conclusions in general. Five real parameters are
necessary to parameterize the set of inequivalent three qubit pure
states \cite{acin}. And there is no explicit argument that W-state
is not just one of LU inequivalent states that have $\lm=4/9$.

\section{Summary.}

We have derived algebraic equations defining geometric measure of
three qubit pure states. These equations have a degree higher than
four and explicit solutions for general cases cannot be derived
analytically. However, the explicit expressions are not important.
Remember that explicit expressions for the algebraic equations of
degree three and four have a limited practical significance but
the equations itself are more important. This is especially true
for equations of higher degree; main results can be derived from
the equations rather than from the expressions of their roots.

Eq.(\ref{gen.eq}) give the nearest separable state directly and
this separable states have useful applications. In order to
construct an entanglement witness, for example, the crucial point
lies in finding the nearest separable state \cite{bert}. This will
be especially interesting for highly entangled states that have a
whole set of nearest separable states and allow to construct a set
of entanglement witnesses.

The expression in r.h.s. of Eq.(\ref{gen.s1s2}) can be maximized
directly for various three qubit states. Although it is very hard
to solve the higher-degree equation, it turns out that the wide
range of the three-qubit states have a symmetry and this symmetry
reduces the equations of degree six to the quadratic equations. In
this reason Eq.(\ref{gen.s1s2}) can be used to derive the analytic
expressions of the various entanglement measures for the
three-qubit states. Also Eq.(\ref{gen.s1s2}) can be a starting
point to explore the numerical computation of the entanglement
measures for the higher-qubit systems.


\chapter[Three-Qubit Groverian Measure]{Three-Qubit Groverian Measure}\label{groverian}

In this chapter we connect the geometric entanglement measure with
polynomial invariants in the case of three-quibt pure states
\cite{jung08-1}.

About decade ago the axioms which entanglement measures should
satisfy were studied~\cite{vedr-97}. The most important property
for measure is monotonicity under local operation and classical
communication(LOCC)~\cite{vidal98-1}. Following the axioms, many
entanglement measures were constructed such as relative
entropy\cite{plen01-1}, entanglement of
distillation\cite{ben-error} and
formation\cite{ben-conc,woot-98,form2,form4}, geometric
measure\cite{shim-95,barn-01,wei-03,pit}, Schmidt
measure\cite{eisert01} and Groverian measure\cite{biham-02}.
Entanglement measures are used in various branches of quantum
mechanics. Especially, recently, they are used to try to
understand Zamolodchikov's c-theorem\cite{zamo86} more profoundly.
It may be an important application of the quantum information
techniques to understand the effect of renormalization group in
field theories\cite{orus-08}.

The purpose of this paper is to compute the Groverian measure for
various three-qubit quantum states. The Groverian measure
$G(\psi)$ for three-qubit state $|\psi\rangle$ is defined by
$G(\psi) \equiv \sqrt{1 - P_{max}}$ where
\begin{equation}
\label{pmax1} P_{max} = \lm
\end{equation}
Thus $P_{max}$ can be interpreted as a maximal overlap between the
given state $|\psi\rangle$ and product states. Groverian measure
is an operational treatment of a geometric measure. Thus, if one
can compute $G(\psi)$, one can also compute the geometric measure
of pure state by $G^2(\psi)$. Sometimes it is more convenient to
re-express Eq.(\ref{pmax1}) in terms of the density matrix $\rho =
|\psi \rangle \langle \psi |$. This can be easily accomplished by
an expression
\begin{equation}
\label{pmax2} P_{max} = \max_{R^1,R^2,R^3} \mbox{Tr} \left[\rho
R^1 \otimes R^2 \otimes R^3 \right]
\end{equation}
where $R^i \equiv |q_i \rangle \langle q_i |$ density matrix for
the product state. Eq.(\ref{pmax1}) and Eq.(\ref{pmax2})
manifestly show that $P_{max}$ and $G(\psi)$ are local-unitary(LU)
invariant quantities. Since it is well-known that three-qubit
system has five independent LU-invariants\cite{acin,sud00,coff,pi},
{\it say} $J_i (i = 1, \cdots, 5)$, we would like to focus on the
relation of the Groverian measures to LU-invariants $J_i$'s in
this paper.

This chapter is organized as follows.

In Section 2.1 we review simple case, {\it i.e.} two-qubit system.
Using Bloch form of the density matrix it is shown in this section that
two-qubit system has only one independent LU-invariant quantity, {\it say}
$J$. It is also shown that Groverian measure and $P_{max}$ for
arbitrary two-qubit states can be expressed solely in terms of
$J$.

In Section 2.2 we have discussed how to derive
LU-invariants in higher-qubit systems. In fact, we have derived
many LU-invariant quantities using Bloch form of the density
matrix in three-qubit system. It is shown that all LU-invariants
derived can be expressed in terms of $J_i$'s discussed in
Ref.\cite{acin}. Recently, it was shown in Ref.\cite{reduced}
that $P_{max}$ for $n$-qubit state can be computed from
$(n-1)$-qubit reduced mixed state. This theorem was used in
Ref.\cite{analyt} and Ref.\cite{shared} to compute analytically
the geometric measures for various three-qubit states. In this
section we have discussed the physical reason why this theorem is
possible from the aspect of LU-invariance.

In {Section 2.3 we have computed the Groverian measures for
various types of the three-qubit system. The five types we discussed in
this section were originally developed in Ref.\cite{acin} for the
classification of the three-qubit states. It has been shown that
the Groverian measures for type 1, type 2, and type 3 can be
analytically computed. We have expressed all analytical results in
terms of LU-invariants $J_i$'s. For type 4 and type 5 the
analytical computation seems to be highly nontrivial and may need
separate publications. Thus the analytical calculation for these
types is not presented in this paper. The results of this section
are summarized in Table I.

In Section 2.4 we have
discussed the modified W-like state, which has three-independent
real parameters. In fact, this state cannot be categorized in the
five types discussed in Section 2.3. The analytic expressions of
the Groverian measure for this state was computed recently in
Ref.\cite{shared}. It was shown that the measure has three
different expressions depending on the domains of the parameter
space. It turned out that each expression has its own geometrical
meaning. In this section we have re-expressed all expressions of
the Groverian measure in terms of LU-invariants.

In Section 2.5 brief conclusion is given.

\section{Two Qubit: Simple Case}

In this section we consider $P_{max}$ for the two-qubit system.
The Groverian measure for two-qubit system is already
well-known\cite{Shim-grov}. However, we revisit this issue here to
explore how the measure is expressed in terms of the LU-invariant
quantities. The Schmidt decomposition\cite{schmidt1907,ek-sch} makes the
most general expression of the two-qubit state vector to be simple
form
\begin{equation}
\label{s1} |\psi \rangle = \lambda_0 |00\rangle + \lambda_1 |11
\rangle
\end{equation}
with $\lambda_0, \lambda_1 \geq 0$ and $\lambda_0^2 + \lambda_1^2
= 1$. The density matrix for $|\psi \rangle$ can be expressed in
the Bloch form as following:
\begin{equation}
\label{s2} \rho = |\psi \rangle \langle \psi | = \frac{1}{4}
\left[\openone \otimes \openone + v_{1 \alpha} \sigma_{\alpha}
\otimes \openone + v_{2 \alpha} \openone \otimes \sigma_{\alpha} +
g_{\alpha \beta} \sigma_{\alpha} \otimes \sigma_{\beta} \right],
\end{equation}
where
\begin{eqnarray}
\label{s3} \vec{v}_1 = \vec{v}_2 = \left(   \begin{array}{c}
                                      0                \\
                                      0                \\
                                  \lambda_0^2 - \lambda_1^2
                                  \end{array}
                                                         \right),
\hspace{1.0cm} g_{\alpha \beta} = \left(        \begin{array}{ccc}
                          2 \lambda_0 \lambda_1 &    0    &    0    \\
                               0    &   -2 \lambda_0 \lambda_1   &   0    \\
                               0    &    0    &    1
                                  \end{array}
                                                         \right).
\end{eqnarray}

In order to discuss the LU transformation we consider first the
quantity $U \sigma_{\alpha} U^{\dagger}$ where $U$ is $2 \times 2$
unitary matrix. With direct calculation one can prove easily
\begin{equation}
\label{lu1} U \sigma_{\alpha} U^{\dagger} = {\cal O}_{\alpha
\beta} \sigma_{\beta},
\end{equation}
where the explicit expression of ${\cal O}_{\alpha \beta}$ is
given in appendix A. Since ${\cal O}_{\alpha \beta}$ is a real
matrix satisfying ${\cal O} {\cal O}^T = {\cal O}^T {\cal O} =
\openone$, it is an element of the rotation group O(3). Therefore,
Eq.(\ref{lu1}) implies that the LU-invariants in the density
matrix (\ref{s2}) are $|\vec{v}_1|$, $|\vec{v}_2|$, $\mbox{Tr}[g
g^T]$ etc.

All LU-invariant quantities can be written in terms of one
quantity, {\it say} $J \equiv \lambda_0^2 \lambda_1^2$. In fact,
$J$ can be expressed in terms of two-qubit
concurrence\cite{woot-98} ${\cal C}$ by ${\cal C}^2 / 4$. Then it
is easy to show
\begin{eqnarray}
\label{s4} & &|\vec{v}_1|^2 = |\vec{v}_2|^2 = 1 - 4 J,    \\
\nonumber & & g_{\alpha \beta} g_{\alpha \beta} = 1 + 8 J.
\end{eqnarray}

It is well-known that $P_{max}$ is simply square of larger Schmidt
number in two-qubit case

\begin{equation}
\label{s7} P_{max} = \mbox{max} \left( \lambda_0^2, \lambda_1^2
\right).
\end{equation}

It can be re-expressed in terms of reduced density operators

\begin{equation}
\label{s5} P_{max} = \frac{1}{2} \left[1 + \sqrt{1 - 4 \mbox{det}
\rho^A} \right],
\end{equation}

\noindent where $\rho^A = \mbox{Tr}_B \rho = (1 + v_{1 \alpha}
\sigma_{\alpha})/2$. Since $P_{max}$ is invariant under
LU-transfor\-mation, it should be expressed in terms of
LU-invariant quantities. In fact, $P_{max}$ in Eq.(\ref{s5}) can
be re-written as
\begin{equation}
\label{s6} P_{max} = \frac{1}{2} \left[1 + \sqrt{1 - 4 J}\right].
\end{equation}
Eq.(\ref{s6}) implies that $P_{max}$ is manifestly LU-invariant.

\section{Local Unitary Invariants}
The Bloch representation of the $3$-qubit density matrix can be
written in the form
\begin{eqnarray}
\label{density1} \rho &=& \frac{1}{8} \Bigg[\openone \otimes
\openone \otimes \openone + v_{1 \alpha} \sigma_{\alpha}\otimes
\openone \otimes \openone + v_{2 \alpha} \openone \otimes
\sigma_{\alpha} \otimes \openone + v_{3 \alpha} \openone \otimes
\openone \otimes \sigma_{\alpha} \nonumber\\   & &+
h^{(1)}_{\alpha \beta} \openone \otimes \sigma_{\alpha}\otimes
\sigma_{\beta} + h^{(2)}_{\alpha \beta} \sigma_{\alpha}\otimes
\openone \otimes \sigma_{\beta} + h^{(3)}_{\alpha \beta}
\sigma_{\alpha}\otimes \sigma_{\beta} \otimes \openone
\nonumber\\
& &+ g_{\alpha \beta \gamma} \sigma_{\alpha}\otimes \sigma_{\beta}
\otimes \sigma_{\gamma} \Bigg], \vspace{1cm}
\end{eqnarray}
where $\sigma_{\alpha}$ is Pauli matrix. According to
Eq.(\ref{lu1}) and  appendix A it is easy to show that the
LU-invariants in the density matrix (\ref{density1}) are
$|\vec{v}_1|$, $|\vec{v}_2|$, $|\vec{v}_3|$, $\mbox{Tr}[h^{(1)}
h^{(1) T}]$, $\mbox{Tr}[h^{(2)} h^{(2) T}]$, $\mbox{Tr}[h^{(3)}
h^{(3) T}]$, $g_{\alpha \beta \gamma} g_{\alpha \beta \gamma}$
etc.

Few years ago Ac\'in et al\cite{acin} represented the three-qubit
arbitrary states in a simple form using a generalized Schmidt
decomposition\cite{schmidt1907} as following:
\begin{equation}
\label{state1} |\psi\rangle = \lambda_0 |000\rangle + \lambda_1
e^{i \varphi} |100\rangle + \lambda_2 |101\rangle + \lambda_3
|110\rangle + \lambda_4 |111\rangle
\end{equation}
with $\lambda_i \geq 0$, $0 \leq \varphi \leq \pi$, and $\sum_i
\lambda_i^2 = 1$. The five algebraically independent polynomial
LU-invariants were also constructed in Ref.\cite{acin}:
\begin{eqnarray}
\label{lu2} & &J_1 = \lambda_1^2 \lambda_4^2 + \lambda_2^2
\lambda_3^2 - 2 \lambda_1 \lambda_2 \lambda_3 \lambda_4 \cos
\varphi,
                                                      \\  \nonumber
& &J_2 = \lambda_0^2 \lambda_2^2, \hspace{1.0cm}
   J_3 = \lambda_0^2 \lambda_3^2, \hspace{1.0cm}
   J_4 = \lambda_0^2 \lambda_4^2,
                                                      \\  \nonumber
& &J_5 = \lambda_0^2 (J_1 + \lambda_2^2 \lambda_3^2 - \lambda_1^2
\lambda_4^2).
\end{eqnarray}

In order to determine how many states have the same values of the
invariants $J_1, J_2, ...J_5$, and therefore how many further
discrete-valued invariants are needed to specify uniquely a pure
state of three qubits up to local transformations, one would need
to find the number of different sets of parameters $\varphi$ and
$\lambda_i(i=0,1,...4)$, yielding the same invariants. Once
$\lambda_0$ is found, other parameters are determined uniquely and
therefore we derive an equation defining $\lambda_0$ in terms of
polynomial invariants.

\begin{equation}\label{added}
(J_1+J_4)\lambda_0^4-(J_5+J_4)\lambda_0^2+J_2J_3+J_2J_4+J_3J_4+J_4^2=0.
\end{equation}

This equation has at most two positive roots and consequently an
additional discrete-valued invariant is required to specify
uniquely a pure three qubit state. Generally 18 LU-invariants,
nine of which may be taken to have only discrete values, are
needed to determine a mixed 2-qubit state \cite{mixed}.

If one represents the density matrix $|\psi \rangle \langle \psi
|$ as a Bloch form like Eq.(\ref{density1}), it is possible to
construct $v_{1 \alpha}$, $v_{2 \alpha}$, $v_{3 \alpha}$,
$h^{(1)}_{\alpha \beta}$, $h^{(2)}_{\alpha \beta}$,
$h^{(3)}_{\alpha \beta}$, and $g_{\alpha \beta \gamma}$
explicitly, which are summarized in appendix B. Using these
explicit expressions one can show directly that all polynomial
LU-invariant quantities of pure states are expressed in terms of
$J_i$ as following:
\begin{eqnarray}
\label{lu3} & &|\vec{v}_1|^2 = 1 - 4 (J_2 + J_3 + J_4),
\hspace{1.0cm} |\vec{v}_2|^2 = 1 - 4 (J_1 + J_3 + J_4)
                                              \\   \nonumber
& &|\vec{v}_3|^2 = 1 - 4 (J_1 + J_2 + J_4), \hspace{1.0cm}
\mbox{Tr}[h^{(1)} h^{(1) T}] = 1 + 4 (2 J_1 - J_2 - J_3)
                                               \\   \nonumber
& &\mbox{Tr}[h^{(2)} h^{(2) T}] = 1 - 4 ( J_1 - 2 J_2 + J_3),
\hspace{.5cm} \mbox{Tr}[h^{(3)} h^{(3) T}] = 1 - 4 (J_1 + J_2 - 2
J_3)
                                                \\  \nonumber
& &g_{\alpha \beta \gamma} g_{\alpha \beta \gamma} = 1 + 4(2 J_1 +
2 J_2 + 2 J_3 + 3 J_4)
                                               \\   \nonumber
& &h^{(3)}_{\alpha \beta} v^{(1)}_{\alpha} v^{(2)}_{\beta} = 1 - 4
(J_1 + J_2 + J_3 + J_4 - J_5).
\end{eqnarray}

Recently, Ref.\cite{reduced} has shown that $P_{max}$ for
$n$-qubit pure state can be computed from $(n-1)$-qubit reduced
mixed state. This is followed from a fact
\begin{eqnarray}\label{theorem1}
 & &\max_{R^1,R^2,\cdots ,R^n} \mbox{Tr}\left[\rho R^1
\otimes R^2 \otimes \cdots \otimes R^n \right] = \\\nonumber
 & &\max_{R^1, R^2,\cdots ,R^{n-1}} \mbox{Tr}\left[\rho R^1 \otimes R^2 \otimes \cdots
\otimes R^{n-1} \otimes \openone                                                                   \right]
\end{eqnarray}
which is Theorem I of Ref.\cite{reduced}. Here, we would like to
discuss the physical meaning of Eq.(\ref{theorem1}) from the
aspect of LU-invariance. Eq.(\ref{theorem1}) in $3$-qubit system
reduces to
\begin{equation}
\label{pmax21} P_{max} = \max_{R^1,R^2} \mbox{Tr} \left[\rho^{AB}
R^1 \otimes R^2 \right]
\end{equation}
where $\rho^{AB} = \mbox{Tr}_C \rho$. From Eq.(\ref{density1})
$\rho^{AB}$ simply reduces to
\begin{equation}
\label{density21} \rho = \frac{1}{4} \left[\openone \otimes
\openone + v_{1 \alpha} \sigma_{\alpha} \otimes \openone + v_{2
\alpha} \openone \otimes \sigma_{\alpha} + h^{(3)}_{\alpha \beta}
\sigma_{\alpha} \otimes \sigma_{\beta} \right]
\end{equation}
where $ v_{1 \alpha}$, $v_{2 \alpha}$ and $ h^{(3)}_{\alpha
\beta}$ are explicitly given in appendix B. Of course, the
LU-invariant quantities of $\rho^{AB}$ are $|\vec{v}_1|$,
$|\vec{v}_2|$, $\mbox{Tr}[h^{(3)} h^{(3) T}]$, $h^{(3)}_{\alpha
\beta} v_{1 \alpha} v_{2 \beta}$ etc, all of which, of course, can
be re-expressed in terms of $J_1$, $J_2$, $J_3$, $J_4$ and $J_5$.
It is worthwhile noting that we need all $J_i$'s to express the
LU-invariant quantities of $\rho^{AB}$. This means that the
reduced state $\rho^{AB}$ does have full information on the
LU-invariance of the original pure state $\rho$.

Indeed, any reduced state resulting from a partial trace over a
single qubit uniquely determines any entanglement measure of
original system, given that the initial state is pure. Consider an
$(n-1)$-qubit reduced density matrix that can be purified by a
single qubit reference system. Let $|\psi^\prime\rangle$ be any
joint pure state. All other purifications can be obtained from the
state $|\psi^\prime\rangle$ by LU-transformations
$U\otimes\openone^{\otimes(n-1)}$, where $U$ is a local unitary
matrix acting on single qubit. Since any entanglement measure must
be invariant under LU-transformations, it must be same for all
purifications independently of $U$. Hence the reduced density
matrix determines any entanglement measure on the initial pure
state. That is why we can compute $P_{max}$ of $n$-qubit pure
state from the $(n-1)$-qubit reduced mixed state.

Generally, the information on the LU-invariance of the original
$n$-qubit state is partly lost if we take partial trace twice. In
order to show this explicitly let us consider $\rho^A \equiv
\mbox{Tr}_B \rho^{AB}$ and $\rho^B \equiv \mbox{Tr}_A \rho^{AB}$:
\begin{eqnarray}
\label{density22} \rho^A&=&\frac{1}{2} \left[ \openone + v_{1
\alpha} \sigma_{\alpha} \right]
                                                              \\   \nonumber
\rho^B&=&\frac{1}{2} \left[ \openone + v_{2 \alpha}
\sigma_{\alpha} \right].
\end{eqnarray}
Eq.(\ref{lu1}) and appendix A imply that their LU-invariant
quantities are only $|\vec{v}_1|$ and $|\vec{v}_2|$ respectively.
Thus, we do not need $J_5$ to express the LU-invariant quantities
of $\rho^A$ and $\rho^B$. This fact indicates that the mixed
states $\rho^A$ and $\rho^B$ partly loose the information of the
LU-invariance of the original pure state $\rho$. This is why
$(n-2)$-qubit reduced state cannot be used to compute $P_{max}$ of
$n$-qubit pure state.

\section{Calculation of $P_{max}$}

\subsection{General Feature}
If we insert the Bloch representation
\begin{equation}
\label{bloch41} R^1 = \frac{\openone + \vec{s}_1 \cdot
\vec{\sigma}}{2} \hspace{1.0cm} R^2 = \frac{\openone + \vec{s}_2
\cdot \vec{\sigma}}{2}
\end{equation}
with $|\vec{s}_1| = |\vec{s}_2| = 1$ into Eq.(\ref{pmax21}),
$P_{max}$ for $3$-qubit state becomes
\begin{equation}
\label{pmax41} P_{max} = \frac{1}{4} \max_{|\vec{s}_1| =
|\vec{s}_2| = 1} \left[1 + \vec{r}_1 \cdot \vec{s}_1 + \vec{r}_2
\cdot \vec{s}_2 +
     g_{i j} s_{1 i} s_{2 j} \right]
\end{equation}
where
\begin{eqnarray}
\label{def41} & &\vec{r}_1 = \mbox{Tr} \left[\rho^A \vec{\sigma}
\right]
                                                         \\   \nonumber
& &\vec{r}_2 = \mbox{Tr} \left[\rho^B \vec{\sigma} \right]
                                                         \\   \nonumber
& &g_{ij} = \mbox{Tr} \left[\rho^{AB} \sigma_i \otimes \sigma_j
\right].
\end{eqnarray}
Since in Eq.(\ref{pmax41}) $P_{max}$ is maximization with
constraint
 $|\vec{s}_1| = |\vec{s}_2| = 1$, we should use the Lagrange multiplier method,
which yields a pair of equations
\begin{eqnarray}
\label{algebraic1} & &\vec{r}_1 + g \vec{s}_2 = \Lambda_1
\vec{s}_1          \\   \nonumber & &\vec{r}_2 + g^T \vec{s}_1 =
\Lambda_2 \vec{s}_2,
\end{eqnarray}
where the symbol $g$ represents the matrix $g_{ij}$ in
Eq.(\ref{def41}). Thus we should solve $\vec{s}_1$, $\vec{s}_2$,
$\Lambda_1$ and $\Lambda_2$ by eq.(\ref{algebraic1}) and the
constraint $|\vec{s}_1| = |\vec{s}_2| = 1$. Although it is highly
nontrivial to solve Eq.(\ref{algebraic1}), sometimes it is not
difficult if the given $3$-qubit state $|\psi\rangle$ has rich
symmetries. Now, we would like to compute $P_{max}$ for various
types of $3$-qubit system.

\subsection{Type 1 (Product States): $J_1 = J_2 = J_3 = J_4 = J_5 = 0$}
In order for all $J_i$'s to be zero we have two cases $\lambda_0 =
J_1 = 0$ or $\lambda_2 = \lambda_3 = \lambda_4 = 0$.
\subsubsection{$\lambda_0 = J_1 = 0$}
If $\lambda_0 = 0$, $|\psi \rangle$ in Eq.(\ref{state1}) becomes
$|\psi \rangle = |1\rangle \otimes |BC \rangle$ where
\begin{equation}
\label{bc41} |BC\rangle = \lambda_1 e^{i \varphi} |00\rangle +
\lambda_2 |01\rangle + \lambda_3 |10\rangle + \lambda_4
|11\rangle.
\end{equation}
Thus $P_{max}$ for $|\psi\rangle$ equals to that for $|BC\rangle$.
Since $|BC\rangle$ is two-qubit state, one can easily compute
$P_{max}$ using Eq.(\ref{s5}), which is
\begin{equation}
\label{pmax42} P_{max} = \frac{1}{2} \left[1 + \sqrt{1 - 4
\mbox{det}
           \left(\mbox{Tr}_B |BC\rangle \langle BC| \right)} \right]
        = \frac{1}{2} \left[1 + \sqrt{1 - 4 J_1}\right].
\end{equation}
If, therefore, $\lambda_0 = J_1 = 0$, we have $P_{max} = 1$, which
gives a vanishing Groverian measure.
\subsubsection{$\lambda_2 = \lambda_3 = \lambda_4 = 0$}
In this case $|\psi \rangle$ in Eq.(\ref{state1}) becomes
\begin{equation}
\label{psi41} |\psi \rangle = \left(\lambda_0 |0\rangle +
\lambda_1 e^{i \varphi} |1\rangle\right) \otimes |0\rangle \otimes
|0\rangle.
\end{equation}
Since $|\psi \rangle$ is completely product state, $P_{max}$
becomes one.
\subsection{Type2a (biseparable states)}
In this type we have following three cases.
\subsubsection{$J_1 \neq 0$ and $J_2=J_3=J_4=J_5=0$}
In this case we have $\lambda_0 = 0$. Thus $P_{max}$ for this case
is exactly same with Eq.(\ref{pmax42}).
\subsubsection{$J_2 \neq 0$ and $J_1=J_3=J_4=J_5=0$}
In this case we have $\lambda_2 = \lambda_4 = 0$. Thus $P_{max}$
for $|\psi\rangle$ equals to that for $|AC\rangle$, where
\begin{equation}
\label{ac41} |AC\rangle = \lambda_0 |00\rangle + \lambda_1 e^{i
\varphi} |10\rangle + \lambda_2 |11\rangle.
\end{equation}
Using Eq.(\ref{s5}), therefore, one can easily compute $P_{max}$,
which is
\begin{equation}
\label{pmax43} P_{max} = \frac{1}{2} \left[1 + \sqrt{1 - 4
J_2}\right].
\end{equation}
\subsubsection{$J_3 \neq 0$ and $J_1=J_2=J_4=J_5=0$}
In this case $P_{max}$ for $|\psi\rangle$ equals to that for
$|AB\rangle$, where
\begin{equation}
\label{ab41} |AB\rangle = \lambda_0 |00\rangle + \lambda_1 e^{i
\varphi} |10\rangle + \lambda_3 |11\rangle.
\end{equation}
Thus $P_{max}$ for $|\psi\rangle$ is
\begin{equation}
\label{pmax44} P_{max} = \frac{1}{2} \left[1 + \sqrt{1 - 4
J_3}\right].
\end{equation}

\subsection{Type2b (generalized GHZ states): $J_4 \neq 0$, $J_1=J_2=J_3=J_5=0$}
In this case we have $\lambda_1 = \lambda_2 = \lambda_3 = 0$ and
$|\psi\rangle$ becomes
\begin{equation}
\label{ghz41} |\psi \rangle = \lambda_0 |000\rangle + \lambda_4
|111\rangle
\end{equation}
with $\lambda_0^2 + \lambda_4^2 = 1$. Then it is easy to show
\begin{eqnarray}
\label{def42} & &\vec{r}_1 = \mbox{Tr} \left[\rho^A \vec{\sigma}
\right] = (0, 0, \lambda_0^2 - \lambda_4^2)
                                                        \\   \nonumber
& &\vec{r}_2 = \mbox{Tr} \left[\rho^B \vec{\sigma} \right] = (0,
0, \lambda_0^2 - \lambda_4^2)
                                                        \\    \nonumber
& &g_{ij} = \mbox{Tr} \left[\rho^{AB} \sigma_i \otimes \sigma_j
\right] = \left(            \begin{array}{ccc}
                    0  &  0  &  0    \\
                    0  &  0  &  0    \\
                    0  &  0  &  1
                    \end{array}            \right).
\end{eqnarray}
Thus $P_{max}$ reduces to
\begin{equation}
\label{pmax45} P_{max} = \frac{1}{4} \max_{|\vec{s}_1| =
|\vec{s}_2| = 1} \left[1 + (\lambda_0^2 - \lambda_4^2) (s_{1 z} +
s_{2 z}) + s_{1 z} s_{2 z} \right].
\end{equation}
Since Eq.(\ref{pmax45}) is simple, we do not need to solve
Eq.(\ref{algebraic1}) for the maximization. If $\lambda_0 >
\lambda_4$, the maximization can be achieved by simply choosing
$\vec{s}_1 = \vec{s}_2 = (0, 0, 1)$. If $\lambda_0 < \lambda_4$,
we choose $\vec{s}_1 = \vec{s}_2 = (0, 0, -1)$. Thus we have
\begin{equation}
\label{pmax46} P_{max} = \mbox{max} (\lambda_0^2, \lambda_4^2).
\end{equation}

In order to express $P_{max}$ in Eq.(\ref{pmax46}) in terms of
LU-invariants we follow the following procedure. First we note
\begin{equation}
\label{pmax47} P_{max} = \frac{1}{2} \left[ (\lambda_0^2 +
\lambda_4^2) + |\lambda_0^2 - \lambda_4^2|
                                                                        \right].
\end{equation}
Since $|\lambda_0^2 - \lambda_4^2| = \sqrt{(\lambda_0^2 +
\lambda_4^2)^2 -
         4 \lambda_0^2 \lambda_4^2} = \sqrt{1 - 4 J_4}$, we get finally
\begin{equation}
\label{pmax48} P_{max} = \frac{1}{2} \left[1 + \sqrt{1 - 4
J_4}\right].
\end{equation}

\subsection{Type3a (tri-Bell states)}
In this case we have $\lambda_1 = \lambda_4 = 0$ and
$|\psi\rangle$ becomes
\begin{equation}
\label{w41} |\psi \rangle = \lambda_0 |000\rangle + \lambda_2
|101\rangle + \lambda_3 |110\rangle
\end{equation}
with $\lambda_0^2 + \lambda_2^2 + \lambda_3^2 = 1$. If we take
LU-transformation $\sigma_x$ in the first-qubit, $|\psi\rangle$ is
changed into $|\psi'\rangle$ which is usual W-type
state\cite{Chir} as follows:
\begin{equation}
\label{w42} |\psi'\rangle = \lambda_0 |100\rangle + \lambda_3
|010\rangle + \lambda_2 |001\rangle.
\end{equation}
The LU-invariants in this type are
\begin{eqnarray}
\label{lu41} & & J_1 = \lambda_2^2 \lambda_3^2 \hspace{1.0cm}
    J_2 = \lambda_0^2 \lambda_2^2
                                        \\   \nonumber
& & J_3 =  \lambda_0^2 \lambda_3^2 \hspace{1.0cm}
    J_5 = 2 \lambda_0^2 \lambda_2^2 \lambda_3^2.
\end{eqnarray}
Then it is easy to derive a relation
\begin{equation}
\label{rela41} J_1 J_2 + J_1 J_3 + J_2 J_3 = \sqrt{J_1 J_2 J_3} =
\frac{1}{2} J_5.
\end{equation}

Recently, $P_{max}$ for $|\psi'\rangle$ is computed analytically
in Ref.\cite{analyt} by solving the Lagrange multiplier equations
(\ref{algebraic1}) explicitly. In order to express $P_{max}$
explicitly we first define
\begin{eqnarray}
\label{def43} r_1&=& \lambda_3^2 + \lambda_2^2 - \lambda_0^2   \\
\nonumber r_2&=& \lambda_0^2 + \lambda_2^2 - \lambda_3^2   \\
\nonumber r_3&=& \lambda_0^2 + \lambda_3^2 - \lambda_2^2   \\
\nonumber \omega&=&2 \lambda_0 \lambda_3.
\end{eqnarray}
Also we define
\begin{eqnarray}
\label{def44} a&=&\mbox{max} (\lambda_0, \lambda_2, \lambda_3)
\\    \nonumber b&=&\mbox{mid} (\lambda_0, \lambda_2, \lambda_3)
\\    \nonumber c&=&\mbox{min} (\lambda_0, \lambda_2, \lambda_3).
\end{eqnarray}
Then $P_{max}$ is expressed differently in two different regions
as follows. If $a^2 \geq b^2 + c^2$, $P_{max}$ becomes
\begin{equation}
\label{pmax49} P_{max}^{>} = a^2 = \mbox{max}(\lambda_0^2,
\lambda_2^2, \lambda_3^2).
\end{equation}
In order to express $P_{max}$ in terms of LU-invariants we express
Eq.(\ref{pmax49}) differently as follows
\begin{eqnarray}\label{pmax50}
  & &P_{max}^{>} = \\ \nonumber
  & &\frac{1}{4} \left[(\lambda_0^2 +
\lambda_3^2 + \lambda_2^2)
       + |\lambda_0^2 + \lambda_3^2 - \lambda_2^2|
       + |\lambda_0^2 - \lambda_3^2 + \lambda_2^2|
       + |\lambda_0^2 - \lambda_3^2 - \lambda_2^2|  \right].
\end{eqnarray}
Using equalities
\begin{eqnarray}
\label{equal41} & & |\lambda_0^2 + \lambda_3^2 - \lambda_2^2| =
\sqrt{1 - 4 \lambda_0^2 \lambda_2^2 - 4 \lambda_2^2 \lambda_3^2} =
\sqrt{1 - 4(J_1 + J_2)}
                                                      \\   \nonumber
& & |\lambda_0^2 - \lambda_3^2 + \lambda_2^2| = \sqrt{1 - 4
\lambda_0^2 \lambda_3^2 - 4 \lambda_2^2 \lambda_3^2} = \sqrt{1 -
4(J_1 + J_3)}
                                                      \\   \nonumber
& & |\lambda_0^2 - \lambda_3^2 - \lambda_2^2| = \sqrt{1 - 4
\lambda_0^2 \lambda_2^2 - 4 \lambda_0^2 \lambda_3^2} = \sqrt{1 -
4(J_2 + J_3)},
\end{eqnarray}
we can express $P_{max}$ in Eq.(\ref{pmax49}) as follows:
\begin{eqnarray}\label{pmax51}
 & &P_{max}^{>} = \\ \nonumber
  & &\frac{1}{4} \left[1 + \sqrt{1 - 4(J_1
+ J_2)} + \sqrt{1 - 4(J_1 + J_3)} +  \sqrt{1 - 4(J_2 + J_3)}
                                                   \right].
\end{eqnarray}

If $a^2 \leq b^2 + c^2$, $P_{max}$ becomes
\begin{equation}
\label{pmax52} P_{max}^{<} = \frac{1}{4} \left[1 + \frac{\omega
\sqrt{(\omega^2 + r_1^2 - r_3^2)(\omega^2 + r_2^2 - r_3^2)}
                - r_1 r_2 r_3}
               {\omega^2 - r_3^2}          \right].
\end{equation}
It was shown in Ref.\cite{analyt} that $P_{max} = 4 R^2$, where
$R$ is a circumradius of the triangle $\lambda_0$, $\lambda_2$ and
$\lambda_3$. When $a^2 \leq b^2 + c^2$, one can show easily $r_1 =
\sqrt{1 - 4(J_2 + J_3)}$, $r_2 = \sqrt{1 - 4(J_1 + J_3)}$, $r_3 =
\sqrt{1 - 4(J_1 + J_2)}$, and $\omega = 2 \sqrt{J_3}$. Using
$\omega^2 - r_3^2 - r_1 r_2 r_3 = 8 \lambda_0^2
\lambda_2^2\lambda_3^2$, One can show easily that $P_{max}$ in
Eq.(\ref{pmax52}) in terms of LU-invariants becomes
\begin{equation}
\label{pmax53} P_{max}^{<} = \frac{4 \sqrt{J_1 J_2 J_3}}{4 (J_1 +
J_2 + J_3) - 1}.
\end{equation}

Let us consider $\lambda_0 = 0$ limit in this type. Then we have
$J_2 = J_3 = 0$. Thus $P_{max}^{>}$ reduces to $(1/2) (1 + \sqrt{1
- 4 J_1})$ which exactly coincides with Eq.(\ref{pmax42}). By same
way one can prove that Eq.(\ref{pmax51}) has correct limits to
various other types.

\subsection{Type3b (extended GHZ states)}
This type consists of $3$ types, {\it i.e.} $\lambda_1 =
\lambda_2=0$, $\lambda_1=\lambda_3=0$ and $\lambda_2=\lambda_3=0$.

\subsubsection{$\lambda_1 = \lambda_2 = 0$}
In this case the state (\ref{state1}) becomes
\begin{equation}
\label{eghz41} |\psi \rangle = \lambda_0 |000\rangle + \lambda_3
|110\rangle + \lambda_4 |111\rangle
\end{equation}
with $\lambda_0^2 + \lambda_3^2 + \lambda_4^2 = 1$. The
non-vanishing LU-invariants are
\begin{equation}
\label{eghzlu1} J_3 = \lambda_0^2 \lambda_3^2, \hspace{1.0cm} J_4
= \lambda_0^2 \lambda_4^2.
\end{equation}
Note that $J_3 + J_4$ is expressed in terms of solely $\lambda_0$
as
\begin{equation}
\label{eghzlu2} J_3 + J_4 = \lambda_0^2 (1 - \lambda_0^2).
\end{equation}

Eq.(\ref{eghz41}) can be re-written as
\begin{equation}
\label{eghz42} |\psi \rangle = \lambda_0 |00q_1 \rangle + \sqrt{1
- \lambda_0^2} |11q_2 \rangle
\end{equation}
where $|q_1\rangle = |0\rangle$ and $|q_2\rangle = (1 / \sqrt{1 -
\lambda_0^2}) (\lambda_3 |0\rangle + \lambda_4|1\rangle)$ are
normalized one qubit states. Thus, from Ref.\cite{analyt},
$P_{max}$ for $|\psi \rangle$ is
\begin{equation}
\label{eghzpmax1} P_{max} = \mbox{max} \left(\lambda_0^2, 1 -
\lambda_0^2 \right) = \frac{1}{2} \left[1 + \sqrt{(1 - 2
\lambda_0^2)^2} \right].
\end{equation}
With an aid of Eq.(\ref{eghzlu2}) $P_{max}$ in
Eq.(\ref{eghzpmax1}) can be easily expressed in terms of
LU-invariants as following:
\begin{equation}
\label{eghzpmax2} P_{max} = \frac{1}{2} \left[1 + \sqrt{1 - 4 (J_3
+ J_4)} \right].
\end{equation}
If we take $\lambda_3 = 0$ limit in this type, we have $J_3 = 0$,
which makes Eq.(\ref{eghzpmax2}) to be $(1/2) (1 + \sqrt{1 - 4
J_4})$. This exactly coincides with Eq.(\ref{pmax48}).

\subsubsection{$\lambda_1 = \lambda_3 = 0$}
In this case $|\psi\rangle$ and LU-invariants are
\begin{equation}
\label{eghz43} |\psi\rangle = \lambda_0 |0 q_1 0\rangle + \sqrt{1
- \lambda_0^2} |1 q_2 1\rangle
\end{equation}
and
\begin{equation}
\label{eghzlu3} J_2 = \lambda_0^2 \lambda_2^2, \hspace{1.0cm} J_4
= \lambda_0^2 \lambda_4^2
\end{equation}
where $|q_1\rangle = |0\rangle$, $|q_2\rangle = (1 /\sqrt{1 -
\lambda_0^2})(\lambda_2 |0\rangle + \lambda_4 |1\rangle)$, and
$\lambda_0^2 + \lambda_2^2 + \lambda_4^2=1$. The same method used
in the previous subsection easily yields
\begin{equation}
\label{eghzpmax3} P_{max} = \frac{1}{2} \left[1 + \sqrt{1 - 4 (J_2
+ J_4)} \right].
\end{equation}
One can show that Eq.(\ref{eghzpmax3}) has correct limits to other
types.

\subsubsection{$\lambda_2 = \lambda_3 = 0$}
In this case $|\psi\rangle$ and LU-invariants are
\begin{equation}
\label{eghz44} |\psi\rangle = \sqrt{1 - \lambda_4^2} |q_1 0
0\rangle +  \lambda_4 |q_2 1 1\rangle
\end{equation}
and
\begin{equation}
\label{eghzlu4} J_1 = \lambda_1^2 \lambda_4^2, \hspace{1.0cm} J_4
= \lambda_0^2 \lambda_4^2
\end{equation}
where $|q_1\rangle = (1 / \sqrt{1 - \lambda_4^2}) (\lambda_0
|0\rangle + \lambda_1 e^{i \varphi} |1\rangle)$, $|q_2\rangle =
|1\rangle$, and $\lambda_0^2 + \lambda_1^2 + \lambda_4^2=1$. It is
easy to show
\begin{equation}
\label{eghzpmax4} P_{max} = \frac{1}{2} \left[1 + \sqrt{1 - 4 (J_1
+ J_4)} \right].
\end{equation}
One can show that Eq.(\ref{eghzpmax4}) has correct limits to other
types.

\subsection{Type4a ($\lambda_4 = 0$)}
In this case the state vector $|\psi \rangle$ in Eq.(\ref{state1})
reduces to
\begin{equation}
\label{4a1} |\psi\rangle = \lambda_0 |000\rangle + \lambda_1 e^{i
\varphi} |100\rangle + \lambda_2 |101 \rangle + \lambda_3
|110\rangle
\end{equation}
with $\lambda_0^2 + \lambda_1^2 + \lambda_2^2 + \lambda_3^2 = 1$.
The non-vanishing LU-invariants are
\begin{eqnarray}
\label{4alu1} & &J_1 = \lambda_2^2 \lambda_3^2 \hspace{1.0cm} J_2
= \lambda_0^2 \lambda_2^2
                                                              \\   \nonumber
& &J_3 = \lambda_0^2 \lambda_3^2 \hspace{1.0cm} J_5 = 2
\lambda_0^2 \lambda_2^2 \lambda_3^2.
\end{eqnarray}
From Eq.(\ref{4alu1}) it is easy to show
\begin{equation}
\label{4alu2} \sqrt{J_1 J_2 J_3} = \frac{1}{2} J_5.
\end{equation}

The remarkable fact deduced from Eq.(\ref{4alu1}) is that the
non--vanishing LU--invariants are independent of the phase factor
$\varphi$. This indicates that the Groverian measure for
Eq.(\ref{4a1}) is also independent of $\varphi$

In order to compute $P_{max}$ analytically in this type, we should
solve the Lagrange multiplier equations (\ref{algebraic1}) with
\begin{eqnarray}
\label{rrg1} & &\vec{r}_1 = \mbox{Tr} [\rho^A \vec{\sigma}] = (2
\lambda_0 \lambda_1 \cos \varphi, 2 \lambda_0 \lambda_1 \sin
\varphi, 2 \lambda_0^2 - 1)
                                                              \\   \nonumber
& &\vec{r}_2 = \mbox{Tr} [\rho^B \vec{\sigma}] = (2 \lambda_1
\lambda_3 \cos \varphi,
 -2\lambda_1 \lambda_3 \sin \varphi, 1 - 2 \lambda_3^2)
                                                              \\   \nonumber
& & g_{ij} = \mbox{Tr}[\rho^{AB} \sigma_i \otimes \sigma_j] =
\left(                    \begin{array}{ccc}
      2 \lambda_0 \lambda_3       &     0     &    2 \lambda_0 \lambda_1 \cos \varphi  \\
         0     &    -2 \lambda_0 \lambda_3   &   2 \lambda_0 \lambda_1 \sin \varphi   \\
       -2 \lambda_1 \lambda_3 \cos \varphi  &  2 \lambda_1 \lambda_3 \sin \varphi  &
        \lambda_0^2 - \lambda_1^2 - \lambda_2^2 + \lambda_3^2
                          \end{array}
                                                                  \right).
\end{eqnarray}
Although we have freedom to choose the phase factor $\varphi$, it
is impossible to find singular values of the matrix $g$, which
makes it formidable task to solve Eq.(\ref{algebraic1}). Based on
Ref.\cite{analyt} and Ref.\cite{shared}, furthermore, we can
conjecture that $P_{max}$ for this type may have several different
expressions depending on the domains in parameter space.
Therefore, it may need long calculation to compute $P_{max}$
analytically. We would like to leave this issue for our future
research work and the explicit expressions of $P_{max}$ are not
presented in this paper.

\subsection{Type4b}
This type consists of the $2$ cases, {\it i.e.} $\lambda_2=0$ and
$\lambda_3=0$.

\subsubsection{$\lambda_2=0$}
In this case the state vector $|\psi \rangle$ in Eq.(\ref{state1})
reduces to
\begin{equation}
\label{4b1} |\psi\rangle = \lambda_0 |000\rangle + \lambda_1 e^{i
\varphi} |100\rangle + \lambda_3 |110 \rangle + \lambda_4
|111\rangle
\end{equation}
with $\lambda_0^2 + \lambda_1^2 + \lambda_3^2 + \lambda_4^2 = 1$.
The LU-invariants are
\begin{equation}
\label{4blu1} J_1 = \lambda_1^2 \lambda_4^2 \hspace{.5cm} J_3 =
\lambda_0^2 \lambda_3^2 \hspace{.5cm} J_4 = \lambda_0^2
\lambda_4^2.
\end{equation}
Eq.(\ref{4blu1}) implies that the Groverian measure for
Eq.(\ref{4b1}) is independent of the phase factor $\varphi$ like
type 4a. This fact may drastically reduce the calculation
procedure for solving the Lagrange multiplier equation
(\ref{algebraic1}). In spite of this fact, however, solving
Eq.(\ref{algebraic1}) is highly non-trivial as we commented in the
previous type. The explicit expressions of the Groverian measure
are not presented in this paper and we hope to present them
elsewhere in the near future.

\subsubsection{$\lambda_3=0$}
In this case the state vector $|\psi \rangle$ in Eq.(\ref{state1})
reduces to
\begin{equation}
\label{4b2} |\psi\rangle = \lambda_0 |000\rangle + \lambda_1 e^{i
\varphi} |100\rangle + \lambda_2 |101 \rangle + \lambda_4
|111\rangle
\end{equation}
with $\lambda_0^2 + \lambda_1^2 + \lambda_2^2 + \lambda_4^2 = 1$.
The LU-invariants are
\begin{equation}
\label{4blu2} J_1 = \lambda_1^2 \lambda_4^2 \hspace{.5cm} J_2 =
\lambda_0^2 \lambda_2^2 \hspace{.5cm} J_4 = \lambda_0^2
\lambda_4^2.
\end{equation}
Eq.(\ref{4blu2}) implies that the Groverian measure for
Eq.(\ref{4b2}) is independent of the phase factor $\varphi$ like
type 4a.

\subsection{Type4c ($\lambda_1=0$)}
In this case the state vector $|\psi \rangle$ in Eq.(\ref{state1})
reduces to
\begin{equation}
\label{4c1} |\psi\rangle = \lambda_0 |000\rangle + \lambda_2
|101\rangle + \lambda_3 |110 \rangle + \lambda_4 |111\rangle
\end{equation}
with $\lambda_0^2 + \lambda_2^2 + \lambda_3^2 + \lambda_4^2 = 1$.
The LU-invariants in this type are
\begin{eqnarray}
\label{4clu1} & &J_1 = \lambda_2^2 \lambda_3^2 \hspace{.5cm}
   J_2 = \lambda_0^2 \lambda_2^2 \hspace{.5cm}
   J_3 = \lambda_0^2 \lambda_3^2
                                             \\   \nonumber
& &J_4 = \lambda_0^2 \lambda_4^2 \hspace{.5cm}
   J_5 = 2 \lambda_0^2 \lambda_2^2 \lambda_3^2.
\end{eqnarray}
From Eq.(\ref{4clu1}) it is easy to show
\begin{equation}
\label{4clu2} J_1 (J_2 + J_3 + J_4) + J_2 J_3 = \sqrt{J_1 J_2 J_3}
= \frac{1}{2} J_5.
\end{equation}

In this type $\vec{r}_1$, $\vec{r}_2$ and $g_{ij}$ defined in
Eq.(\ref{def41}) are
\begin{eqnarray}
\label{4ccom1} & &\vec{r}_1 = (0, 0, 2 \lambda_0^2 - 1)        \\
\nonumber & &\vec{r}_2 = (2 \lambda_2 \lambda_4, 0, \lambda_0^2 +
\lambda_2^2 - \lambda_3^3 - \lambda_4^2)
                                                               \\   \nonumber
& &g_{ij} = \left(             \begin{array}{ccc}
                   2 \lambda_0 \lambda_3  &  0  &  0      \\
                   0  &  -2 \lambda_0 \lambda_3  &  0     \\
                   -2 \lambda_2 \lambda_4  &  0  &  1 - 2 \lambda_2^2
                               \end{array}                            \right).
\end{eqnarray}
Like type 4a and type 4b solving Eq.(\ref{algebraic1}) is highly
non-trivial mainly due to non-diagonalization of $g_{ij}$. Of
course, the fact that the first component of $\vec{r}_2$ is
non-zero makes hard to solve Eq.(\ref{algebraic1}) too. The
explicit expressions of the Groverian measure in this type are not
given in this paper.

\subsection{Type5 (real states): $\varphi = 0$, $\pi$}

\subsubsection{$\varphi = 0$}
In this case the state vector $|\psi \rangle$ in Eq.(\ref{state1})
reduces to
\begin{equation}
\label{501} |\psi\rangle = \lambda_0 |000\rangle + \lambda_1
|100\rangle + \lambda_2 |101 \rangle + \lambda_3 |110 \rangle +
\lambda_4 |111\rangle
\end{equation}
with $\lambda_0^2 + \lambda_1^2 + \lambda_2^2 + \lambda_3^2 +
\lambda_4^2 = 1$. The LU-invariants in this case are
\begin{eqnarray}
\label{500lu1} & &J_1 = (\lambda_2 \lambda_3 - \lambda_1
\lambda_4)^2 \hspace{.5cm}
   J_2 = \lambda_0^2 \lambda_2^2 \hspace{.5cm}
   J_3 = \lambda_0^2 \lambda_3^2
                                             \\   \nonumber
& &J_4 = \lambda_0^2 \lambda_4^2 \hspace{.5cm}
   J_5 = 2 \lambda_0^2 \lambda_2 \lambda_3 (\lambda_2 \lambda_3 - \lambda_1 \lambda_4).
\end{eqnarray}
It is easy to show $\sqrt{J_1 J_2 J_3} = J_5 / 2$.

\subsubsection{$\varphi = \pi$}
In this case the state vector $|\psi \rangle$ in Eq.(\ref{state1})
reduces to
\begin{equation}
\label{502} |\psi\rangle = \lambda_0 |000\rangle - \lambda_1
|100\rangle + \lambda_2 |101 \rangle + \lambda_3 |110 \rangle +
\lambda_4 |111\rangle
\end{equation}
with $\lambda_0^2 + \lambda_1^2 + \lambda_2^2 + \lambda_3^2 +
\lambda_4^2 = 1$. The LU-invariants in this case are
\begin{eqnarray}
\label{500lu2} & &J_1 = (\lambda_2 \lambda_3 + \lambda_1
\lambda_4)^2 \hspace{.5cm}
   J_2 = \lambda_0^2 \lambda_2^2 \hspace{.5cm}
   J_3 = \lambda_0^2 \lambda_3^2
                                             \\   \nonumber
& &J_4 = \lambda_0^2 \lambda_4^2 \hspace{.5cm}
   J_5 = 2 \lambda_0^2 \lambda_2 \lambda_3 (\lambda_2 \lambda_3 + \lambda_1 \lambda_4).
\end{eqnarray}
It is easy to show $\sqrt{J_1 J_2 J_3} = J_5 / 2$ in this type.

The analytic calculation of $P_{max}$ in type 5 is most difficult
problem. In addition, we don't know whether it is mathematically
possible or not. However, the geometric interpretation of
$P_{max}$ presented in Ref.\cite{analyt} and Ref.\cite{shared} may
provide us valuable insight. We hope to leave this issue for our
future research work too. The results in this section is
summarized in Table I.

\begin{center}
\begin{tabular}{c|c|c|c}  \hline
\multicolumn{2}{c|} {Type} & conditions & $P_{max}$  \\ \hline
\hline \multicolumn{2}{c|} {Type I}    &  $J_i = 0$  & $1$
\\ \hline
{}  &   {}   & $J_i = 0$ except $J_1$  &  $\frac{1}{2} \left(1 + \sqrt{1 - 4 J_1} \right)$  \\
                                                                               \cline{3-4}
Type II  &  a  & $J_i = 0$ except $J_2$  & $\frac{1}{2} \left(1 + \sqrt{1 - 4 J_2} \right)$ \\
                                                                               \cline{3-4}
{}   &  {}  & $J_i = 0$ except $J_3$   & $\frac{1}{2} \left(1 + \sqrt{1 - 4 J_3} \right)$ \\
                                                                               \cline{2-4}
{}   &  b  &  $J_i = 0$ except $J_4$  &  $\frac{1}{2} \left(1 + \sqrt{1 - 4 J_4} \right)$ \\
                                                                               \hline
{}  &  {}  &  {}  &  $\scriptstyle \left(1 + \sqrt{1 - 4(J_1 +
J_2)} + \sqrt{1 - 4(J_1 + J_3)} +  \sqrt{1 - 4(J_2 + J_3)}
\right)/4$
                                                             \\

{}  &  a  &  $\lambda_1 = \lambda_4 = 0$  &
                                                           if $a^2 \geq b^2 + c^2$  \\
                                                                                \cline{4-4}
{}  & {}  &  {}  &  $4 \sqrt{J_1 J_2 J_3} / \left(4(J_1 + J_2 +
J_3) - 1 \right)$
                                                          \\

{}  & {}  &  {}  &  {}
                                                        if $a^2 \leq b^2 + c^2$  \\
                                                                                \cline{2-4}
Type III  & {} & $\lambda_1 = \lambda_2 = 0$ & $\frac{1}{2}
\left(1 +
                                                 \sqrt{1 - 4 (J_3 + J_4)} \right)$  \\
                                                                                \cline{3-4}
{} & b  &  $\lambda_1 = \lambda_3 = 0$ & $\frac{1}{2} \left(1 +
                                                 \sqrt{1 - 4 (J_2 + J_4)} \right)$  \\
                                                                                \cline{3-4}
{} & {} & $\lambda_2 = \lambda_3 = 0$ & $\frac{1}{2} \left(1 +
                                                 \sqrt{1 - 4 (J_1 + J_4)} \right)$  \\
                                                                                \hline
{} & a & $\lambda_4 = 0$ & independent of $\varphi$: not presented  \\
                                                                                \cline{2-4}
Type IV & b & $\lambda_2 = 0$ & independent of $\varphi$: not presented  \\
                                                                                \cline{3-4}
{} & {} &  $\lambda_3 = 0$  &  independent of $\varphi$: not presented  \\
                                                                                \cline{2-4}
{} & c  & $\lambda_1 = 0$  &  not presented  \\
                                                                                \hline
\multicolumn{2}{c|} {Type V} & $\varphi = 0$ & not presented  \\
                                                                                \cline{3-4}
\multicolumn{2}{c|} {} & $\varphi = \pi$ & not presented         \\
                                                                                \hline

\end{tabular}

\vspace{0.1cm} Table I: Summary of $P_{max}$ in various types.
\end{center}
\vspace{0.5cm}

\section{New Type}

\subsection{standard form}
In this section we consider new type in $3$-qubit states. The type
we consider is
\begin{equation}
\label{newstate1}
|\Phi\rangle=a|100\rangle+b|010\rangle+c|001\rangle+q|111\rangle,\quad
a^2+b^2+c^2+q^2=1.
\end{equation}
First, we would like to derive the standard form like
Eq.(\ref{state1}) from $|\Phi\rangle$. This can be achieved as
following. First, we consider LU-transformation of $|\Phi\rangle$,
{\it i.e.} $(U \otimes \openone \otimes \openone) |\Phi\rangle$,
where
\begin{eqnarray}
\label{newunitary1} U = \frac{1}{\sqrt{a q + bc}} \left(
\begin{array}{cc}
               \sqrt{a q} e^{i \theta}   &  \sqrt{b c} e^{i \theta}  \\
               -\sqrt{b c}               &   \sqrt{a q}
                  \end{array}                             \right).
\end{eqnarray}
After LU-transformation, we perform Schmidt decomposition
following Ref.\cite{acin}. Finally we choose $\theta$ to make all
$\lambda_i$ to be positive. Then we can derive the standard form
(\ref{state1}) from $|\Phi \rangle$ with $\varphi = 0$ or $\pi$,
and
\begin{eqnarray}
\label{relacoe1} & &\lambda_0 = \sqrt{\frac{(ac + bq) (ab +
cq)}{aq + bc}}  \\  \nonumber & &\lambda_1 =
\frac{\sqrt{abcq}}{\sqrt{(ab + cq) (ac + bq) (aq + bc)}}
               |a^2 + q^2 - b^2 - c^2|                    \\   \nonumber
& &\lambda_2 = \frac{1}{\lambda_0} |ac - bq|              \\
\nonumber & &\lambda_3 = \frac{1}{\lambda_0} |ab - cq|
\\   \nonumber & &\lambda_4 = \frac{2 \sqrt{abcq}}{\lambda_0}.
\end{eqnarray}
It is easy to prove that the normalization condition $a^2 + b^2 +
c^2 + q^2 = 1$ guarantees the normalization
\begin{equation}
\label{newnormal1} \lambda_0^2 + \lambda_1^2 + \lambda_2^2 +
\lambda_3^2 + \lambda_4^2 = 1.
\end{equation}
Since $|\Phi\rangle$ has three free parameters, we need one more
constraint between $\lambda_i$'s. This additional constraint can
be derived by trial and error. The explicit expression for this
additional relation is
\begin{equation}
\label{additional1} \lambda_0^2 (\lambda_2^2 + \lambda_3^2 +
\lambda_4^2 ) = \frac{1}{4} - \frac{\lambda_1^2}{\lambda_4^2}
(\lambda_2^2 + \lambda_4^2) (\lambda_3^2 + \lambda_4^2).
\end{equation}
Since all $\lambda_i$'s are not vanishing but there are only three
free parameters, $|\Phi \rangle$ is not involved in the types
discussed in the previous section.

\subsection{LU-invariants}
Using Eq.(\ref{relacoe1}) it is easy to derive LU-invariants which
are
\begin{eqnarray}
\label{newlu-1} & &J_1 = (\lambda_1 \lambda_4 - \lambda_2
\lambda_3)^2
                           = \frac{1}{(ab + cq)^2 (ac + bq)^2}    \\   \nonumber
& & \hspace{1.5cm} \times \left[ 2 abcq |a^2 + q^2 - b^2 - c^2| -
(aq + bc) |(ab - cq) (ac - bq)| \right]^2
                                                                  \\  \nonumber
& & J_2 = \lambda_0^2 \lambda_2^2 = (ac - bq)^2
\\   \nonumber & & J_3 = \lambda_0^2 \lambda_3^2 = (ab - cq)^2
\\   \nonumber & & J_4 = \lambda_0^2 \lambda_4^2 = 4 abcq
\\   \nonumber & & J_5 = \lambda_0^2 \left( J_1 + \lambda_2^2
\lambda_3^2 - \lambda_1^2 \lambda_4^2
                                                   \right).
\end{eqnarray}
One can show directly that $J_5 = 2 \sqrt{J_1 J_2 J_3}$. Since
$|\Phi \rangle$ has three free parameters, there should exist
additional relation between $J_i$'s. However, the explicit
expression may be hardly derived. In principle, this constraint
can be derived as following. First, we express the coefficients
$a$, $b$, $c$, and $q$ in terms of $J_1$, $J_2$, $J_3$ and $J_4$
using first four equations of Eq.(\ref{newlu-1}). Then the
normalization condition $a^2 + b^2 + c^2 + q^2 = 1$ gives explicit
expression of this additional constraint. Since, however, this
procedure requires the solutions of quartic equation, it seems to
be hard to derive it explicitly.

Since $J_1$ contains absolute value, it is dependent on the
regions in the parameter space. Direct calculation shows that
$J_1$ is

\begin{equation}\label{newj1a}
J_1 = (aq - bc)^2,
\end{equation}
when $(a^2 + q^2 - b^2 - c^2) (ab - cq) (ac - bq) \geq 0$ and
\begin{equation}\label{newj1b}
J_1 = (aq - bc)^2  \left[1 + 2 \frac{(ab - cq) (ac - bq) (aq + bc)
}{ (ab + cq) (ac + bq) (aq - bc) } \right]^2,
\end{equation}
when $(a^2 + q^2 - b^2 - c^2) (ab - cq) (ac - bq) < 0$.

Since $P_{max}$ is manifestly LU-invariant quantity, it is obvious
that it also depends on the regions on the parameter space.

\subsection{calculation of $ P_{max} $}

$P_{max}$ for state $|\Phi \rangle$ in Eq.(\ref{newstate1}) has
been analytically computed recently in Ref.\cite{shared}. It turns
out that $P_{max}$ is differently expressed in three distinct
ranges of definition in parameter space. The final expressions can
be interpreted geometrically as discussed in Ref.\cite{shared}. To
express $P_{max}$ explicitly we define
\begin{eqnarray}
\label{share1} & &r_1 \equiv b^2 + c^2 - a^2 - q^2  \hspace{1.0cm}
   r_2 \equiv a^2 + c^2 - b^2 - q^2
                                                \\   \nonumber
& &r_3 \equiv a^2 + b^2 - c^2 - q^2  \hspace{1.0cm} \omega \equiv
ab + qc  \hspace{1.0cm} \mu \equiv ab - qc.
\end{eqnarray}

The first expression of $P_{max}$, which can be expressed in terms
of circumradius of convex quadrangle is
\begin{equation}
\label{share2} P_{max}^{(Q)} = \frac{4(ab + qc) (ac + qb) (aq +
bc)}{4 \omega^2 - r_3^2}.
\end{equation}
The second expression of $P_{max}$, which can be expressed in
terms of circumradius of crossed-quadrangle is
\begin{equation}
\label{shhare3} P_{max}^{(CQ)} = \frac{(ab - cq) (ac - bq) (bc -
aq)}{4 S_x^2}
\end{equation}
where
\begin{equation}
\label{share4} S_x^2 = \frac{1}{16} (a + b + c + q) (a + b - c -
q) (a - b + c - q) (-a + b + c - q).
\end{equation}
The final expression of $P_{max}$ corresponds to the largest
coefficient:
\begin{equation}
\label{share5} P_{max}^{(L)} = \mbox{max} (a^2, b^2, c^2, q^2)
              = \frac{1}{4} \left(1 + |r_1| + |r_2| + |r_3| \right).
\end{equation}
The applicable domain for each $P_{max}$ is fully discussed in
Ref.\cite{shared}.

Now we would like to express all expressions of $P_{max}$ in terms
of LU-invariants. For the simplicity we choose a simplified case,
that is $(a^2 + q^2 - b^2 - c^2) (ab - cq) (ac - bq) \geq 0$. Then
it is easy to derive
\begin{eqnarray}
\label{share6} & &r_1^2 = 1 - 4(J_2 + J_3 + J_4)  \hspace{1.0cm}
r_2^2 = 1 - 4(J_1 + J_3 + J_4)
                                                                 \\   \nonumber
& &r_3^2 = 1 - 4(J_1 + J_2 + J_4)  \hspace{1.0cm} \omega^2 = J_3 +
J_4.
\end{eqnarray}
Then it is simple to express $P_{max}^{(Q)}$ and $P_{max}^{(CQ)}$
as following:
\begin{eqnarray}
\label{share7} & &P_{max}^{(Q)} = \frac{4 \sqrt{(J_1 + J_4) (J_2 +
J_4) (J_3 + J_4)}}
                        {4(J_1 + J_2 + J_3 + 2 J_4) - 1}
                                                          \\  \nonumber
& &P_{max}^{(CQ)} = \frac{4 \sqrt{J_1 J_2 J_3}}{4 (J_1 + J_2 + J_3
+ J_4) - 1}.
\end{eqnarray}
If we take $q=0$ limit, we have $\lambda_4 = J_4 = 0$. Thus
$P_{max}^{(Q)}$ and $P_{max}^{(CQ)}$ reduce to $4 \sqrt{J_1 J_2
J_3} / (4 (J_1 + J_2 + J_3) - 1)$, which exactly coincides with
$P_{max}^<$ in Eq.(\ref{pmax53}). Finally Eq.(\ref{share6}) makes
$P_{max}^{(L)}$ to be
\begin{eqnarray}
\label{share8} P_{max}^{(L)} &=& \frac{1}{4} \bigg( 1 + \sqrt{1 -
4(J_2 + J_3 + J_4)} \nonumber \\ &+&
                \sqrt{1 - 4(J_1 + J_3 + J_4)} +
                              \sqrt{1 - 4(J_1 + J_2 + J_4)} \bigg).
\end{eqnarray}
One can show that $P_{max}^{(L)}$ equals to $P_{max}^{>}$ in
Eq.(\ref{pmax51}) when $q=0$. This indicates that our results
(\ref{share7}) and (\ref{share8}) have correct limits to other
types of three-qubit system.

\section{Conclusion}

We tried to compute the Groverian measure analytically in the
various types of three-qubit system. The types we considered in
this paper are given in Ref.\cite{acin} for the classification of
the three-qubit system.

For type 1, type 2 and type 3 the Groverian measures are
analytically computed. All results, furthermore, can be
represented in terms of LU-invariant quantities. This reflects the
manifest LU-invariance of the Groverian measure.

For type 4 and type 5 we could not derive the analytical
expressions of the measures because the Lagrange multiplier
equations (\ref{algebraic1}) is highly difficult to solve.
However, the consideration of LU-invariants indicates that the
Groverian measure in type 4 should be independent of the phase
factor $\varphi$. We expect that this fact may drastically
simplify the calculational procedure for obtaining the analytical
results of the measure in type 4. The derivation in type 5 is most
difficult problem. However, it might be possible to get valuable
insight from the geometric interpretation of $P_{max}$, presented
in Ref.\cite{analyt} and Ref.\cite{shared}. We would like to
revisit type 4 and type 5 in the near future.

We think that the most important problem in the research of
entanglement is to understand the general properties of
entanglement measures in arbitrary qubit systems. In order to
explore this issue we would like to extend, as a next step, our
calculation to four-qubit states. In addition, the Groverian
measure for four-qubit pure state is related to that for two-qubit
mixed state via purification\cite{shapira06}. Although general
theory for entanglement is far from complete understanding at
present stage, we would like to go toward this direction in the
future.


\chapter[Shared Quantum States]{Geometric measure of entanglement and shared quantum states}\label{shared-1}

\renewcommand{\p}{{\bm r}}
\renewcommand{\s}{{\bm \sigma}}
\renewcommand{\tr}{\mathrm{tr}}
\renewcommand{\ra}{\rangle}
\renewcommand{\la}{\langle}
\renewcommand{\sv}{{\bm v}}
\renewcommand{\iv}{{\bm i}}
\renewcommand{\lm}{\Lambda_{\max}^2}
\newcommand{\su}{{\bm u}}
\newcommand{\jv}{{\bm j}}
\newcommand{\kv}{{\bm k}}

In this chapter we present the first calculation of the geometric
measure of entanglement for generic three qubit states which are
expressed as linear combinations of four orthogonal product
states~\cite{shared}.

Any pure three qubit state can be written in terms of five
preassigned orthogonal product states \cite{acin} via Schmidt
decomposition. Thus the states discussed here are more general
states compared to the well-known GHZ \cite{ghz} and W \cite{Chir}
states.

The progress made to date allows oneself to calculate the
geometric measure of entanglement for pure three qubit systems
\cite{reduced}. The basic idea is to use $(n-1)$-qubit mixed
states to calculate the geometric measure of $n$-qubit pure
states. In the case of three qubits this idea converts the task
effectively into the maximization of the two-qubit mixed state
over product states and yields linear eigenvalue equations
\cite{analyt}. The solution of these linear eigenvalue equations
reduces to the root finding for algebraic equations of degree six.
However, three-qubit states containing symmetries allow complete
analytical solutions and explicit expressions as the symmetry
reduces the equations of degree six to the quadratic equations.
Analytic expressions derived in this way are unique and the
presented effective method can be applied for extended quantum
systems. Our aim is to derive analytic expressions for a wider
class of three qubit systems and in this sense this work is the
continuation of Ref.\cite{analyt}.

We consider most general three qubit states that allow to derive
analytic expressions for entanglement eigenvalue. These states can
be expressed as linear combinations of four given orthogonal
product states. If any of coefficients in this expansion vanishes,
then one obtains the states analyzed in \cite{analyt}. Notice that
arbitrary linear combinations of five product states \cite{acin}
give a couple of algebraic equations of degree six. Hence
\'Evariste Galois's theorem does not allow to get analytic
expressions for these states except some particular cases.

We derive analytic expressions for an entanglement eigenvalue.
Each expression has its own applicable domain depending on state
parameters and these applicable domains are split up by separating
surfaces. Thus the geometric measure distinguishes different types
of states depending on the corresponding applicable domain. States
that lie on separating surfaces are shared by two types of states
and acquire new features.

This chapter is organized as follows.
In Section 3.1 we derive stationarity equations and their
solutions. In Section 3.2 we specify three qubit states
under consideration and find relevant quantities. In
Section 3.3 we calculate entanglement eigenvalues and
present explicit expressions. In Section 3.4 we separate
the validity domains of the derived expressions. In
Section 3.5 we discuss shared states. In Section
3.6 we make concluding remarks.

\section{Stationarity equations}

In this section we briefly review the derivation of the
stationarity equations  and their general solutions \cite{analyt}.
Denote by $\rho^{ABC}$ the density matrix of the three-qubit pure
state and define the entanglement eigenvalue $\lm$ \cite{wei-03}

\begin{equation}\label{gen.pmax.shared}
\lm=\max_{\varrho^1\varrho^2\varrho^3}
\tr\left(\rho^{ABC}\varrho^1\otimes\varrho^2\otimes\varrho^3\right),
\end{equation}

\noindent where the maximization runs over all normalized complete
product states. Theorem 1 of Ref.\cite{reduced} states that the
maximization of a pure state over a single qubit state can be
completely derived by using  a particle traced over density
matrix. Hence the theorem allows us to re-express the entanglement
eigenvalue by reduced density matrix $\rho^{AB}$ of qubits A and B

\begin{equation}\label{gen.pred.shared}
\lm=\max_{\varrho^1\varrho^2}
\tr\left(\rho^{AB}\varrho^1\otimes\varrho^2\right).
\end{equation}

Now we introduce four Bloch vectors:

\smallskip

1)\,$\p_A$ for the reduced density matrix $\rho^A$ of the qubit A,

2)\,$\p_B$ for the reduced density matrix $\rho^B$ of the qubit B,

3)\,$\su$ for the single qubit state $\varrho^1$,

4)\,$\sv$ for the single qubit state $\varrho^2$.

\smallskip

Then the expression for entanglement eigenvalue
(\ref{gen.pred.shared}) takes the form

\begin{equation}\label{gen.s1s2.shared}
\lm=\frac{1}{4}\max_{u^2=v^2=1}\left(1+\su\cdot \p_A+\sv\cdot
\p_B+g_{ij}\,u_iv_j\right),
\end{equation}

\noindent where(summation on repeated indices $i$ and $j$ is
understood)

\begin{equation}\label{gen.vec.shared}
g_{ij}=\tr(\rho^{AB}\sigma_i\otimes\sigma_j)
\end{equation}

\noindent and $\sigma_i$'s are Pauli matrices. The closest product
state satisfies the stationarity conditions

\begin{equation}\label{gen.eq.shared}
\p_A+g\sv=\lambda_1\su,\quad\p_B+g^T\su=\lambda_2\sv,
\end{equation}

\noindent where Lagrange multipliers $\lambda_1$ and $\lambda_2$
enforce the unit Bloch vectors $\su$ and $\sv$. The solutions of
Eq.(\ref{gen.eq.shared}) are

\begin{equation}\label{gen.sol.shared}
\su=\left(\lambda_1\lambda_2\openone-g\,g^T\right)^{-1}
\left(\lambda_2\p_A+g\,\p_B\right),\quad
\sv=\left(\lambda_1\lambda_2\openone-g^Tg\right)^{-1}
\left(\lambda_1\p_B+g^T\p_A\right).
\end{equation}

\noindent Unknown Lagrange multipliers are defined by equations

\begin{equation}\label{gen.alg.shared}
u^2=1,\quad v^2=1.
\end{equation}

In general, Eq.(\ref{gen.alg.shared}) gives algebraic equations of
degree six. The reason for this is that stationarity equations
define all extremes of the reduced density matrix $\rho^{AB}$ over
product states, regardless of them being global or local. And the
degree of the algebraic equations is the number of possible
extremes.

Eq.(\ref{gen.sol.shared}) contains valuable information. It
provides solid bases for a new numerical approach. This can be
compared with the numerical calculations based on other technique
\cite{Shim-grov}.

\section{Three Qubit State}

We consider a four-parameter state

\begin{equation}\label{w.psi.shared}
|\psi\ra=a|100\ra+b|010\ra+c|001\ra+d|111\ra,
\end{equation}

\noindent where free parameters $a,b,c,d$ satisfy the
normalization condition $a^2+b^2+c^2+d^2=1$. Without loss of
generality we consider only the case of positive parameters
$a,b,c,d$. At first sight, it is not obvious whether the state
allows analytic solutions or not. However, it does and our first
task is to confirm the existence of the analytic solutions.

In fact, entanglement of the state Eq.(\ref{w.psi.shared}) is
invariant under the permutations of four parameters $a,b,c,d$. The
invariance under the permutations of three parameters $a,b,c$ is
the consequence of the invariance under the permutations of qubits
A,B,C. Now we make a local unitary(LU) transformation that
relabels the bases of qubits B and C, i.e.
$0_B\leftrightarrow1_B,\; 0_C\leftrightarrow1_C$, and does not
change the basis of qubit A. This LU-transformation interchanges
the coefficients as follows: $a\leftrightarrow d,\;
b\leftrightarrow c$. Since any entanglement measure must be
invariant under LU-transformations and the permutation
$b\leftrightarrow c$, it must be also invariant under the
permutation $a\leftrightarrow d$. In view of this symmetry, any
entanglement measure must be invariant under the permutations of
all the state parameters $a,b,c,d$. Owing to this symmetry, the
state allows to derive analytic expressions for the entanglement
eigenvalues. The necessary condition is \cite{analyt}

\begin{equation}\label{w.det}
\det\left(\lambda_1\lambda_2\openone-g\,g^T\right)=0.
\end{equation}

Indeed, if the condition (\ref{w.det}) is fulfilled, then the
expressions (\ref{gen.sol.shared}) for the general solutions are
not applicable and Eq.(\ref{gen.eq.shared}) admits further
simplification.

Denote by $\iv,\jv,\kv$ unit vectors along axes $x,y,z$
respectively. Straightforward calculation yields

\begin{equation}\label{w.matr.shared}
 \p_A=r_1\,\kv,\quad\p_B=r_2\,\kv,\quad g=
\begin{pmatrix}
2\omega & 0 & 0\\
0 & 2\mu & 0\\
0 & 0 & -r_3
\end{pmatrix}
,
\end{equation}
where
\begin{eqnarray}\label{w.eig.shared}
 & & r_1=b^2+c^2-a^2-d^2,\quad r_2=a^2+c^2-b^2-d^2,
\\\nonumber
 & & r_3=a^2+b^2-c^2-d^2,\quad \omega=ab+dc,\quad\mu=ab-dc.
\end{eqnarray}

Vectors $\su$ and $\sv$ can be written as linear combinations

\begin{equation}\label{w.ijk}
\su=u_i\iv+u_j\jv+u_k\kv,\quad \sv=v_i\iv+v_j\jv+v_k\kv
\end{equation}

\noindent of vectors $\iv,\jv,\kv$. The substitution of the
Eq.(\ref{w.ijk}) into Eq.(\ref{gen.eq.shared}) gives a couple of
equations in each direction. The result is a system of six linear
equations

\begin{subequations}\label{w.sub}
\begin{equation}\label{w.sub1}
2\omega\,v_i=\lambda_1u_i,\quad2\omega\,u_i=\lambda_2v_i,
\end{equation}
\begin{equation}\label{w.sub2}
2\mu\,v_j=\lambda_1u_j,\quad2\mu\,u_j=\lambda_2v_j,
\end{equation}
\begin{equation}\label{w.sub3}
r_1-r_3 v_k=\lambda_1u_k,\quad r_2-r_3 u_k=\lambda_2v_k.
\end{equation}
\end{subequations}

Above equations impose two conditions

\begin{subequations}\label{w.imp}
\begin{equation}\label{w.impom}
(\lambda_1\lambda_2-4\omega^2)u_iv_i=0,
\end{equation}
\begin{equation}\label{w.impmu}
(\lambda_1\lambda_2-4\mu^2)u_jv_j=0.
\end{equation}
\end{subequations}

From these equations it can be deduced that the condition
(\ref{w.det}) is valid and the system of equations
(\ref{gen.eq.shared}) and (\ref{gen.alg.shared}) is solvable. Note
that as a consequences of Eq.(\ref{w.sub}) $x$ and/or $y$
components of vectors $\su$ and $\sv$ vanish simultaneously.
Hence, conditions (\ref{w.imp}) are satisfied in following three
cases:

\begin{itemize}

\item vectors $\su$ and $\sv$ lie in $xz$ plane
\begin{equation}\label{w.vers1}
\lambda_1\lambda_2-4\omega^2=0,\quad u_jv_j=0,
\end{equation}

\item vectors $\su$ and $\sv$ lie in $yz$ plane
\begin{equation}\label{w.vers2}
\lambda_1\lambda_2-4\mu^2=0,\quad u_iv_i=0,
\end{equation}

\item vectors $\su$ and $\sv$ are aligned with axis $z$
\begin{equation}\label{w.vers3}
u_iv_i=u_jv_j=0.
\end{equation}

\end{itemize}

These cases are examined individually in next section.

\section{Explicit expressions}

In this section we analyze all three cases and derive explicit
expressions for entanglement eigenvalue. Each expression has its
own range of definition in which they are deemed applicable. Three
ranges of definition cover the four dimensional sphere given by
normalization condition. It is necessary to separate the validity
domains and to make clear which of expressions should be applied
for a given state. It turns out that the separation of domains
requires solving inequalities that contain polynomials of degree
six. This is a nontrivial task and we investigate it in the next
section.

\subsection{Circumradius of Convex Quadrangle}

Let us consider the first case. Our main task is to find Lagrange
multipliers $\lambda_1$ and $\lambda_2$. From equations
(\ref{w.sub3}) and (\ref{w.vers1}) we have

\begin{equation}\label{cq.sz}
u_k=\frac{\lambda_2r_1-r_2r_3}{4\omega^2-r_3^2},\quad
v_k=\frac{\lambda_1r_2-r_1r_3}{4\omega^2-r_3^2}.
\end{equation}

In its turn Eq.(\ref{w.sub1}) gives

\begin{equation}\label{cq.sx}
\lambda_1u_i^2=\lambda_2v_i^2.
\end{equation}

Eq.(\ref{gen.alg.shared}) allows the substitution of expressions
(\ref{cq.sz}) into Eq.(\ref{cq.sx}). Then we can obtain the second
equation for Lagrange multipliers

\begin{equation}\label{cq.con}
\lambda_1\left(4\omega^2+r_2^2-r_3^2 \right)=
\lambda_2\left(4\omega^2+r_1^2-r_3^2 \right).
\end{equation}

This equation has a simple form owing to condition (\ref{w.det}).
Thus we can factorize the equation of degree six into the
quadratic equations. Equations (\ref{cq.con}) and (\ref{w.vers1})
together yield

\begin{equation}\label{cq.lam-om}
\lambda_1=2\omega\,\frac{bc+ad}{ac+bd},\quad
\lambda_2=2\omega\,\frac{ac+bd}{bc+ad}.
\end{equation}

Note that we kept only positive values of Lagrange multipliers and
omitted negative values to get the maximal value of $\lm$. Now
Eq.(\ref{gen.s1s2.shared}) takes the form

\begin{equation}\label{cq.eigprim1}
4\lm=1+\frac{8(ab+cd)(ac+bd)(ad+bc)-r_1r_2r_3}{4\omega^2-r_3^2}.
\end{equation}

In fact, entanglement eigenvalue is the sum of two equal terms and
this statement follows from the identity

\begin{equation}\label{cq.iden}
1-\frac{r_1r_2r_3}{4\omega^2-r_3^2}=
8\frac{(ab+cd)(ac+bd)(ad+bc)}{4\omega^2-r_3^2}.
\end{equation}

\noindent To derive this identity one has to use the normalization
condition $a^2+b^2+c^2+d^2 = 1$. The identity allows to rewrite
Eq.(\ref{cq.eigprim1}) as follows

\begin{equation}\label{cq.eigfin}
\lm=4R_q^2,
\end{equation}

\noindent where

\begin{equation}\label{cq.circ}
R_q^2=\frac{(ab+cd)(ac+bd)(ad+bc)}{4\omega^2-r_3^2}.
\end{equation}

Above formula has a geometric interpretation and now we
demonstrate it. Let us define a quantity $p \equiv (a+b+c+d)/2$.
Then the denominator can be rewritten as

\begin{equation}\label{cq.her}
4\omega^2-r_3^2=16(p-a)(p-b)(p-c)(p-d).
\end{equation}

Five independent parameters are necessary to construct a convex
quadrangle. However, four independent parameters are necessary to
construct a convex quadrangle that has circumradius. For such
quadrangles the area $S_q$ is given exactly by Eq.(\ref{cq.her})
up to numerical factor, that is $S^2_q=(p-a)(p-b)(p-c)(p-d)$.
Hence Eq.(\ref{cq.circ}) can be rewritten as

\begin{equation}\label{cq.radius}
R_q^2=\frac{(ab+cd)(ac+bd)(ad+bc)}{16S^2_q}.
\end{equation}

\noindent Thus $R_q$ can be interpreted as a  circumradius of the
convex quadrangle. Eq.(\ref{cq.radius}) is the generalization of
the corresponding formula of Ref.\cite{analyt} and reduces to the
circumradius of the triangle if one of parameters is zero.

Eq.(\ref{cq.eigfin}) is valid if vectors $\su$ and $\sv$ are unit
and have non-vanishing $x$ components. These conditions have short
formulations

\begin{equation}\label{cq.ineq}
|u_k|\leq1,\quad|v_k|\leq1.
\end{equation}

Above inequalities are polynomials of degree six and algebraic
solutions are unlikely. However, it is still possible do define
the domain of validity of Eq.(\ref{cq.radius}).

\subsection{Circumradius of Crossed-Quadrangle}

Here, we consider the second case given by Eq.(\ref{w.vers2}).
Derivations repeat steps of the previous subsection and the only
difference is the interchange $\omega\leftrightarrow\mu$.
Therefore we skip some obvious steps and present only main
results. Components of vectors $\su$ and $\sv$ along axis $z$ are

\begin{equation}\label{cr.sz}
u_k=\frac{\lambda_2r_1-r_2r_3}{4\mu^2-r_3^2},\quad
v_k=\frac{\lambda_1r_2-r_1r_3}{4\mu^2-r_3^2}.
\end{equation}

The second equation for Lagrange multipliers

\begin{equation}\label{cr.con}
\lambda_1\left(4\mu^2+r_2^2-r_3^2 \right)=
\lambda_2\left(4\mu^2+r_1^2-r_3^2 \right)
\end{equation}

together with Eq.(\ref{w.vers2}) yields

\begin{equation}\label{cr.lam-mu}
\lambda_1=\pm2\mu\,\frac{bc-ad}{ac-bd},\quad
\lambda_2=\pm2\mu\,\frac{ac-bd}{bc-ad}.
\end{equation}

\noindent Using these expressions, one can derive the following
expression for entanglement eigenvalue

\begin{equation}\label{cr.eigprim2}
4\lm=
1+\frac{\lambda_2(4\mu^2+r_1^2-r_3^2)-r_1r_2r_3}{4\mu^2-r_3^2}.
\end{equation}

Now the restrictions $1/4<\lm\leq1$ derived in Ref.\cite{reduced}
uniquely define the signs in Eq.(\ref{cr.lam-mu}). Right signs
enforce strictly positive fraction in right hand side of
Eq.(\ref{cr.eigprim2}). To make a right choice, we replace $d$ by
$-d$ in the identity (\ref{cq.iden}) and rewrite
Eq.(\ref{cr.eigprim2}) as follows

\begin{eqnarray}\label{cr.eigprim3}
4\lm=\frac{1}{2}\,\frac{(ac-bd)(bc-ad)(ab-cd)}
{p(p-c-d)(p-b-d)(p-a-d)}
\nonumber\\\pm\frac{1}{2}\,\frac{(ac-bd)(bc-ad)(ab-cd)}
{p(p-c-d)(p-b-d)(p-a-d)}.
\end{eqnarray}


Lower sign yields zero and is wrong. It shows that reduced density
matrix $\rho^{AB}$ still has zero eigenvalue.

Upper sign may yield a true answer. Entanglement eigenvalue is

\begin{equation}\label{cr.eigfin}
\lm=4R_\times^2,
\end{equation}

\noindent where

\begin{equation}\label{cr.radius}
R_\times^2=\frac{(ac-bd)(bc-ad)(ab-cd)}{16S^2_\times},
\end{equation}

\noindent and $S^2_\times=p(p-c-d)(p-b-d)(p-a-d)$. The formula
(\ref{cr.radius}) may seem suspicious because it is not clear
whether right hand side is positive and lies in required region.
To clarify the situation we present a geometrical treatment of
Eq.(\ref{cr.radius}).

The geometrical figure $ABCD$ in Fig.1A is not a quadrangle and is
not a polygon at all. The reason is that it has crossed sides $AD$
and $BC$. We call figure $ABCD$ crossed-quadrangle in a figurative
sense as it has four sides and a cross point. Another
justification of this term is that we will compare figure $ABCD$
in Fig.1A with a convex quadrangle $ABCD$ containing the same
sides.

\begin{figure}
\includegraphics[width=10cm]{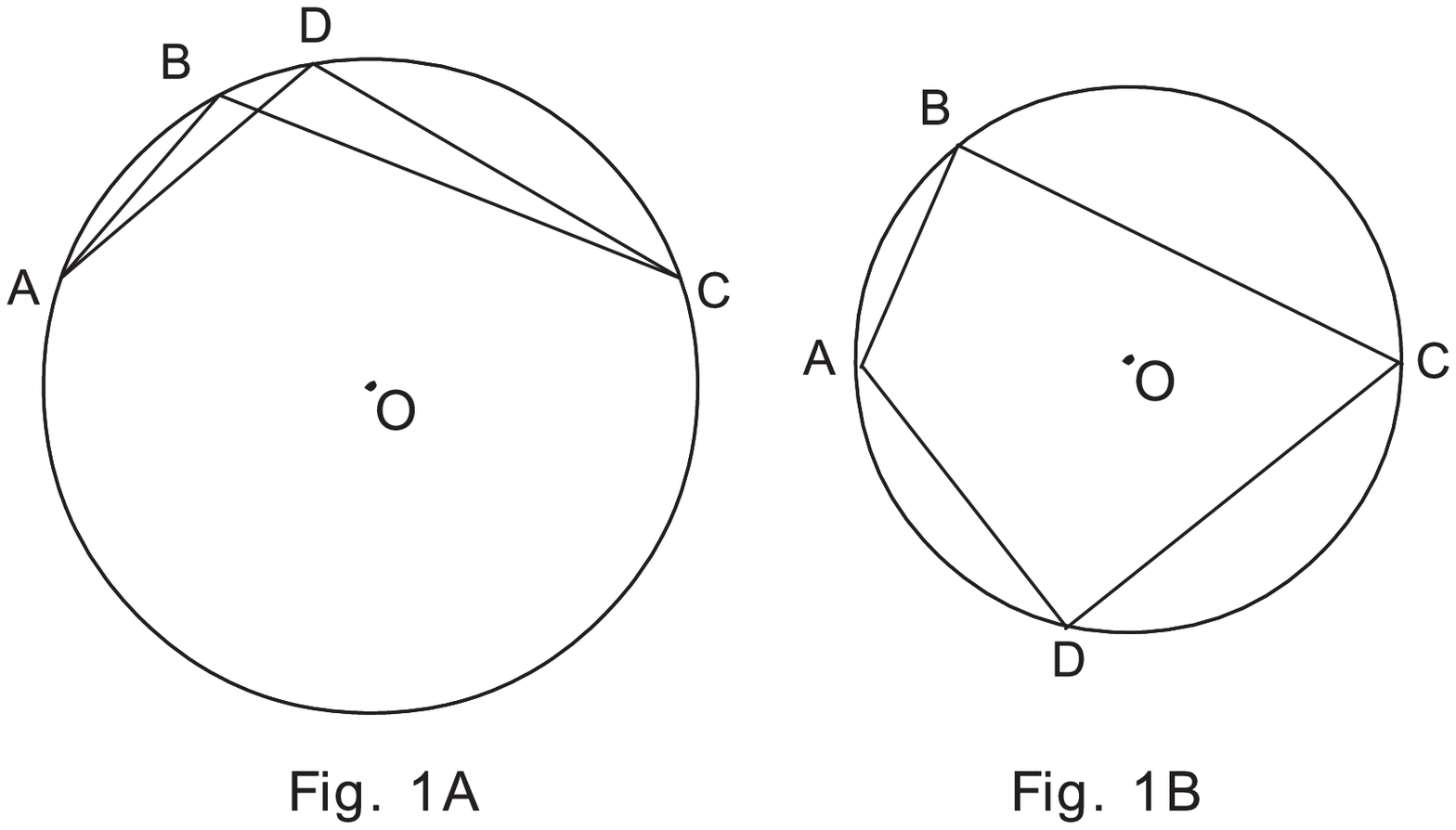}
\caption{\label{cr.eps}This figure shows the example for the case
when crossed quadrangle(Fig.1A) has larger circumradius than that
of convex quadrangle(Fig.1B) with same sides.}
\end{figure}

Consider a crossed-quadrangle $ABCD$ with sides
$AB=a,BC=b,CD=c,DA=d$ that has circumcircle. It is easy to find
the length of the interval $AC$

\begin{equation}\label{cr.ab}
AC^2=\frac{(ac-bd)(bc-ad)}{ab-cd}.
\end{equation}

This relation is true unless triangles $ABC$ and $ADC$ have the
same height and as a consequence equal areas. Note that $S_\times$
is not an area of the crossed-quadrangle. It is the difference
between the areas of the noted triangles.

Using Eq.(\ref{cr.ab}), one can derive exactly
Eq.(\ref{cr.radius}) for the circumradius of the
crossed-quadrangle.

Eq.(\ref{cr.eigfin}) is meaningful if vectors $\su$ and $\sv$ are
unit and have nonzero components along the axis $y$.

\subsection{Largest Coefficient}

In this subsection we consider the last case described by
Eq.(\ref{w.vers3}). Entanglement eigenvalue takes maximal value if
all terms in r.h.s. of Eq.(\ref{gen.s1s2.shared}) are positive.
Then equations (\ref{w.vers3}) and (\ref{w.matr.shared}) together
impose

\begin{equation}\label{lc.vec}
\su={\rm Sign}(r_1)\kv,\quad\sv={\rm Sign}(r_2)\kv,\quad
r_1r_2r_3<0,
\end{equation}

\noindent where Sign(x) gives -1, 0 or 1 depending on whether x is
negative, zero, or positive. Substituting these values into
Eq.(\ref{gen.s1s2.shared}), we obtain

\begin{equation}\label{lc.3}
\lm=\frac{1}{4}\left(1+|r_1|+|r_2|+|r_3|\right).
\end{equation}

Owing to inequality, $r_1r_2r_3<0$, above expression always gives
a square of the largest coefficient $l$

\begin{equation}\label{lc.l}
l=\max(a,b,c,d)
\end{equation}

\noindent in Eq.(\ref{w.psi.shared}). Indeed, let us consider the
case $r_1>0,r_2>0,r_3<0$. From inequalities $r_1>0,r_2>0$ it
follows that $c^2>d^2+|a^2-b^2|$ and therefore $c^2>d^2$. Note,
$c^2>d^2$ is necessary but not sufficient condition. Now if $d>b$,
then $r_1>0$ yields $c>a$ and if $d<b$, then $r_3<0$ yields $c>a$.
Thus inequality $c>a$ is true in all cases. Similarly $c>b$ and
$c$ is the largest coefficient. On the other hand $\lm=c^2$ and
Eq.(\ref{lc.3}) really gives the largest coefficient in this case.

Similarly, cases $r_1>0,r_2<0,r_3>0$ and $r_1<0,r_2>0,r_3>0$ yield
$\lm=b^2$ and $\lm=a^2$, respectively. And again entanglement
eigenvalue takes the value of the largest coefficient.

The last possibility $r_1<0,r_2<0,r_3<0$ can be analyzed using
analogous speculations. One obtains $\lm=d^2$ and $d$ is the
largest coefficient.

Combining all cases mentioned earlier, we rewrite Eq.(\ref{lc.3})
as follows

\begin{equation}\label{lc.larg}
\lm=l^2.
\end{equation}

This expression is valid if both vectors $\su$ and $\sv$ are
collinear with the axes $z$.

\bigskip

We have derived three expressions for
(\ref{cq.eigfin}),(\ref{cr.eigfin}) and (\ref{lc.larg}) for
entanglement eigenvalue. They are valid when vectors $\su$ and
$\sv$ lie in $xz$ plane, lie in $yz$ plane and are collinear with
axis $z$, respectively. The following section goes on to specify
these domains by parameters $a,b,c,d$.

\section{Applicable Domains}

Mainly, two points are being analyzed. First, we probe into the
meaningful geometrical interpretations of quantities $R_q$ and
$R_\times$. Second, we separate validity domains of equations
(\ref{cq.eigfin}),(\ref{cr.eigfin}) and (\ref{lc.larg}). It is
mentioned earlier that algebraic methods for solving the
inequalities of degree six are ineffective. Hence, we use
geometric tools that are elegant and concise in this case.

We consider four parameters $a,b,c,d$ as free parameters as the
normalization condition is irrelevant here. Indeed, one can use
the state $|\psi\ra/\sqrt{a^2+b^2+c^2+d^2}$ where all parameters
are free. If one repeats the same steps, the only difference is
that the entanglement eigenvalue $\lm$ is replaced by
$\lm/(a^2+b^2+c^2+d^2)$. In other words, normalization condition
re-scales the quadrangle, convex or crossed, so that the
circumradius always lies in the required region. Consequently, in
constructing quadrangles we can neglect the normalization
condition and consider four free parameters $a,b,c,d$.

\subsection{Existence of circumcircle.}

It is known that four sides $a,b,c,d$ of the convex quadrangle
must obey the inequality $p-l>0$. Any set of such parameters forms
a cyclic quadrilateral. Note that the quadrangle is not unique as
the sides can be arranged in different orders. But all these
quadrangles have the same circumcircle and the circumradius is
unique.

The sides of a crossed-quadrangle must obey the same condition.
Indeed, from Fig.1A it follows that $BC-AB<AC<AD+DC$ and
$DC-AD<AC<AB+BC$. Therefore $AB+AD+DC>BC$ and $AB+BC+AD>DC$. The
sides $BC$ and $DC$ are two largest sides and consequently
$p-l>0$. However, the existence of the circumcircle requires an
additional condition and it is explained here. The relation
$r_3=2\mu\cos ABC$ forces $4\mu^2\geq r_3^2$ and, therefore

\begin{equation}\label{ad.posden}
S_\times^2\geq0.
\end{equation}

\noindent Thus the denominator in Eq.(\ref{cr.radius}) must be
positive. On the other hand the inequality $AC^2\geq0$ forces a
positive numerator of the same fraction

\begin{equation}\label{ad.posnum}
(ac-bd)(bc-ad)(ab-cd)\geq0.
\end{equation}

These two inequalities impose conditions on parameters $a,b,c,d$.
For the future considerations, we need to write explicitly the
condition imposed by inequality (\ref{ad.posnum}). The numerator
is a symmetric function on parameters $a,b,c,d$ and it suffices to
analyze only the case $a\geq b\geq c\geq d$. Obviously
$(ac-bd)\geq0,\,(ab-cd)\geq0$ and it remains  the constraint
$bc\geq ad$. The last inequality states that the product of the
largest and smallest coefficients must not exceed the product of
remaining coefficients. Denote by $s$ the smallest coefficient

\begin{equation}\label{ad.small}
s=\min(a,b,c,d).
\end{equation}

We can summarize all cases as follows

\begin{equation}\label{ad.largesmal}
l^2s^2\leq abcd.
\end{equation}

This is necessary but not sufficient condition for the existence
of $R_\times$. The next condition $S_\times^2>0$ we do not analyze
because the first condition (\ref{ad.largesmal}) suffices to
separate the validity domains.

\subsection{Separation of validity domains.}

In this section we define applicable domains of expressions
(\ref{cq.eigfin}),(\ref{cr.eigfin}) and (\ref{lc.larg}) step by
step.

\smallskip

\paragraph{Circumradius of convex quadrangle.} First we separate
the validity domains between the convex quadrangle and the largest
coefficient. In a highly entangled region, where the center of
circumcircle lies inside the quadrangle, the circumradius is
greater than any of sides and yield a correct answer. This
situation is changed when the center lies on the largest side of
the quadrangle and both equations (\ref{cq.eigfin}) and
(\ref{lc.larg}) give equal answers. Suppose that the side $a$ is
the largest one and the center lies on the side $a$. A little
geometrical speculation yields

\begin{equation}\label{ad.con-lar-a}
a^2=b^2+c^2+d^2+2\frac{bcd}{a}.
\end{equation}

From this equation we deduce that if $a^2$ is smaller than r.h.s.,
{\it i.e.}

\begin{equation}\label{ad.dom-lar-a}
a^2\leq b^2+c^2+d^2+2\frac{bcd}{a},
\end{equation}

\noindent then the circumradius-formula is valid. If $a^2$ is
greater than r.h.s in Eq.(\ref{ad.con-lar-a}), then the largest
coefficient formula is valid. The inequality (\ref{ad.dom-lar-a})
also guarantees the existence of the cyclic quadrilateral. Indeed,
using the inequality

\begin{equation}\label{ad.inter}
bc+cd+bd\geq3\frac{bcd}{a},
\end{equation}

\noindent one derives

\begin{equation}\label{ad.cycl}
(b+c+d)^2\geq b^2+c^2+d^2+\frac{6bcd}{a}\geq a^2.
\end{equation}

\noindent  Above inequality ensures the existence of a convex
quadrangle with the given sides.

To get a confidence, we can solve equation $u_k=\pm1$ using the
relation (\ref{ad.con-lar-a}). However, it is more transparent to
factorize it as following:

\begin{subequations}\label{ad.board1}
\begin{eqnarray}\label{ad.board1+}
(4\omega^2-r_3^2)(1+u_k) &=&
\frac{2ad}{bc+ad}\left(b^2+c^2+d^2+\frac{2bcd}{a}-a^2\right)
\nonumber\\&\times&\left(a^2+b^2+c^2+\frac{2abc}{d}-d^2\right)
\end{eqnarray}
\begin{eqnarray}\label{ad.board1-}
(4\omega^2-r_3^2)(1-u_k) &=& \frac{2bc}{bc+ad}
\left(a^2+c^2+d^2+\frac{2acd}{b}-b^2\right)
\nonumber\\&\times&\left(a^2+b^2+d^2+\frac{2abd}{c}-c^2\right).
\end{eqnarray}
\end{subequations}

Similarly, we have

\begin{subequations}\label{ad.board2}
\begin{eqnarray}\label{ad.board2+}
(4\omega^2-r_3^2)(1+v_k) &=&
\frac{2bd}{ac+bd}\left(a^2+c^2+d^2+\frac{2acd}{b}-b^2\right)
\nonumber\\&\times&\left(a^2+b^2+c^2+\frac{2abc}{d}-d^2\right)
\end{eqnarray}
\begin{eqnarray}\label{ad.board2-}
(4\omega^2-r_3^2)(1-v_k) &=&
\frac{2ac}{ac+bd}\left(b^2+c^2+d^2+\frac{2bcd}{a}-a^2\right)
\nonumber\\&\times&\left(a^2+b^2+d^2+\frac{2abd}{c}-c^2\right).
\end{eqnarray}
\end{subequations}

Thus, the circumradius of the convex quadrangle gives a correct
answer if all brackets in the above equations are positive.  In
general, Eq.(\ref{cq.eigfin}) is valid if

\begin{equation}\label{ad.con-lar}
l^2\leq\frac{1}{2}+\frac{abcd}{l^2}.
\end{equation}

When one of parameters vanishes, i.e. $abcd=0$, inequality
(\ref{ad.con-lar}) coincides with the corresponding condition in
Ref.\cite{analyt}.

\smallskip

\paragraph{Circumradius of crossed quadrangle.} Next we
separate the validity domains between the convex and the crossed
quadrangles. If $S_\times^2<0$, then crossed one has no
circumcircle and the only choice is the circumradius of the convex
quadrangle. If $S_\times^2>0$, then we use the equality

\begin{equation}\label{ad.dif}
4R^2_q-4R^2_\times=\frac{r}{2}\frac{abcd}{S^2_qS^2_\times}
\end{equation}

\noindent where $r=r_1r_2r_3$. It shows that $r>0$ yields
$R_q>R_\times$ and vice-versa. Entanglement eigenvalue always
takes the maximal value. Therefore, $\lm=4R_q^2$ if $r>0$ and
$\lm=4R_\times^2$ if $r<0$. Thus $r=0$ is the separating surface
and it is necessary to analyze the condition $r<0$.

Suppose $a\geq b\geq c\geq d$. Then $r_2$ and $r_3$ are positive.
Therefore $r$ is negative if and only if $r_1$ is negative, which
implies

\begin{equation}\label{ad.aplusd}
a^2+d^2>b^2+c^2.
\end{equation}

Now suppose $a\geq d\geq b\geq c$. Then $r_1$ is negative    and
$r_3$ is positive. Therefore $r_2$ must be positive, which implies

\begin{equation}\label{ad.aplusc}
a^2+c^2>b^2+d^2.
\end{equation}

It is easy to see that in both cases left hand sides contain the
largest and smallest coefficients. This result can be generalized
as follows: $r\leq0$ if and only if

\begin{equation}\label{ad.geq}
l^2\geq \frac{1}{2}-s^2.
\end{equation}

It remains to separate the validity domains between the
crossed-quadrangle and the largest coefficient. We can use three
equivalent ways to make this separation:

\smallskip

1)to use the geometric picture and to see when $4R_\times^2$ and
$l^2$ coincide,

2)directly factorize equation $u_k=\pm1$,

3)change the sign of the parameter $d$.

\smallskip

All of these give the same result stating that
Eq.(\ref{cr.eigfin}) is valid if

\begin{equation}\label{ad.leq}
l^2\leq\frac{1}{2}-\frac{abcd}{l^2}.
\end{equation}

Inequalities (\ref{ad.geq}) and (\ref{ad.leq}) together yield

\begin{equation}\label{ad.togeth}
l^2s^2\geq abcd.
\end{equation}

This inequality is contradicted by (\ref{ad.largesmal}) unless
$l^2s^2=abcd$. Special cases like $l^2s^2=abcd$ are considered in
the next section. Now we would like to comment the fact that
crossed quadrangle survives only in exceptional cases. Actually
crossed case can be obtained from the convex cases by changing the
sign of any parameter. It crucially depends on signs of parameters
or, in general, on phases of parameters. On the other hand all
phases in Eq.(\ref{w.psi.shared}) can be eliminated by
LU-transformations. For example, the phase of $d$ can be
eliminated by redefinition of the phase of the state function
$|\psi\ra$ and the phases of remaining parameters can be absorbed
in the definitions of basis vectors $|1\ra$ of the qubits A, B and
C. Owing to this entanglement eigenvalue being LU invariant
quantity does not depend on phases. However, crossed case is
relevant if one considers states given by Generalized Schmidt
Decomposition(GSD) \cite{acin}. In this case phases can not be
gauged away and crossed case has its own range of definition. This
range has shrunk to the separating surface $r=0$ in our case.

\medskip

Now we are ready to present a distinct separation of the validity
domains:

\begin{equation}\label{ad.full}
\lm=
\begin{cases}
 \enskip 4R_q^2, &{\rm if}\quad l^2\leq1/2+abcd/l^2\cr
 \enskip l^2 &{\rm if}\quad l^2\geq1/2+abcd/l^2
\end{cases}
\end{equation}

As an illustration we present the plot of $d$-dependence of $\lm$
in Fig.2 when $a=b=c$.

\begin{figure}
\includegraphics[width=10cm]{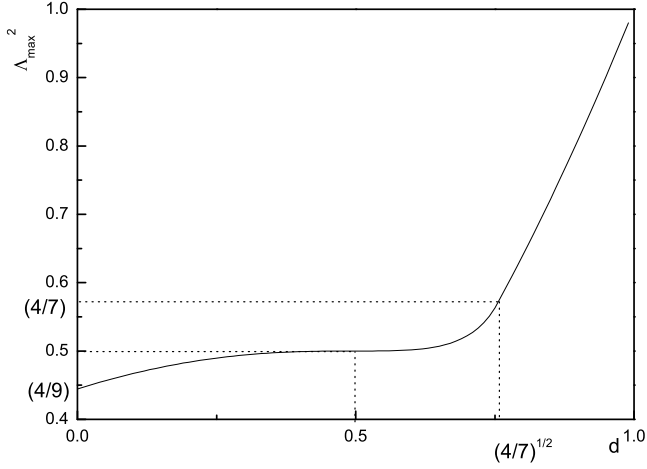}
\caption{\label{sim.eps}Plot of $d$-dependence of
$\Lambda_{max}^2$ when $a=b=c$. When $d\to1$, $\lm $ goes to $1$
as expected. When $d=0$, $\lm$ becomes $4/9$, which coincides with
the result of Ref.\cite{wei-03}. When $r=0$ which implies $a = d =
1/2$, $\lm$ becomes $1/2$ (it is shown as dotted line). When $d =
2 a$, which implies $d = \sqrt{4/7}$, $\lm$ goes to $4/7$, which
is one of shared states (it is also shown as another dotted
line).}
\end{figure}

We have distinguished three types of quantum states depending on
which expression takes entanglement eigenvalue. Also there are
states that lie on surfaces separating different applicable
domains. They are shared by two types of quantum states and may
have interesting features. We will call those shared states. Such
shared states are considered in the next section.

\section{Shared States.}

Consider quantum states for which both convex and crossed
quadrangles yield the same entanglement eigenvalue.
Eq.(\ref{cr.ab}) is not applicable and we rewrite equations
(\ref{cq.radius}) and (\ref{cr.radius}) as follows

\begin{equation}\label{ss.con-cr}
4R_q^2=\frac{1}{2}\left(1-\frac{r}{16S^2_q}\right),\quad
4R_\times^2=\frac{1}{2}\left(1-\frac{r}{16S^2_\times}\right).
\end{equation}

These equations show that if the state lies on the separating
surface $r=0$, then entanglement eigenvalue is a constant

\begin{equation}\label{ss.shar1}
\lm=\frac{1}{2}
\end{equation}

\noindent and does not depend on the state parameters. This fact
has a simple interpretation. Consider the case $r_1=0$. Then
$b^2+c^2=a^2+d^2=1/2$ and the quadrangle consists of two right
triangles. These two triangles have a common hypotenuse and legs
$b,c$ and $a,d$, respectively, regardless of the triangles being
in the same semicircle or in opposite semicircles. In both cases
they yield same circumradius. Decisive factor is that the center
of the circumcircle lies on the diagonal. Thus the perimeter and
diagonals of the quadrangle divide ranges of definition of the
convex quadrangle. When the center of circumcircle passes the
perimeter, entanglement eigenvalue changes-over from convex
circumradius to the largest coefficient. And if the center lies on
the diagonal, convex and crossed circumradiuses become equal.

We would like to bring plausible arguments that this picture is
incomplete and there is a region that has been shrunk to the
point. Consider three-qubit state given by GSD

\begin{equation}\label{ss.gsd}
|\psi\ra=a|100\ra+b|010\ra+b|001\ra+d|111\ra+e|000\ra.
\end{equation}

One of parameters must have non-vanishing phase\cite{acin} and we
can treat this phase as an angle. Then, we have five sides and an
angle. This set defines a  sexangle that has circumcircle. One can
guess that in a highly entangled region entanglement eigenvalue is
the circumradius of the sexangle. However, there is a crucial
difference. Any convex sexangle contains a star type area and the
sides of this area are the diagonals of the sexangle. The
perimeter of the star separates the convex and the crossed cases.
Unfortunately, we can not see this picture in our case because the
diagonals of a quadrangle confine a single point. It is left for
future to calculate the entanglement eigenvalues for arbitrary
three qubit states and justify this general picture.

Shared states given by $r=0$ acquire new properties. They can be
used for perfect teleportation and superdense coding
\cite{analyt,agr}. This statement is not proven clearly, but also
no exceptions are known.

\medskip

Now consider a case where the largest coefficient and circumradius
of the convex quadrangle coincide with each other. The separating
surface is given by

\begin{equation}\label{ss.con-lar}
l^2=\frac{1}{2}+\frac{abcd}{l^2}.
\end{equation}

Entanglement eigenvalue ranges within the narrow interval

\begin{equation}\label{ss.shar2}
\frac{1}{2}\leq\lm\leq\frac{4}{7}.
\end{equation}

It separates slightly and highly entangled states. When one of
coefficients is large enough and satisfies the relation
$l^2>1/2+abcd/l^2$, entanglement eigenvalue takes a larger
coefficient. And the expression (\ref{w.psi.shared}) for the state
function effectively takes the place of Schmidt decomposition. In
highly entangled region no similar picture exists and all
coefficients participate in equal parts and yield the
circumradius. Thus, shared states given by Eq.(\ref{ss.con-lar})
separate slightly entangled states from highly entangled ones, and
can be ascribed to both types.

What is the meaning of these states? Shared states given by $r=0$
acquire new and important features. One can expect that shared
states dividing highly and slightly entangled states also must
acquire some new features. However, these features are yet to be
discovered.

\section{Conclusions}

We have considered four-parametric families of three qubit states
and derived explicit expressions for entanglement eigenvalue. The
final expressions have their own geometrical interpretation. The
result in this paper with the results of Ref.\cite{analyt} show
that the geometric measure has two visiting cards: the
circumradius and the largest coefficient. The geometric
interpretation may enable us to predict the answer for the states
given by GSD. If the center of circumcircle lies in star type area
confined by diagonals of the sexangle, then entanglement
eigenvalue is the circumradius of the crossed sexangle(s). If the
center lies in the remaining part of sexangle, the entanglement
eigenvalue is the circumradius of the convex sexangle. And when
the center passes the perimeter, then entanglement eigenvalue is
the largest coefficient. Although we cannot justify our prediction
due to lack of computational technique, this picture surely
enables us to take a step toward a deeper understanding of the
entanglement measure.

Shared states given by $r=0$ play an important role in quantum
information theory. The application of shared states given by
Eq.(\ref{ss.con-lar}) is somewhat questionable, and should be
analyzed further. It should be pointed out that one has to
understand the properties of these states and find the possible
applications. We would like to investigate this issue elsewhere.

Finally following our procedure, one can obtain the nearest
product state of a given three-parametric W-type state. These two
states will always be separated by a line of densities composed of
the convex combination of W-type states and the nearest product
states \cite{pit}. There is a separable density matrix $\varrho_0$
which splits the line into two parts as follows. One part consists
of separable densities and another part consists of  non-separable
densities. It was shown in Ref.\cite{pit} that an operator
$W=\varrho_0-\rho^{ABC}-\tr[\varrho_0(\varrho_0-\rho^{ABC})]I$ has
the properties $\tr(W\rho^{ABC})<0$, and $\tr(W\varrho)\geq0$ for
the arbitrary separable state $\varrho$. The operator $W$ is
clearly Hermitian and thus is an entanglement witness for the
state. Thus our results allow oneself to construct the
entanglement witnesses for W-type three qubit states. However, the
explicit derivation of $\varrho_0$ seems to be highly non-trivial.


\chapter[Entanglement of n-qubit W states]
{Duality and the geometric measure of entanglement of general multiqubit W states}\label{duality}

\renewcommand{\ra}{\rangle}
\renewcommand{\la}{\langle}
\renewcommand{\tr}{\mathrm{tr}}
\renewcommand{\s}[1]{\sqrt{#1}}
\renewcommand{\uv}{{\bm u}}
\renewcommand{\rv}{{\bm r}}
\newcommand{\e}{\mathrm{e}}
\newcommand{\w}{\mathrm{W}}
\renewcommand{\ket}[1]{\vert #1 \ra}
\renewcommand{\bra}[1]{\la #1 \vert}
\renewcommand{\bk}[2]{\la #1 \vert #2 \ra}
\renewcommand{\kb}[2]{\vert #1 \ra \la #2 \vert}
\renewcommand{\ov}[2]{\left\la #1 | #2 \right\ra}
\renewcommand{\(}{\left(}
\renewcommand{\)}{\right)}
\newcommand{\half}{\tfrac{1}{2}}
\renewcommand{\t}{\theta}
\renewcommand{\v}{\varphi}
\renewcommand{\o}{\otimes}
\newcommand{\<}{\langle}
\renewcommand{\>}{\rangle}
\newcommand{\hlf}[1]{\tfrac{#1}{2}}
\newcommand{\third}{\tfrac{1}{3}}
\newcommand{\thrd}[1]{\tfrac{#1}{3}}
\newcommand{\quarter}{\tfrac{1}{4}}
\newcommand{\eighth}{\tfrac{1}{8}}
\newcommand{\qqquad}{\quad\quad\quad}
\newcommand{\qqqquad}{\quad\quad\quad\quad}
\newcommand{\imp}{\Rightarrow}
\newcommand{\impl}{\Longrightarrow}
\newcommand{\R}{\mathbb{R}}
\newcommand{\x}{\mathbf{x}}
\newcommand{\thmref}[1]{Theorem~\ref{#1}}
\newcommand{\secref}[1]{Section~\ref{#1}}
\newcommand{\lemref}[1]{Lemma~\ref{#1}}
\newcommand{\propref}[1]{Proposition~\ref{#1}}
\newcommand{\be}{\begin{equation}}
\newcommand{\ee}{\end{equation}}
\theoremstyle{plain} 
\newtheorem{lemma}{Lemma}
\newtheorem{theorem}{Theorem}

In this chapter we find the nearest product states and geometric
measure of entanglement for arbitrary generalized W states of $n$
qubits~\cite{dual,toward}.

Quantifying entanglement of multipartite pure states presents a
real challenge to physicists. Intensive studies are under way and
different entanglement measures have been proposed over the
years~\cite{shim-95,vedr-97,ben-schum,vp-02,ben-rain,negat,robust}.
However, it is generally impossible to calculate their value
because the definition of any multipartite entanglement measure
usually includes a massive optimization over certain quantum
protocols or states~\cite{woot-98,analyt,iso}.

Inextricable difficulties of the optimization are rooted in a
tangle of different obstacles. First, the number of entanglement
parameters grows exponentially with the number of particles
involved~\cite{lindenprl}. Second, in the multipartite setting
several inequivalent classes of entanglement
exist~\cite{Chir,four}. Third, the geometry of entangled regions
of robust states is complicated~\cite{shared}. All of these make
the usual optimization methods
ineffective~\cite{sud-geom,shared,wei-guh}. Concise and elegant
tools are required to overcome this problem.

A widely used measure for multipartite systems is the maximal
product overlap $\Lambda_{max}$. In what follows states with
$\lm>1/2$ are referred to as slightly entangled, states with
$\lm<1/2$ are referred to as highly entangled and states with
$\lm=1/2$ are referred to as shared quantum states. In this
chapter we show how to calculate the maximal product overlap of an
arbitrary W state~\cite{Chir}. The method is to establish a
one-to-one correspondence between highly entangled W states and
their nearest product states.

Consider first generalized Greenberger-Horne-Zeilinger
states~\cite{ghz}, i.e.\ states that can be written
$|\mathrm{GHZ}\> = a|0\ldots0\> + b|1\ldots1\>$ in some product
basis. Such states are fragile under local decoherence, i.e.\ they
become disentangled by the loss of any one party, and they are not
highly entangled in the sense defined above. The geometric measure
of these states is computed easily since the maximal overlap
simply takes the value of the modulus of the larger coefficient,
$|a|$ or $|b|$ \cite{Shim-shor}. Accordingly, the nearest separable
state is the product state with the larger coefficient. Thus many
generalized GHZ states with different maximal overlaps can have
the same nearest product state.

Consider now generalized W-states~\cite{par-08}, which can be
written
\begin{equation}\label{0.w}
\ket{\w_n}=c_1\ket{100...0} + c_2\ket{010...0} + \cdots +
c_n\ket{00...01}.
\end{equation}
Without loss of generality we consider only the case of positive
parameters $c_k$ since the phases of the coefficients $c_k$ can be
eliminated by redefinitions of local states
$\ket{1_k},\,k=1,2,...,n$. The states \eqref{0.w} are robust
against decoherence~\cite{raz-02}, i.e.\ loss of any $n-2$ parties
still leaves them in a bipartite entangled state. Surprisingly, if
the state is slightly entangled, then we have the same situation
as for generalized GHZ states: the maximal overlap is the largest
coefficient and, as before, many states can have the same nearest
product state~\cite{toward}. However, the situation is changed
drastically when the state is highly entangled. The calculation of
the maximal overlap in this case is a very difficult problem and
the maximization has been performed only for relatively simple
systems~\cite{wei-03,sud-geom,analyt,Shim-shor,toward,hay-sym,mgbbb,zch}.

On the other hand, different highly entangled W-states have
different nearest product states. This makes it possible to map
the W-state to its nearest product state and quickly obtain its
geometric measure of entanglement. More precisely, we construct
two bijections. The first one creates a map between highly
entangled $n$-qubit W states and $n$-dimensional unit vectors
$\x$. The second one does the same between  $n$-dimensional unit
vectors and $n$-part product states. Thus we obtain a double map,
or {\it duality}, as follows
\begin{equation}\label{bijec}
 \ket{\w_n}\; \leftrightarrow\; \x\; \leftrightarrow\;
\ket{u_1}\o\ket{u_2}\o\cdots\o\ket{u_n}.
\end{equation}

The main advantage of the map is that if one knows any of the
three vectors, then one instantly finds the other two.

This chapter is organized as follows. In Section 4.1 we construct a classifying map.
In Section 4.2 we consider highly entangled multi-qubit W states. In Section 4.3 we derive a
closed-form expression for the maximal overlap of n-qubit W states. In Section 4.4 we summarize
our results.

\medskip

\section{Classifying map.}

Now we prove a theorem that provides a basis for the map.

\begin{theorem}  \label{Wtou} Let $\ket{\w_n}$ be an arbitrary W state
\eqref{0.w} with non-negative real coefficients $c_i$, and let
$\ket{u_1}\o\ket{u_2}\o\cdots\o\ket{u_n}$ be its nearest product
state. Then the phase of $\ket{u_k}$ can be chosen so that
 $$\ket{u_k}=\sin\t_k\ket{0}+\cos\t_k\ket{1},\, 0\leq\t_k\leq\frac{\pi}{2}, \,
k=1,2,...,n.$$ where
\begin{equation}\label{0.dircos}
\cos^2\t_1+\cos^2\t_2+\cdots+\cos^2\t_n=1.
\end{equation}
\end{theorem}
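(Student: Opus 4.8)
The plan is to write the stationarity equations \eqref{1.stat-eq} explicitly for the W state \eqref{0.w} and read off the structure of the local states. First I would parametrize each nearest-product factor as a generic normalized qubit $\ket{u_k}=\alpha_k\ket{0}+\beta_k\ket{1}$ and compute the partial inner product $\bk{u_1\cdots\widehat{u_k}\cdots u_n}{\w_n}$. Because each basis ket in $\ket{\w_n}$ has exactly one qubit in state $\ket{1}$ and all others in $\ket{0}$, this partial overlap is a linear combination of $\ket{0}$ and $\ket{1}$ whose $\ket{1}$-coefficient comes only from the term $c_k\ket{0\cdots1_k\cdots0}$ and whose $\ket{0}$-coefficient collects the remaining terms $c_j$ with $j\neq k$. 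Setting this equal to $\Lambda_{max}\ket{u_k}$ gives, for each $k$, a pair of scalar equations relating $\alpha_k,\beta_k$ to the $\alpha_j,\beta_j$ ($j\neq k$) and to $\Lambda_{max}$.

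The key observation is a phase/sign argument: since $\Lambda_{max}>0$ and the $c_i$ are non-negative reals, one can choose the global phase of each $\ket{u_k}$ so that all the amplitudes are real and non-negative. Concretely, the $\ket{1}$-component equation for qubit $k$ reads $c_k\prod_{j\neq k}\alpha_j=\Lambda_{max}\beta_k$, and the $\ket{0}$-component equation reads $\big(\sum_{j\neq k}c_j\beta_j\prod_{\ell\neq j,k}\alpha_\ell\big)=\Lambda_{max}\alpha_k$; with all $c_i\geq0$ these force a consistent choice of signs making every $\alpha_k,\beta_k\geq0$. Writing $\alpha_k=\sin\t_k$, $\beta_k=\cos\t_k$ with $0\leq\t_k\leq\pi/2$ then puts $\ket{u_k}$ in the stated form.

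To obtain the constraint \eqref{0.dircos} I would use the eigenvalue relation together with the value of the maximal overlap. Taking the inner product of \eqref{1.stat-eq} with $\ket{u_k}$ gives $\bk{u_1u_2\cdots u_n}{\w_n}=\Lambda_{max}$ for every $k$ (this is just the consistency of the stationary point), and expanding the left side yields $\Lambda_{max}=\sum_{j}c_j\cos\t_j\prod_{\ell\neq j}\sin\t_\ell$. The cleanest route is then to multiply the $k$-th $\ket{1}$-component equation $c_k\prod_{j\neq k}\sin\t_j=\Lambda_{max}\cos\t_k$ by $\cos\t_k$ and sum over $k$: the right side becomes $\Lambda_{max}\sum_k\cos^2\t_k$, while the left side becomes $\sum_k c_k\cos\t_k\prod_{j\neq k}\sin\t_j=\Lambda_{max}$ by the overlap identity just derived. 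Cancelling $\Lambda_{max}\neq0$ gives $\sum_k\cos^2\t_k=1$.

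The main obstacle I anticipate is the sign/phase bookkeeping in the second step: one must argue carefully that a single choice of phases simultaneously makes all $2n$ amplitudes non-negative and is compatible with all $n$ pairs of stationarity equations, rather than just treating each qubit in isolation. The subtlety is that the $\ket{0}$-component equations couple the qubits, so the naive "fix each phase independently" argument needs the non-negativity of the $c_i$ and the positivity of $\Lambda_{max}$ to close; I would handle this by first showing $\beta_k>0$ is impossible to have mixed signs (from the $\ket{1}$-equations, since $\prod_{j\neq k}\alpha_j$ can be taken positive), then propagating to the $\alpha_k$. Everything else is routine linear algebra once the parametrization is in place.
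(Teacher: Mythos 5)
Your proposal is correct and takes essentially the same route as the paper: write the stationarity equations for the W state component-wise in the computational basis, use the non-negativity of the $c_i$ (and positivity of $\Lambda_{max}$) to argue that the local amplitudes can be taken real and non-negative, and then obtain \eqref{0.dircos} by an elementary algebraic combination of the component equations. Your closing step (summing the $|1\rangle$-component equations weighted by $\cos\theta_k$ and comparing with the total overlap) is only a cosmetic variant of the paper's manipulation (multiplying the $|0\rangle$-component equation by $\sin\theta_k$ and substituting the $|1\rangle$-component equations), and the phase bookkeeping you flag is handled in the paper by first making the $|0\rangle$-components real, whence all $|1\rangle$-phases equal the common value $-\arg(\Lambda_{max})$ and can be set to zero.
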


\begin{proof} The nearest product state is a stationary point for the overlap
with $\ket{\w_n}$, so the states $\ket{u_k}$ satisfy the nonlinear
eigenvalue equations \cite{wei-03,hig,analyt}
\begin{equation}\label{0.stat-eq}
\bk{u_1u_2\cdots\widehat{u_k}\cdots
u_n}{\w_n}=\Lambda_{max}\ket{u_k};\;k=1,2,\cdots,n
\end{equation}
where the caret means exclusion. We can choose the phase of
$\ket{u_k}$ so that $ \ket{u_k} = \sin\t_k\ket{0} +
\e^{i\phi_k}\cos\t_k\ket{1}, $ and then \eqref{0.stat-eq} gives
the pair of equations
\begin{subequations}\label{0.eqn}
\begin{equation}\label{0.eqcos}
c_k\prod_{j\neq k}\sin\t_j = \Lambda_{max}\e^{i\phi_k}\cos\t_k,
\end{equation}
\begin{equation}\label{0.eqsin}
\sum_{l\neq k}\e^{-i\phi_l}c_l\cos\t_l\prod_{j\neq k,l}\sin\t_j =
\Lambda_{max}\sin\t_k.
\end{equation}
\end{subequations}
Eq. \eqref{0.eqcos} shows that $\Lambda_{max}\e^{i\phi_k}$ is
real, so $\phi_k = -\arg(\Lambda_{max})$ is independent of $k$.
Then the modulus of the overlap $|\<u_1\cdots u_n|\w_n\>|$ is
independent of $\phi$, so we can assume that $\phi = 0$. Now
multiplying eq.(\ref{0.eqsin}) by $\sin\t_k$ and using
eq.(\ref{0.eqcos}) gives Eq.(\ref{0.dircos}).
\end{proof}
Thus the angles $\cos\t_k$ define a unit $n$-dimensional Euclidean
vector $\x$. We can also define a length $r$ as follows. From
Eq.(\ref{0.eqcos}) it follows that the ratio $\sin2\t_k/c_k$ does
not depend on $k$. If this ratio is non-zero we can define
\begin{equation}\label{0.rmod}
\frac{1}{r} \equiv \frac{\sin2\t_1}{c_1} = \frac{\sin2\t_2}{c_2} =
\cdots = \frac{\sin2\t_n}{c_n}.
\end{equation}

\smallskip

\section{Highly entangled W states.}

Equations (5) admit a trivial solution
$\sin2\t_k=0,\,k=1,2,\cdots,n$ and a special solution with nonzero
values of all sines. The trivial solution gives the largest
coefficient of $|\w_n\>$ for the maximal overlap and is valid for
slightly entangled states. We consider them later and now focus on
the special solutions. From Eq.(\ref{0.rmod}) it follows that
\begin{equation}\label{1.cos}
\cos^2\t_k=\frac{1}{2}\(1\pm\sqrt{1-\frac{c_k^2}{r^2}}\),\;k=1,2,\cdots,n.
\end{equation}
The plus sign means that $\cos2\t_k > 0$. Then from
Eq.(\ref{0.dircos}) it follows that this is possible for at most
one angle; specifically, we prove that if $\cos2\t_k>0$ for some
$k$, then $c_k$ is the largest coefficient in Eq.(\ref{0.w}).
Suppose $\cos2\t_k>0$ but $c_k$ is not the largest coefficient and
there exists a greater coefficient, say $c_l$. Then from
Eq.(\ref{0.rmod}) it follows that $\sin2\t_l>\sin2\t_k>0$ and
consequently $|\cos2\t_l|<|\cos2\t_k|$. Now we rewrite
Eq.(\ref{0.dircos}) as follows:
\begin{equation}\label{1.2theta}
-\cos2\t_1-\cos2\t_2-\cdots-\cos2\t_n=n-2.
\end{equation}
From $|\cos2\t_l|<|\cos2\t_k|$ and $\cos2\t_k>0$ it follows that
$-\cos2\t_k-\cos2\t_l<0$  which is in contradiction with
Eq.(\ref{1.2theta}). Thus $c_k$ must be the largest coefficient.

Without loss of generality we assume that $0 \le c_1 \le \cdots
\le c_n$. Then in \eqref{1.cos} we must take the $-$ sign for $k =
1,\ldots,n-1$ and \eqref{0.dircos} becomes
\begin{equation}\label{1.r}
\s{1 - \frac{c_1^2}{r^2}}  + \cdots + \s{1 -
\frac{c_{n-1}^2}{r^2}} \pm \s{1 - \frac{c_n^2}{r^2}} = n-2
\end{equation}
We will denote the left-hand sides of these equations as
$f_\pm(r)$. We also use $f_0(r)$ to denote this expression without
the last term. The function $r(c_1,c_2,...,c_n)$ defined by
$f_+(r)=n-2$ is a completely symmetric function of the state
parameters  $c_k$. In contrast, the function defined by
$f_-(r)=n-2$  is an asymmetric function since its dependence on
the maximal coefficient $c_n$ is different. Thus in equation
\eqref{1.r} the upper and lower signs describe symmetric and
asymmetric entangled regions of highly entangled states,
respectively.

For highly entangled states, eqs. \eqref{1.r}$_\pm$ uniquely
define $r$ as a function of the state parameters $c_k$. More
precisely,
\begin{theorem}\label{rsolutions}
There are two critical values $r_1$ and $r_2$ of the largest
coefficient $c_n$, i.e.\ functions of $c_1,\ldots,c_{n-1}$ such
that
\begin{enumerate}
\item If $c_n \le r_1$, there is a unique solution of
(\ref{1.r}$_+$) and no solution of (\ref{1.r}$_-$); \item If $c_n
= r_1$, both (\ref{1.r}$_+$) and (\ref{1.r}$_-$) have a unique
solution, the same for both; \item If $r_1 < c_n \le r_2$, there
is no solution of (\ref{1.r})$_+$ and a unique solution of
(\ref{1.r}$_-$); \item If $c_n > r_2$, neither (\ref{1.r}$_+$) nor
(\ref{1.r}$_-$) has a solution. In this case the state $|\w_n\>$
is slightly entangled.
\end{enumerate}
\end{theorem}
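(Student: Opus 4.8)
\medskip
\noindent\textbf{Proof plan.}\quad With the ordering $0\le c_1\le\cdots\le c_n$ in force, fix $c_1,\dots,c_{n-1}$ and treat $c_n\ge c_{n-1}$ as the variable; all square roots in \eqref{1.r} are real precisely on $r\in[c_n,\infty)$, so that is where roots are sought. The building block is that each $r\mapsto\sqrt{1-c_k^2/r^2}$ is strictly increasing on $[c_k,\infty)$, so $f_0$ and $f_+$ are strictly increasing on $[c_n,\infty)$, rising from $f_0(c_n)$ to $n-1$ and to $n$ respectively. I would introduce $r_1$ as the unique root of $f_0(r)=n-2$ — it exists and lies in $[c_{n-1},\infty)$ because the $k=n-1$ term of $f_0(c_{n-1})$ vanishes, so $f_0(c_{n-1})\le n-2<n-1=f_0(\infty)$ — and put $r_2:=\sqrt{c_1^2+\cdots+c_{n-1}^2}$. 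A preliminary lemma is $c_{n-1}\le r_1\le r_2$: the left inequality is $f_0(c_{n-1})\le n-2=f_0(r_1)$ plus monotonicity, and $r_1\le r_2$ follows from $f_0(r_2)\ge n-2$, which is the pointwise estimate $1-\sqrt{1-x}\le x$ on $[0,1]$ summed against $c_k^2/r_2^2$.

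The statement about $f_+$ is then immediate: $f_+(r)=n-2$ has a root in $[c_n,\infty)$ — necessarily unique, by strict monotonicity — if and only if $f_+(c_n)=f_0(c_n)\le n-2$, i.e.\ if and only if $c_n\le r_1$. This settles the $f_+$ clauses of all four cases at once, and at $c_n=r_1$ the common root is $r=r_1$, where the last term of \eqref{1.r} vanishes, so $r_1$ is simultaneously a root of the $f_-$ equation.

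For the $f_-$ equation, $f_-(r)=f_0(r)-\sqrt{1-c_n^2/r^2}$ is a difference of increasing functions and need not be monotone, so I would recast it. On $[c_n,\infty)$, $f_-(r)=n-2$ is equivalent to $\sqrt{1-c_n^2/r^2}=f_0(r)-(n-2)$; the right side is nonnegative exactly when $r\ge r_1$ and is always $<1$, so squaring introduces no spurious roots and the equation becomes: $r\ge r_1$ and $c_n^2=G(r)$, where $G(r):=r^2\bigl(1-(f_0(r)-(n-2))^2\bigr)$. Since $G(r)\le r^2$, any such $r$ automatically satisfies $r\ge c_n$; and since $G(r)\ge G(r_1)=r_1^2\ge c_{n-1}^2$, the side condition $c_n\ge c_{n-1}$ is automatic too. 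I would then show that $G$ is strictly increasing on $[r_1,\infty)$, with $G(r_1)=r_1^2$ and $\lim_{r\to\infty}G(r)=r_2^2$ (for the limit, setting $\delta(r):=f_0(r)-(n-2)$ and using $1-\sqrt{1-x}=x/(1+\sqrt{1-x})$ shows $r^2(1-\delta(r))\to\tfrac12\sum_{k\le n-1}c_k^2$ and $1+\delta(r)\to2$, so $G(r)\to\sum_{k\le n-1}c_k^2=r_2^2$). Consequently $G$ maps $[r_1,\infty)$ bijectively onto $[r_1^2,r_2^2)$, so the $f_-$ equation has a root if and only if $r_1^2\le c_n^2<r_2^2$, and it is then unique. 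Combining this with the previous paragraph yields exactly the four cases: $c_n<r_1$ gives only a root of $f_+$; $c_n=r_1$ gives one root, shared by $f_+$ and $f_-$; $r_1<c_n<r_2$ gives only a root of $f_-$; and $c_n\ge r_2$ gives neither, so only the trivial solution $\sin2\theta_k=0$ survives and the state is slightly entangled. The single boundary value $c_n=r_2$ is the ``shared'' configuration with $\lm=1/2$, which may be ascribed to either regime.

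The one genuinely nontrivial step is the monotonicity of $G$. Writing $H(r):=r\bigl(f_0(r)-(n-2)\bigr)$, so that $G=r^2-H^2$, and abbreviating $s_k:=\sqrt{1-c_k^2/r^2}\in(0,1]$, a direct differentiation gives $H'(r)=\sum_{k\le n-1}s_k^{-1}-(n-2)>0$ since each $s_k\le1$; thus $H$ increases strictly from $H(r_1)=0$, and $G'(r)=2r-2H(r)H'(r)>0$ is equivalent, after dividing by $2r>0$, to
\[
\Bigl(\textstyle\sum_{k\le n-1}s_k-(n-2)\Bigr)\Bigl(\textstyle\sum_{k\le n-1}s_k^{-1}-(n-2)\Bigr)<1 .
\]
On $[r_1,\infty)$ the first factor equals $1-\varepsilon$ with $\varepsilon:=(n-1)-\sum_{k\le n-1}s_k\in[0,1)$; since $x\mapsto 1/x$ is convex, $\sum_{k\le n-1}s_k^{-1}$, viewed as a function of the deficits $1-s_k$ on the simplex $\sum_{k\le n-1}(1-s_k)=\varepsilon$, is maximal when the whole deficit sits on one coordinate, giving $\sum_{k\le n-1}s_k^{-1}\le(n-2)+(1-\varepsilon)^{-1}$, so the second factor is at most $(1-\varepsilon)^{-1}$ and the product is at most $1$. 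Equality forces at most one of $c_1,\dots,c_{n-1}$ to be nonzero, i.e.\ \eqref{0.w} effectively bipartite; treating those degenerate configurations, and ties $c_n=c_{n-1}$, separately is the remaining bookkeeping. I expect this inequality to be the main obstacle — everything else is monotonicity plus the intermediate value theorem.
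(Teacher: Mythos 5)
Your proposal is correct, and it reaches the theorem by a genuinely different route than the paper. The paper introduces $r_1$ and $r_2$ exactly as you do (root of $f_0=n-2$, and $r_2^2=c_1^2+\cdots+c_{n-1}^2$, with $r_1\le r_2$ via $\sqrt{x}\ge x$), but then reduces the whole theorem to a list of ten properties of $f_0$, $f_\pm$ and $f_-'$ — monotonicity of $f_0,f_+$, limits at large $r$, values at $r=c_n$, and sign statements about $f_-'$ — which are asserted without proof and supported by a figure; in particular, uniqueness of the root of (\ref{1.r}$_-$) in case 3 is not really pinned down by those properties, since $f_-$ is a difference of increasing functions. Your treatment of the $f_+$ clauses coincides with the paper's, but for the minus-sign equation you replace the direct study of the non-monotone $f_-$ by the equivalent reformulation $c_n^2=G(r)$ with $G(r)=r^2\bigl(1-(f_0(r)-(n-2))^2\bigr)$ on $r\ge r_1$, and you prove that $G$ increases strictly from $r_1^2$ to the limit $r_2^2$ via the estimate $\bigl(\sum_k s_k-(n-2)\bigr)\bigl(\sum_k s_k^{-1}-(n-2)\bigr)\le 1$, obtained by maximizing the convex function $\sum_k s_k^{-1}$ at an extreme point of the constraint polytope. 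That single monotonicity lemma buys existence and uniqueness in one stroke, makes the asymptotics at $c_n\to r_2$ (root escaping to infinity) transparent, and is exactly the ingredient the paper leaves implicit. Two remarks: your conclusion that at $c_n=r_2$ the equation (\ref{1.r}$_-$) has no finite solution is actually sharper than the literal wording of case 3 (``$r_1<c_n\le r_2$''), and it agrees with the paper's own later statement that $r\to\infty$ for the shared states with $c_n^2=1/2$, so the boundary should be read as strict; and your deferral of the degenerate ``effectively bipartite'' configurations (at most one nonzero $c_k$ with $k\le n-1$, where $G$ is constant and equality in your inequality occurs) is appropriate, since there the uniqueness claim itself degenerates — just make sure those cases, and the product-state case $c_1=\cdots=c_{n-1}=0$, are explicitly dispatched when you write the bookkeeping out.
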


The value $r_1$ is the solution of $f_0(r_1) = n-2$, which exists
and is unique since $f_0(c_{n-1}) < n-2$ and $f_0(r) \rightarrow n
- 1$ monotonically as $r \rightarrow \infty$; and $r_2$ is defined
by \be r_2^2 = c_1^2 + \cdots + c_{n-1}^2. \ee Then $r_2 \ge r_1$,
for $f_0(r_2)\ge n - 2 = f_0(r_1)$ using $\sqrt{x} \ge x$ for $0
\le x \le 1$. Since $f_0$ is an increasing function of $r,$ it
follows that $r_2 \ge r_1$. Now the theorem follows from the
following properties of the functions $f_\pm(r)$($f'_-$ is the
derivative of $f_-$):

\smallskip
1. $f_0$ and $f_+$ are monotonically increasing functions of $r$.

2. $f_+(r) \rightarrow n$ as $r \rightarrow \infty$.

3. If $c_n \le r_1$, $f_+(c_n) = f_0(c_n) \le f_0(r_1) = n-2$.

4. If $c_n \ge r_1$, then $f_+(r) \ge n-2$ for all $r > r_1$.

5. If $c_n < r_1$, then $f_-(c_n) < n - 2$.

6. If $c_n > r_1$, then $f_-(c_n) > n - 2$.

7. If $c_n < r_2$, then $f_-(r) < n - 2$ for large $r$.

8. If $c_n > r_2$ then $f_-(r) > n - 2$ for large $r$.

9. $f'_-(c_n + \epsilon) < 0$ for small $\epsilon$.

${}$\!\!\!10. If $c_n > r_2$, then $f'_-(r) < 0$ for all $r \ge
c_n$.

\smallskip

These properties are illustrated in Figure \ref{curves}.

\begin{figure}[ht!]
\begin{center}
\includegraphics[width=7cm]{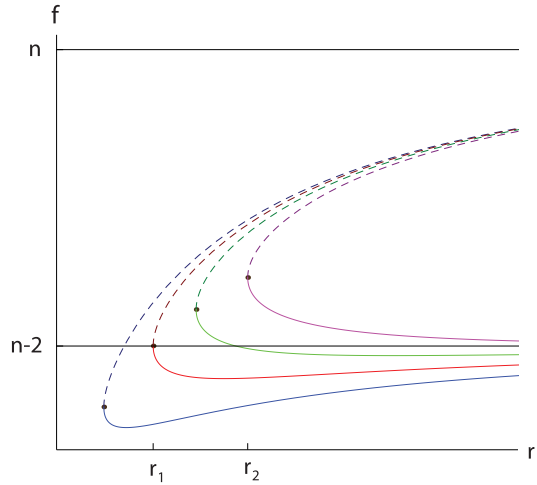}
\caption[fig1]{\label{curves}(Color online) The behaviour of the
functions $f_\pm$ for five-qubit W states. The function $f_+(r)$
(dotted line) and $f_-(r)$ (solid line) are  plotted against $r$
in the four cases $c_n < r_1$, $c_n = r_1$, $r_1 < c_n < r_2$ and
$c_n = r_2$.}
\end{center}
\end{figure}

\section{Geometric measure.}

 We can now identify the nearest product state,
and the largest product state overlap $\Lambda_{max}(|\w_n\>)$,
for any W-state $|\w_n\>$, as follows.

\begin{theorem} \label{nearestproduct} If $c_n \ge 1/2$, the state $|\w_n\>$
defined by \eqref{0.w} is slightly entangled. Its nearest product
state is $|0\ldots 01\>$, with overlap $\Lambda_{max}(|\w_n\>) =
c_n$.

If $c_n \le 1/2$, the state $|\w_n\>$ is highly entangled and has
nearest product state \be\label{product} | u_1\>\ldots|u_n\>
\;\;\text{ where }\;\; |u_k\> = \sin\t_k|0\> + |
\e^{i\phi}\cos\t_k|1\>, \ee with which its overlap is \be
\label{gdef} \Lambda_{max} = 2r\sin\t_1\sin\t_2\ldots\sin\t_n. \ee
Here $r$ is the solution of (\ref{1.r})$_\pm$, whose existence and
uniqueness are guaranteed by Theorem 2; the phase $\phi$ is
arbitrary; and $\t_k$ is given by \eqref{1.cos} with the $-$ sign
for $k = 1,\ldots,n-1$, the $-$ sign for $k = n$ if $r$ satisfies
(\ref{1.r}$_+$), the $+$ sign if $r$ satisfies (\ref{1.r}$_-$).
\end{theorem}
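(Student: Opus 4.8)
The plan is to reduce the theorem to a complete enumeration of the stationary points of the product overlap. Since $F(u_1,\dots,u_n)\equiv|\ov{u_1u_2\cdots u_n}{\w_n}|$ is continuous on the compact manifold $(S^2)^{\times n}$ of local Bloch vectors, its maximum is attained, and necessarily at a point where the stationarity equations \eqref{0.stat-eq} hold. By Theorem~\ref{Wtou} every such point can be written with $\ket{u_k}=\sin\t_k\ket0+\cos\t_k\ket1$, $0\le\t_k\le\pi/2$, and $\sum_k\cos^2\t_k=1$, and it then satisfies \eqref{0.eqn}. So it suffices to list all solutions of \eqref{0.eqn}, compute their overlaps, and take the largest. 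Following Section~4.2, I split the solutions into the \emph{trivial} ones, in which $\sin2\t_k=0$ for every $k$, and the \emph{special} ones, in which no $\sin2\t_k$ vanishes; the mixed possibility is excluded because, by \eqref{0.eqcos} with $\Lambda_{max}>0$ and all $c_k>0$, the vanishing of a single $\sin\t_j$ forces $\cos\t_k=0$ for every $k\ne j$.

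First I would dispose of the trivial solutions. If $\sin\t_j=0$ for some $j$, the argument just given gives $\cos\t_k=0$ for all $k\ne j$; hence exactly one $\t_j$ equals $0$, all the others equal $\pi/2$, the product state is $\ket{0\cdots010\cdots0}$ with the $1$ in slot $j$, and its overlap is $c_j$. The largest of these is $c_n$, realised by $\ket{0\cdots01}$. Now Theorem~\ref{rsolutions} governs the special solutions: when $c_n>r_2$ none exists, so the maximum is the best trivial value $c_n$; and at the boundary $c_n=r_2$ the special solution degenerates to $\ket{0\cdots01}$, as one sees by letting $r\to\infty$ in \eqref{1.cos} and \eqref{gdef}. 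Thus throughout the regime $c_n\ge r_2$ (the slightly-entangled regime of the statement) we obtain $\Lambda_{max}=c_n$ with nearest product state $\ket{0\cdots01}$, and $\Lambda_{max}^2=c_n^2\ge1/2$, so the state is slightly entangled, shared at equality.

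For the complementary regime $c_n\le r_2$, Theorem~\ref{rsolutions} supplies a unique special $r$, solving \eqref{1.r} with the upper sign if $c_n\le r_1$ and the lower sign if $r_1<c_n\le r_2$; this fixes the $\t_k$ through \eqref{1.cos} with the signs stated, hence the product state \eqref{product}. I would then evaluate its overlap directly: $\ov{u_1\cdots u_n}{\w_n}=\sum_k c_k\cos\t_k\prod_{j\ne k}\sin\t_j$, and substituting $c_k=2r\sin\t_k\cos\t_k$ from \eqref{0.rmod} turns the $k$th term into $2r\cos^2\t_k\prod_j\sin\t_j$, so that summing and using $\sum_k\cos^2\t_k=1$ from \eqref{0.dircos} yields precisely $\Lambda_{max}=2r\sin\t_1\cdots\sin\t_n$, which is \eqref{gdef}. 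The point still to be established is that this stationary point is the global maximum of $F$ and not merely a saddle; equivalently, that $2r\prod_k\sin\t_k\ge c_n$, so that it dominates every trivial competitor. This is the main obstacle.

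I expect to settle global maximality as in the three-qubit treatment of Chapter~\ref{analytic-3q}: once $c_n<r_2$, show that the trivial point $\ket{0\cdots01}$ is not even a local maximum of $F$, by computing the second variation there and exhibiting a small tilt of the $\ket{u_k}$ off the computational basis that strictly increases the overlap --- the $n$-qubit analogue of the acute-triangle argument used there to reject the solution $\sin\alpha\sin\beta=0$. Since, by the classification, the only stationary points are the trivial ones (overlaps $c_j\le c_n$) together with, for $c_n\le r_2$, the single special point, discarding $\ket{0\cdots01}$ as a maximum forces the special point to be the global maximum; hence $\Lambda_{max}=2r\sin\t_1\cdots\sin\t_n>c_n$ and $\Lambda_{max}^2<1/2$, so the state is highly entangled. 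An alternative route is a monotonicity argument: the two formulas agree at $c_n=r_2$, where both give $\Lambda_{max}=c_n$ and $\Lambda_{max}^2=1/2$ (a shared state), and differentiating along the special branch shows $\Lambda_{max}$ stays strictly above the trivial value $c_n$ for all $c_n<r_2$. Either route completes the proof and simultaneously locates the shared-state surface on which the two regimes meet.
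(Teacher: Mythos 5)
Your overall architecture matches the paper's: enumerate the stationary points of the overlap, separate the trivial solutions (overlaps $c_j$, best value $c_n$) from the special solution supplied by Theorem~\ref{rsolutions}, and evaluate the special overlap from \eqref{0.rmod} and \eqref{0.dircos} to get \eqref{gdef}; that computation is correct, and your reading of the threshold as $c_n=r_2$ (i.e.\ $c_n^2=1/2$) is the intended one. But the proof has a genuine gap exactly where you flag "the main obstacle": you never establish that the special stationary value dominates the trivial competitors, i.e.\ that $2r\sin\t_1\cdots\sin\t_n\ge c_n$ when $c_n<r_2$. Both routes you offer (a second-variation computation at $\ket{0\cdots01}$, or a monotonicity argument along the special branch) are left as intentions, not arguments, and neither is trivial to execute in $n$ qubits. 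The paper closes this step with a short explicit inequality: applying $(1-y_1)\cdots(1-y_{n-1})\ge 1-y_1-\cdots-y_{n-1}$ (inequality \eqref{ineq1}) to the first $n-1$ terms of \eqref{0.dircos} gives $\sin^2\t_1\cdots\sin^2\t_{n-1}\ge\cos^2\t_n$, and then the $k=n$ stationarity equation \eqref{0.eqcos}, which says $\Lambda_{max}\cos\t_n=c_n\prod_{j\ne n}\sin\t_j$, immediately yields $\Lambda_{max}\ge c_n$ --- no local analysis at the trivial point is needed.

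A second, smaller but real omission: in your last sentence you assert $\lm<1/2$ ("so the state is highly entangled") as if it followed from $\Lambda_{max}>c_n$, which it does not. The paper proves this bound separately: from \eqref{0.rmod} and normalization one gets \eqref{rfromtheta}, hence \eqref{gfromtheta} with $y_k=\sin^2\t_k$ and $y_1+\cdots+y_n=n-1$, and then the inequality \eqref{ineq2} gives $\lm\le 1/2$. Without these two quantitative inequalities your argument identifies the candidate maximizer and its overlap but proves neither that it is the global maximum nor that the resulting state is highly entangled, so as written the proposal is incomplete rather than wrong.
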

\begin{proof} The nonlinear eigenvalue equations \eqref{0.stat-eq} always have
$n$ solutions
\[
\Lambda_{max} = c_k,\qquad |u_i\> = \begin{cases} |0\> \text{ if }
i\neq k,\\|1\> \text{ if } i = k\end{cases}, \quad k = 1\ldots n
\]
If $c_n \ge \/2$, i.e.\ in case (4) of Theorem 2, there are no
other stationary values, so the largest overlap
$\Lambda_{max}(|\w_n\>)$ equals the largest coefficient $c_n$, the
corresponding product state being $|0\ldots 01\>$.

If $c_n < 1/2$ there is another stationary value given by
\eqref{gdef}. We will now show that this is larger than any of the
trivial stationary values $c_k$. We use the following inequality:
If $y_1,\ldots , y_n$ are real numbers lying between 0 and 1, and
satisfying $y_1 + \cdots + y_n \le 1$, then \be \label{ineq1}
(1-y_1)(1-y_2)\cdots(1-y_n) \ge 1 - y_1 -y_2 - \ldots -y_n. \ee
This is readily proved by induction.  We can apply \eqref{ineq1}
to $n-1$ terms of Eq.(\ref{0.dircos}) to get
$$(1-\cos^2\t_1)\cdots(1-\cos^2\t_{n-1})\ge 1-\cos^2\t_1 - \cdots -
\cos^2\t_{n-1}$$ or \be \sin^2\t_1\sin^2\t_2\sin^2\t_{n-1} \ge
\cos^2\t_n. \ee Now from Eq.(\ref{0.eqcos}) it follows that $\lm
\ge c_n^2$. Thus $\Lambda_{max}$ is the maximal product overlap,
and the nearest product state is $|u_1\>\ldots u_n\>$.

Next we prove that if $|\w_n\>$ is normalised, then $\lm < 1/2$.
For this we need another inequality: If $y_1,\ldots , y_n$ are
real numbers lying between 0 and 1, and satisfying $y_1 + \cdots +
y_n = n-1$, then \be \label{ineq2} y_1 + \cdots + y_n \ge y_1^2 +
\cdots + y_n^2 + 2y_1y_2\ldots y_n. \ee This can also be proved by
induction.

From \eqref{0.rmod}, and using $c_1^2 + \cdots + c_n^2 = 1$, we
find \be\label{rfromtheta} r^2 = \frac{1}{\sin^2 2\t_1 + \cdots +
\sin^2 2\t_n}. \ee Hence \eqref{gdef} gives \be\label{gfromtheta}
\lm = \frac{y_1y_2\ldots y_n}{y_1(1 - y_1) + \cdots y_n(1 - y_n)}
\ee where $y_k = \sin^2\t_k$. But $y_1 + \cdots + y_n = n -1$, so
the inequality \eqref{ineq2} applies, and gives $\lm \le 1/2$.
\end{proof}

Finally, we summarise the correspondence between highly entangled
W-states, their nearest product states, and unit vectors in
$\R^n$.
\begin{theorem}\label{onetoone}
There is a 1:1 correspondence between highly entangled states
$|\w_n\>$ defined by \eqref{0.w}, their nearest product states
with real non-negative coefficients, and unit vectors $\x\in \R^n$
with $0<x_k<1/\sqrt{2}$ ($k = 1,\ldots, n-1$), $0 < x_n < 1$.
\end{theorem}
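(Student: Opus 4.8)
The plan is to make the double map \eqref{bijec} explicit and verify that its two halves are mutually inverse bijections; at this stage this is largely a matter of assembling Theorems~\ref{Wtou}--\ref{nearestproduct}. Write $\mathcal W$ for the set of highly entangled W states \eqref{0.w} in the standard form $0<c_1\le\cdots\le c_n$, $\mathcal P$ for the set of their nearest product states written with real non-negative coefficients $\ket{u_k}=\sin\t_k\ket{0}+\cos\t_k\ket{1}$, and $\mathcal X$ for the stated set of unit vectors. I would introduce three maps: $\mathcal W\to\mathcal X$, sending $\ket{\w_n}$ to $\x=(\cos\t_1,\ldots,\cos\t_n)$ with the $\t_k$ read off from the nearest product state; $\mathcal X\to\mathcal P$, sending $\x$ to the product state with $\t_k=\arccos x_k$; and $\mathcal X\to\mathcal W$, sending $\x$ to the W state with $c_k=r\sin 2\t_k=2rx_k\sqrt{1-x_k^2}$, where $r^{-2}=\sum_k\sin^2 2\t_k$ (this value of $r$ being forced by \eqref{0.rmod} together with the normalisation $\sum_k c_k^2=1$).

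First I would check each map is well defined. The map $\mathcal X\to\mathcal P$ is plainly a bijection onto $\mathcal P$, once one observes that states in $\mathcal P$ have all $\t_k\in(0,\pi/2)$ because $c_k>0$ forces $\sin 2\t_k\neq0$ in \eqref{0.eqcos}. For $\mathcal W\to\mathcal X$: a highly entangled W state determines, by Theorem~\ref{rsolutions}, a unique $r$, hence by \eqref{1.cos} (minus sign for $k<n$, the sign for $k=n$ fixed by which branch of \eqref{1.r} holds) unique angles $\t_k\in(0,\pi/2)$, hence a nearest product state that is unique up to an overall phase which the real non-negative convention removes; Theorem~\ref{Wtou} gives $\sum_k\cos^2\t_k=1$ so $\x$ is a unit vector, and the sign analysis preceding Theorem~\ref{rsolutions} gives $\cos 2\t_k<0$, i.e.\ $0<x_k<1/\sqrt{2}$, for $k<n$, while $0<x_n<1$ is automatic, so $\x\in\mathcal X$. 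For $\mathcal X\to\mathcal W$ the coefficients $c_k$ are positive and normalised by construction; the real point is to confirm the output lies in $\mathcal W$. Substituting $c_k=2r\sin\t_k\cos\t_k$ into the stationarity equations \eqref{0.stat-eq} and using $\sum_{l\neq k}\cos^2\t_l=\sin^2\t_k$ (which is \eqref{0.dircos}) shows that the product state produced by $\mathcal X\to\mathcal P$ is a stationary point of the overlap with this W state; writing $y_k=\sin^2\t_k=1-x_k^2$ we have $\sum_k y_k=n-1$, so inequality \eqref{ineq2} applied to \eqref{gfromtheta} as in the proof of Theorem~\ref{nearestproduct} gives $\lm\le1/2$, whence $c_n\le1/2$ and Theorem~\ref{nearestproduct} identifies this stationary point as the nearest product state; finally, when $x_n\ge1/\sqrt2$ the monotonicity of $t\mapsto t(1-t)$ on $[1/2,1]$ together with $\sum_k x_k^2=1$ forces $c_n=\max_k c_k$, and when $x_n<1/\sqrt2$ the standard ordering imposes $x_1\le\cdots\le x_n$, so in either case the state is in standard form and highly entangled.

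It then remains to see the compositions are identities. The composite $\mathcal X\to\mathcal W\to\mathcal X$ is the identity because, by the stationarity check above and the uniqueness in Theorem~\ref{rsolutions}, the angles recovered from the constructed W state are exactly the original $\arccos x_k$. The composite $\mathcal W\to\mathcal X\to\mathcal W$ is the identity because for the original state \eqref{0.rmod} gives $\sin 2\t_k=c_k/r$ and hence $\sum_k\sin^2 2\t_k=1/r^2$, so the reconstructed coefficients are $r\sin 2\t_k=c_k$. Hence $\mathcal W\leftrightarrow\mathcal X$ is a bijection; combining it with the bijection $\mathcal X\leftrightarrow\mathcal P$, and noting that the composite $\mathcal W\to\mathcal X\to\mathcal P$ is precisely the nearest-product-state assignment, yields all three correspondences. (In particular this shows that, in contrast to generalised GHZ states and slightly entangled W states, distinct highly entangled W states have distinct nearest product states.)

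The part I expect to require genuine work is the well-definedness of $\mathcal X\to\mathcal W$, where one must simultaneously verify that the constructed state is highly entangled and in standard form: this is the only place that really uses inequality \eqref{ineq2}, the explicit stationarity computation, and the $t(1-t)$ monotonicity argument, whereas the $\mathcal W\to\mathcal X$ direction is essentially a repackaging of Theorems~\ref{Wtou}--\ref{nearestproduct}. Boundary configurations --- a state on the surface $c_n=r_1$ separating the symmetric and asymmetric regions, where some $\t_k$ reaches $\pi/4$ and the corresponding $x_k$ equals $1/\sqrt2$ --- deserve a separate remark, but there the two signs in \eqref{1.cos} coincide and the maps extend continuously, so nothing special happens.
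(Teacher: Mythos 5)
Your proposal is correct and takes essentially the same route as the paper's own proof: the identical correspondence $c_k = r\sin 2\theta_k$ with $r^{-2}=\sum_k\sin^2 2\theta_k$ and $\mathbf{x}=(\cos\theta_1,\ldots,\cos\theta_n)$, with uniqueness supplied by Theorems~\ref{Wtou}--\ref{nearestproduct}. The only difference is one of completeness: you explicitly verify (via the stationarity check, inequality \eqref{ineq2}, and the ordering argument) that every unit vector in the stated region actually produces a highly entangled W state having the constructed product state as its nearest product state, a point the paper's proof leaves implicit, so your write-up fills in that step rather than following a different argument.
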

\begin{proof} By Theorem 3, $|\w_n\>$ is highly entangled if and only if $c_n <
1/2$. If this is the case, \thmref{Wtou} and \eqref{1.cos} show
that its nearest product state is of the form \eqref{product}
where $\x = (\cos\t_1, \ldots, \cos\t_n)$ is a unit vector in
$\R^n$ in the region stated. The angles $\t_k$ are given in terms
of the coefficients $c_k$ by \eqref{0.rmod}, in which $r$ is a
function of the coefficients which, by \thmref{rsolutions}, is
uniquely defined. The nearest product states
$|u_1\>|u_2\>\ldots|u_n\>$ are determined by these angles, up to a
phase $\phi$, by $|u_k\> = \sin\t_k|0\> + \e^{i\phi}\cos\t_k|1\>$,
so there is only one nearest product state with real non-negative
coefficients, and only one unit vector $\x$, for each highly
entangled state $|\w_n\>$. Conversely, given a unit vector $\x =
(\cos\t_1,\ldots,\cos\t_n)$, the quantity $r$ is determined by
\eqref{rfromtheta}, and then the coefficients $c_1,\ldots,c_n$ are
determined by \eqref{0.rmod}. Thus the correspondences
\eqref{bijec} are bijections.
\end{proof}

The equations (\ref{1.r}$_\pm$) cannot always be explicitly solved
to give analytic expressions for $r$ in terms of the coefficients
$c_k$. However, in some cases, including all states for $n=3$,
explicit solutions can be obtained. Then the angles $\t_k$ can be
calculated from \eqref{0.rmod} and eq.\eqref{gdef} gives a formula
for the maximal product overlap $\Lambda_{max}(|\w_n\>)$. This
formula is valid unless any of the angles $\t_k$ vanishes, and
restores all known results for the maximal overlap of highly
entangled W states. When $n=3$ it coincides with the formula (31)
in Ref.\cite{analyt}. When $c_1=c_2=\cdots=c_n$  it coincides
with the formula (52) in Ref.\cite{Shim-grov}. And when $n=4$ and
$c_3=c_4$ it coincides with the formula (37) derived in
Ref.\cite{toward}.

When $ \max(c_1^2,c_2^2,\cdots c_n^2)=r_2^2=1/2$ the two
expressions for $\Lambda_{max}(|\w_n\>)$ given in
\thmref{nearestproduct} coincide; these states are shared quantum
states. The nearest product states and maximal overlaps of shared
states are given by the first case of \thmref{nearestproduct}, but
also they appear as asymptotic limits of the second case. Indeed,
at the limit $\t_n\to0$ we have
\begin{equation}\label{2.app}
\lim_{\t_n\to0}2r\sin\t_n\to c_n,\; \lim_{\t_n\to0}2r\cos\t_k\to
c_k,\,k\neq n.
\end{equation}
Thus the angle $\t_n$ vanishes and the length of the vector $\rv$
goes to infinity, but their product has a finite limit.
Substituting these limits into Eq.(\ref{0.dircos}) one obtains
$c_n^2\to r_2^2$. Therefore entangled regions of highly and
slightly entangled states are separated by the surface
$c_n^2=1/2$; for states on the surface, $r\to\infty$. All of these
states can be used as a quantum channel for the perfect
teleportation and superdense coding~\cite{shared}.

\smallskip

\section{Summary.}

We have constructed correspondences between W states,
$n$-dimensional unit vectors and separable pure states. The map
reveals two critical values for quantum state parameters. The
first critical value separates symmetric and asymmetric entangled
regions of highly entangled states, whiles the second one
separates highly and slightly entangled states. The method gives
an explicit expressions for the geometric measure when the state
allows analytical solutions, otherwise it expresses the
entanglement as an implicit function of state parameters.

It should be noted that the bijection between W states and
$n$-dimensional unit vectors is not related directly to the
geometric measure of entanglement. Therefore it is possible to
extend the method to other entanglement measures. To this end one
creates an appropriate bijection between unit vectors and
optimization points of an entanglement measure one wants to
compute.


\chapter[Universality of many-qubit entanglement]{Universal behavior of the geometric entanglement measure of many-qubit W states}\label{universal}

\renewcommand{\ra}{\rangle}
\renewcommand{\la}{\langle}
\renewcommand{\w}{\mathrm{W}}
\renewcommand{\s}[1]{\sqrt{#1}}
\renewcommand{\ket}[1]{\vert #1 \ra}
\renewcommand{\bra}[1]{\la #1 \vert}
\renewcommand{\bk}[2]{\la #1 \vert #2 \ra}
\renewcommand{\kb}[2]{\vert #1 \ra \la #2 \vert}
\renewcommand{\ov}[2]{\left\la #1 | #2 \right\ra}
\renewcommand{\t}{\theta}
\renewcommand{\v}{\varphi}
\renewcommand{\o}{\otimes}

In this chapter we analyze geometric entanglement measure of
many-qubit W states and derive an interpolating
formula~\cite{univers}.

The physics of many-particle systems differs fundamentally from
the one of a few particles and gives rise to new interesting
phenomena, such as phase transitions~\cite{orus-08,crit-rev} or
quantum computing~\cite{shor-94,ek-91,tele-93,niels}. Entanglement
theory, in particular, appears to have a much more complex and
richer structure in the N-partite case than it has in the
bipartite setting. This is reflected by the fact that multipartite
entanglement is a very active field of research that has led to
important insights into our understanding of many-particle
physics~\cite{vedr-97,wei-03,hig,robust-universal,kryu,lip-una,shar,hub}.
In view of this, it seems worthy to investigate also the behavior
of entanglement measures for large-scale systems. Despite the fact
that the number of entanglement parameters scales exponentially in
the number of particles~\cite{lind}, it is sometimes possible to
capture the most relevant physical properties by describing these
systems in terms of very few parameters.

Recently a duality between highly entangled W states and product
states has been established~\cite{dual}. The important class of W
states~\cite{Chir} represents a particular interesting set of
quantum states associated with high robustness against particle
loss and nonlocal properties of genuine entangled multipartite
states~\cite{par-08,pop-w,sen,usha}. And different experimentally
accessible schemes to generate multipartite W states have been
proposed and put into practice over the
years~\cite{raz-02,wang,jap,w-gen1}

The duality specifies a single-valued function $r$ of entanglement
parameters. We shall refer to $r$ as the entanglement diameter, as
it will play a crucial role throughout this article. Another
reason for the term entanglement diameter is that $r$ can be
interpreted geometrically as a diameter of a circumscribing
sphere. The geometrical interpretation and its illustration will
be presented in the appendix and now we focus on the physical
significance of $r$.

The entanglement diameter uniquely defines the maximal product
overlap and nearest product
state~\cite{shim-95,barn-01,wei-03,biham-02} of a given highly
entangled W state. It has two exceptional points in the parameter
space of W states. At the second exceptional point the reduced
density operator of a some qubit is a constant multiple of the
unit operator and then the entanglement diameter becomes infinite.
The maximal product overlap $\Lambda_{max}$ of these states is a
constant regardless how many qubits are involved and what are the
values of the remaining entanglement parameters. These states are
known as shared quantum states and can be used as quantum channels
for the perfect teleportation and dense coding. Thus the shared
quantum states are uniquely defined as the states whose
entanglement diameter is infinite.

Furthermore, highly entangled W states have two different
entangled regions: the symmetric and asymmetric entangled regions.
In the computational basis these regions can be defined as
follows. If a W state is in the symmetric region, then the
entanglement diameter is a fully symmetric function on the state
parameters. Conversely, if a W state is in the asymmetric region,
then there is a coefficient $c$ such that the $c$ dependence of
the entanglement diameter differs dramatically from the
dependencies of the remaining coefficients. Hence the point of
intersection of the symmetric and asymmetric regions is the first
exceptional point. It depends on state parameters and its role has
not been revealed so far. One thing was clear that the first
exceptional point does not play an important role for three- and
four-qubit W states~\cite{analyt,toward}.

In this chapter we show that the first exceptional point is
important for large-scale W states. It approaches to a fixed point
when number of qubits $N$ increases and becomes
state-independent(up to $1/N$ corrections) when $N\gg1$. As a
consequence the entanglement diameter, as well as the maximal
product overlap, becomes state-independent too and therefore
many-qubit W states have two state-independent exceptional points.
The underlying concept is that states whose entanglement
parameters differ widely, may nevertheless have the same maximal
product overlap and this phenomenon should occur at two fixed
points. This is an analog of the universality of dynamical systems
at critical points. It is an intriguing fact that systems with
quite different microscopic parameters may behave equivalently at
criticality. Fortunately, the renormalization group  provides an
explanation for the emergence of universality in critical
systems~\cite{orus-08,orusand,crit-rev}.

The developed concept distinguishes three classes of W states. The
first class consists of highly entangled W states which are below
both exceptional points and then $r$ varies from $r_{\min}=1/2$ to
$r_0\approx1/\s{3}+O(1/N)$. We will show that these states are in
the symmetric region and their entanglement diameter is a slowly
oscillating function on entanglement parameters. Accordingly, the
maximal product overlap is an almost everywhere constant close to
its greatest lower bound. Similar results have been obtained in
Ref.\cite{zch}, where it is shown that almost all multipartite
pure states with sufficiently large number of parties are nearly
maximally entangled with respect to the geometric
measure~\cite{wei-03} and relative entropy of
entanglement~\cite{vedr-97}. We will not analyze rigorously these
states since they are too entangled to be useful in quantum
information theory~\cite{agr}.

The second and most interesting class consists of highly entangled
W states which are between two exceptional points and then $r$
varies from $r_0$ to infinity. These states are in the asymmetric
region and the behavior of the entanglement diameter is curious.
We will show that $r$ is a one-variable function in this case and
depends only on the Bloch vector ${\bm b}$ of a single qubit. As a
consequence $\Lambda_{max}$ depends only on the same Bloch vector
too and its behavior is universal. That is, regardless how many
many qubits are involved and what are the remaining $N-1$
entanglement parameters the function $\Lambda_{max}({\bm b})$ is
common. We will compute analytically $\Lambda_{max}({\bm b})$ and
thereby find the Groverian and geometric entanglement
measures~\cite{wei-03,biham-02} for the large-scale W states even
if neither the number of particles nor the most of state
parameters are known.

The third class consists of slightly entangled W states which are
above both exceptional points. In this case the maximal product
overlap takes the value of the largest coefficient and these
states do not posses an entanglement diameter. We will not analyze
this trivial case, but will combine the functions
$\Lambda_{max}({\bm b})$ for slightly entangled and highly
entangled asymmetric W states and obtain an interpolating function
$\Lambda_{max}({\bm b})$ valid for both cases. It is in a perfect
agreement with numerical solutions and quantifies the many-qubit
entanglement in high accuracy($\Delta
\Lambda_{max}/\Lambda_{max}\sim10^{-3}$ at $N\sim10)$.

The importance of the interpolating formula in quantum information
is threefold. First, it connects two quantities, namely the Bloch
vector and maximal product overlap, that can be easily estimated
in experiments~\cite{guh-07,bloch}. Second, it is an example of
how do we compute entanglement of a quantum state with many
unknowns. Third, if the Bloch vector varies within the allowable
domain then maximal product overlap ranges from its lower to its
upper bounds. Then one can prepare the W state with the given
maximal product overlap, say ${\Lambda_{max}}_0$, bringing into
the position the Bloch vector, say $\Lambda_{max}({\bm
b_0})={\Lambda_{max}}_0$.

This chapter is organized as follows. In Section 5.1, we
review the main results of Ref.\cite{dual}. In Section
6.2, we consider two- and three-parameter W states in the
symmetric region and show that all of these states are almost
maximally entangled. In Section 5.3, we consider three-
and four-parameter W states in the asymmetric region and compute
explicitly their maximal product overlap. In Section 5.4,
we generalize the results of Sec.III and Sec.IV to arbitrary
many-qubit W states. In Section 5.5, we discuss our
results. In the Appendix C, we provide a geometrical
interpretation for the entanglement diameter.

\section{Maximal product overlap of W states}

In the computational basis N-qubit W states can be written as
\begin{equation}\label{2.w}
\ket{\w_n}=c_1\ket{100...0} + c_2\ket{010...0} + \cdots +
c_N\ket{00...01},
\end{equation}
where the labels within kets refer to qubits 1,2,...,N in that
order. The phases of the coefficients $c_k$ can be absorbed in the
definitions of the local states $\ket{1_i}(i=1,2,...,N)$ and
without loss of generality we consider only the case of positive
parameters. For the simplicity we assume that $c_N$ is the maximal
coefficient, that is, $c_N=\max(c_1, c_2,\cdots,c_N)$.

The maximal product overlap $\Lambda_{max}(\psi)$ of a pure state
$\ket\psi$ is given by
\begin{equation}\label{2.g}
\Lambda_{max}(\psi)=\max_{u_1,u_2,...,u_N}|\ov{\psi}{u_1u_2...u_N}|,
\end{equation}
where the maximization runs over all product states. The larger
$\Lambda_{max}$ is, the less entangled is $\ket{\psi}$. Hence for
a quantum multipartite system the geometric entanglement measure
$E_{\Lambda_{max}}$ is defined as
$$E_{\Lambda_{max}}=-\log \Lambda_{max}(\psi).$$

The maximal product overlap demarcates three different entangled
regions in the parameter space of W states:
\begin{enumerate}
  \item The symmetric region of highly entangled W states. Here
$\Lambda_{max}(c_1,c_2,...,c_N)$ is a symmetric function on all
coefficients $c_i$.
  \item The asymmetric region of highly entangled
W states. Here the invariance of $\Lambda_{max}(c_1,c_2,...,c_N)$
under the permutations of coefficients $c_i$ ceases to be true.
  \item The region of slightly entangled W states. Here the
inequity
$$\lm(c_1,c_2,...,c_N)>1/2$$
 holds.
\end{enumerate}

The appearance of the three entangled regions is the consequence
of the existence of the two critical values for the largest
coefficient $c_N$.  The first critical value
$r_1(c_1,c_2,...,c_{N-1})$ is the solution of
\begin{equation}\label{2.r1}
\s{r_1^2-c_1^2}+\s{r_1^2-c_2^2}+\cdots+\s{r_1^2-c_{N-1}^2}=(N-2)\,r_1,
\end{equation}
which always exists and is unique. Note that the first critical
value $r_1$ for the coefficient $c_N$ depends on the remaining
coefficients $c_i, i=1,2,...,N-1$ but does not depend on $c_N$.
Nonetheless we will use the abbreviation $r_1(c_N)\equiv
r_1(c_1,c_2,...,c_{N-1})$ whenever no confusion occurs.

The second critical value $r_2(c_1,c_2,...,c_{N-1)}$ is given by
\begin{equation}\label{2.r2}
r_2^2=c_1^2+c_2^2+\cdots+c_{N-1}^2.
\end{equation}
In what follows we will use the abbreviation $r_2(c_N)\equiv
r_2(c_1,c_2,...,c_{N-1)}$ for the simplicity.

The second critical value is always greater than the first one and
thus there are three cases. The first case  is $c_N < r_1$ and the
maximal product overlap is expressed via the fully symmetric
entanglement diameter $r(c_1,c_2,...,c_N)$, which is the unique
solution of
\begin{equation}\label{2.symr}
\s{r^2-c_1^2}+\s{r^2-c_2^2}+\cdots+\s{r^2-c_N^2}=(N-2)\,r.
\end{equation}
Then $\Lambda_{max}$ is given by
\begin{equation}\label{2.symg}
\lm=\frac{r^2}{2^{N-2}}\left(1+\s{1-\frac{c_1^2}{r^2}}\right)
\left(1+\s{1-\frac{c_2^2}{r^2}}\right) \cdots
\left(1+\s{1-\frac{c_N^2}{r^2}}\right)
\end{equation}
and is a bounded function satisfying the inequalities  $c_N^2 <
\lm(c_1,c_2,...,c_N) < 1/2$.

The second case is $r_1<c_N<r_2$. In this case the entanglement
diameter $r(c_1,c_2,...,c_N)$ is the unique solution of
\begin{equation}\label{2.asymr}
\s{r^2-c_1^2}+\s{r^2-c_2^2}+\cdots-\s{r^2-c_N^2}=(N-2)\,r
\end{equation}
where only the last radical has the $-$ sign. Then $\Lambda_{max}$
takes the form
\begin{equation}\label{2.asymg}
\lm=\frac{r^2}{2^{N-2}}\left(1+\s{1-\frac{c_1^2}{r^2}}\right)
\left(1+\s{1-\frac{c_2^2}{r^2}}\right) \cdots
\left(1-\s{1-\frac{c_N^2}{r^2}}\right),
\end{equation}
where again the negative root is taken from the last radical. The
expression \eqref{2.asymg} also has an upper and lower bounds and
the inequalities
$$c_N^2 < \lm(c_1,c_2,...,c_N) < 1/2$$
hold everywhere in the asymmetric region.

The third case is $c_N\geq r_2$ and $\Lambda_{max}$ takes the
value of the largest coefficient in this case
\begin{equation}\label{2.slight}
\lm=c_N^2.
\end{equation}
Now $\Lambda_{max}$ is bounded below and satisfies the inequality
$\lm>1/2$.

Despite the fact that there exist three different expressions for
the maximal product overlap it is a  continuous function on state
parameters. Indeed, at $c_N=r_1$ both Eqs. \eqref{2.symr} and
\eqref{2.asymr} have the same solution $r=r_1=c_N$ and expressions
\eqref{2.symg} and \eqref{2.asymg} for $\Lambda_{max}$ coincide.
At $c_N\to r_2$ the solution of \eqref{2.asymr} goes to infinity,
$r\to\infty$, and \eqref{2.asymg} asymptotically comes to
\eqref{2.slight}. At this limit $\lm=c_N^2=r_2^2=1/2$ and thus the
surface $\lm(c_1,c_2,...,c_N)=1/2$ separates out slightly and
highly entangled W states.

\section{Symmetric entanglement region}
In this section we analyze the maximal product overlap of two--
and three--parameter W states that belong to the symmetric region
of entanglement and show that if all coefficients are small, then
$r$ is a  slowly oscillating function close to $1/2$.

\subsection{Two parameter W states}

Equations \eqref{2.symr} and \eqref{2.asymr} are solvable for
$N=3$ and the answer is~\cite{analyt}
\begin{equation}\label{3.w3}
\Lambda_{max}=
\begin{cases}
 \enskip 2R, &{\rm if}\quad c_3^2\leq c_1^2+c_2^2\cr
 \enskip c_3, &{\rm if}\quad c_3^2\geq c_1^2+c_2^2
\end{cases}
\end{equation}
where $R$ is the circumradius of the triangle $c_1,c_2,c_3$.

When $N\geq4$ Eqs. \eqref{2.symr} and \eqref{2.asymr} cannot be
explicitly solved to give analytic expressions for $r$ in terms of
the coefficients $c_k$ unless the state posses a symmetry. For
example, for $N=4$ the equations are solvable if any two
coefficients coincide and unsolvable  if all coefficients are
arbitrary~\cite{toward}.

However, when $N\gg 1$ the situation is different. In many cases
one can derive approximate solutions that quantify the
entanglement of W states in high accuracy. We will find such
approximate solutions and compare them with the exact or numerical
solutions.

Consider first a W states with $N=m+k$ qubits and coefficients
\begin{equation}\label{3.mk-c}
c_1=c_2=\cdots=c_m=a,\quad c_{m+1}=c_{m+2}=\cdots=c_{m+k}=b.
\end{equation}
When  $m>1$ and $n>1$ the state is in the symmetric region and
Eq.\eqref{2.symr} is reduced to
\begin{equation}\label{3.mk}
m\s{r^2-a^2}+k\s{r^2-b^2}=(N-2)\,r.
\end{equation}
This equation is solvable by radicals. Setting $a=\cos\t/\s{m},\;
b=\sin\t/\s{k}$ one obtains
\begin{equation}\label{3.rmk}
r^2=\frac{2Nmk-4(N-1)(m\cos^2\t+k\sin^2\t)+2mk(N-2)\s{D}}{16(N-1)(m-1)(k-1)},
\end{equation}
where
\begin{equation}\label{3.D}
D=1-\frac{N-1}{mk}\sin^22\t.
\end{equation}

At  $m=1$ or $k=1$ the denominator and numerator vanish in
Eq.\eqref{3.rmk}, but their ratio gives the correct answer. We
will not consider this case since it is analyzed in detail  in
Ref.\cite{toward}.

If $m,k\gg1$, then $r$ is almost constant since
\begin{equation}\label{3.rapp}
r^2=\frac{1}{4}+O\left(\frac{1}{m}\right)+O\left(\frac{1}{k}\right).
\end{equation}
The question is when \eqref{3.rapp} achieves  the required
accuracy. It can be understood by reference to Fig.\ref{r-mn},
where the $\t$ dependence of the exact solution \eqref{3.rmk} is
plotted. The graphics show that $\Delta r/r\sim10^{-2}$ at
$N\sim10$.
\begin{figure}[ht!]
\begin{center}
\includegraphics[width=7.5cm]{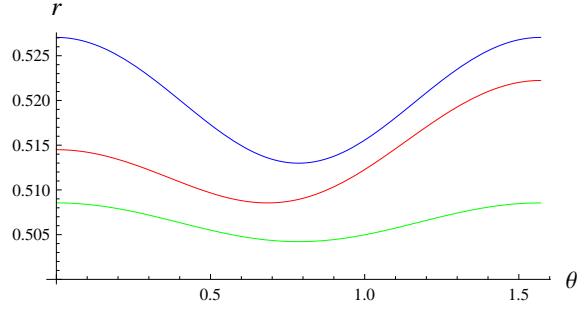}
\caption[fig1]{\label{r-mn}(Color online) The plots of the $\t$
dependence of the exact solution $r(\t)$ for the state
\eqref{3.mk-c}. The top, middle and bottom lines represent the
cases $(m=10,k=10),\;(m=12,k=18)$ and $(m=30,k=30)$,
respectively.}
\end{center}
\end{figure}
\begin{figure}[ht!]
\begin{center}
\includegraphics[width=7.5cm]{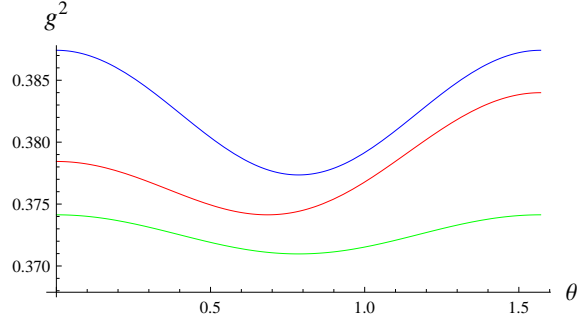}
\caption[fig2]{\label{g-mn}(Color online) The maximal product
overlap function $\lm(\t)$ at different values of $m$ and $k$. The
axes origin is put at the point $(0,1/e)$ to make it easer the
comparison of the exact ant approximate solutions. The top, middle
and bottom lines correspond to the values
$(m=10,k=10),\;(m=12,k=18)$ and $(m=30,k=30)$, respectively.}
\end{center}
\end{figure}

As a consequence of Eq.\eqref{3.rapp} $\lm$ is also almost
constant and close to its lower bound $1/e$~\cite{shim}. Indeed,
using approximations
\begin{equation}\label{3.app-c}
  \frac{1}{2^m}\left(1+\s{1-\frac{a^2}{r^2}}\right)^m\approx e^{-ma^2/4r^2},\quad \frac{1}{2^k}\left(1+\s{1-\frac{b^2}{r^2}}\right)^k\approx e^{-kb^2/4r^2}
\end{equation}
one obtains
\begin{equation}\label{3.eapp}
\lm=\frac{1}{e}+O\left(\frac{1}{m}\right)+O\left(\frac{1}{k}\right).
\end{equation}

The behavior of the maximal product overlap $\Lambda_{max}(\t)$
given by Eqs. \eqref{2.symg} and \eqref{3.rmk} is plotted in
Fig.\ref{g-mn}, which shows that $\Delta
\Lambda_{max}/\Lambda_{max}\sim10^{-2}$ at $m,k\sim10$. It is
difficult if not impossible to observe such small deviations of
the maximal product overlap in experiments and therefore
approximate formulas \eqref{3.rapp} and \eqref{3.eapp} have a good
accuracy when $N\ge20$.

\subsection{Three parameter W states}

Consider now a three-parameter W state with $N=m+k+l$ qubits and
coefficients
\begin{equation}\label{3.cmkl}
  c_1=\cdots=c_m = a,\;c_{m+1}=\cdots=c_{m+k} = b,\;c_{m+k+1}=\cdots=c_{m+k+l} = c.
\end{equation}

We will analyze the case $m,k,l\gg1$. Then Eq. \eqref{2.symr} can
be rewritten as
\begin{equation}\label{3.mkl}
m\s{r^2-a^2}+k\s{r^2-b^2}+l\s{r^2-c^2}=(N-2)\,r.
\end{equation}
From the normalization condition $ma^2+kb^2+lc^2=1$ it follows
that $a^2\leq1/m\ll1$ and similarly $b^2,c^2\ll1$. On the other
hand \eqref{3.mkl} shows that $r\sim1$, and therefore we can
expand the radicals in powers of $a^2/r^2,\;b^2/r^2$ and
$c^2/r^2$. Then
\begin{equation}\label{3.rappr}
r^2=\frac{1}{4}+O\left(\frac{1}{m},\frac{1}{k},\frac{1}{l}\right).
\end{equation}
Again we got the same answer for $r$, which means that for
partitions with large number of qubits $r$ depends neither on
$m,k,l$ nor on $a,b,c$. More precisely, $r$ depends only on the
expression $ma^2+kb^2+lc^2=|\psi|^2$, which drops out owing to the
normalization condition.

The equation \eqref{3.mkl} can be solved explicitly, but the
resulting half-page answer is impractical and we will compare
\eqref{3.rappr} with the numerical solution instead. For this
purpose we use the parametrization
$$a=\sin\t\cos\v/\s{m},\;b=\sin\t\sin\v/\s{k},\;c=\cos\t.$$
The behavior of the numerical solution $r(\t)$ of Eq.\eqref{3.mkl}
for various values $m,k,l$ and $\v$ is plotted in Fig.\ref{r-mnl}.
The graphics show that the approximate solution is in a perfect
agreement with the numerical solution for $N\gg1$.

\begin{figure}[ht!]
\begin{center}
\includegraphics[width=7.5cm]{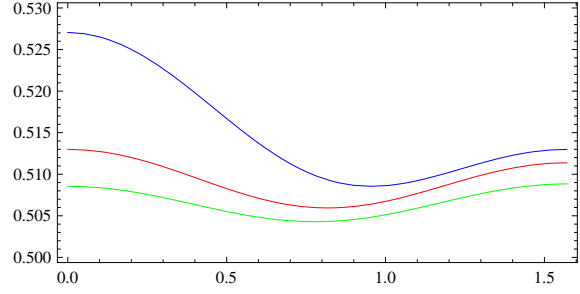}
\caption[fig3]{\label{r-mnl}(Color online)  The curves show the
$\t$ dependence of the function $r(\t)$. The upper, middle and
bottom curves represent the cases
$(m=k=l=10,\v=\pi/4),\;(m=k=l=20,\v=5\pi/12)$ and
$(m=10,k=20,l=30,\v=\pi/6)$, respectively.}
\end{center}
\end{figure}

In summary, in the symmetric region of highly entangled W states
the maximal product overlap does not depend on state parameters
when many qubits are involved. Consider a W state, where
$n_1,n_2,...,n_k$ product vectors in the computational basis have
coefficients $c_1, c_2,...,c_k$, respectively. Then
$\Lambda_{max}$ does not depend on partition numbers $n_i$ or
amplitudes $c_i$  and the approximate solution \eqref{3.rapp} with
the maximal product overlap \eqref{3.eapp} quantifies the
entanglement in high accuracy. For example, at $N\sim 10$ the
accuracy is $\Delta \Lambda_{max}/\Lambda_{max}\sim10^{-2}$. This
approximation is true unless the condition $n_i\gg1(i=1,2,...,k)$
is violated. What is happening if this condition is violated, is
analyzed in the next section.

\section{Asymmetric region of entanglement}
In this section we consider three- and four-parameter W states in
the asymmetric region and show that if one of coefficients exceeds
the first critical value $r_1$, then $r$  is a rapidly increasing
function and ranges from one-third to infinity when the maximal
coefficient ranges from the first critical value to the second
critical value.

\subsection{Three-parameter W states}

Consider now the case when $l=1$ in \eqref{3.cmkl}
\begin{equation}\label{3.cmk1}
  c_1=\cdots=c_m = a,\;c_{m+1}=\cdots=c_{m+k} = b,\;c_{m+k+1} = c.
\end{equation}


If $c\ll1$, then $c/r$ is small and $r$ is almost constant. This
case is analyzed in the previous section and now we focus on the
case when $c/r$ cannot be neglected. Then either $c\lesssim r_1$
or  $r_1<c<r_2$.

When $c\lesssim r_1$  Eq.\eqref{2.symr} takes the form
\begin{equation}\label{4.mk+1}
m\s{r^2-a^2}+k\s{r^2-b^2}+\s{r^2-c^2}=(N-2)\,r.
\end{equation}
The ratios $a/r$ and $b/r$ are small since $m,k\gg1$. Hence we
expand the radicals in powers of these ratios up to  quadratic
terms and solve the resulting equation. The answer is
\begin{equation}\label{4.r+1}
r=\frac{1}{2}\frac{1-c^2}{\s{1-2c^2}},\quad
\s{1-\frac{c^2}{r^2}}=\frac{1-3c^2}{1-c^2},\quad
\max(a^2,b^2)<c^2\leq\frac{1}{3}.
\end{equation}
It is reasonable that $r\to 1/2$ at $c\to0$.

When $c\geq r_1$  Eq.\eqref{2.asymr} takes the form
\begin{equation}\label{4.mk-1}
m\s{r^2-a^2}+k\s{r^2-b^2}-\s{r^2-c^2}=(N-2)\,r.
\end{equation}
Its approximate solution is
\begin{equation}\label{4.r-1}
r=\frac{1}{2}\frac{1-c^2}{\s{1-2c^2}},\quad
\s{1-\frac{c^2}{r^2}}=\frac{3c^2-1}{1-c^2},\quad
\frac{1}{3}<c^2<\frac{1}{2}.
\end{equation}
As one would expect, $r\to\infty$ at $c^2\to1/2$.

Surprisingly,  both solutions \eqref{4.r+1} and \eqref{4.r-1} can
be unified to a single solution as follows
\begin{equation}\label{4.r}
r=\frac{1}{2}\frac{1-c^2}{\s{1-2c^2}},\quad
\max(a^2,b^2)<c^2<\frac{1}{2}.
\end{equation}
The question at issue is when \eqref{4.r} gives a required
accuracy in the asymmetric region $r_1<c<r_2$. We compare it with
the numerical solutions of \eqref{4.mk+1} and \eqref{4.mk-1} for
the values $(m=8,k=10,a/b=0.8,r_1^2\approx0.34)$ in
Fig.\ref{r-mn1}, where the solid line is the plot of \eqref{4.r}
and the dashed line is the numerical solution. Remarkably, the
approximate solution is in a perfect agreement with the numerical
one in the asymmetric region.

\begin{figure}[ht!]
\begin{center}
\includegraphics[width=7.5cm]{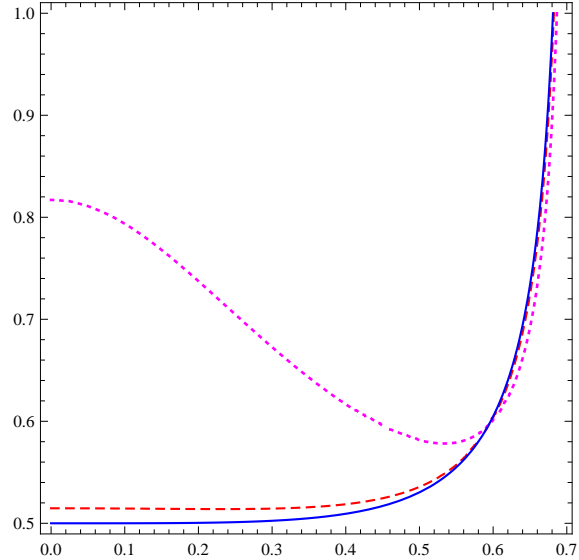}
\caption[fig4]{\label{r-mn1}(Color online)  Graphic illustrations
of the function $r(c)$ for the three- and four-parameter W states.
The solid curve is the approximate solution \eqref{4.r}. The
dashed curve is the joined numerical solution of Eqs.
\eqref{4.mk+1} and  \eqref{4.mk-1}. All remaining coefficients are
well away from the first critical value $(\approx0.58)$ when $c$
varies within the range of definition in this case. Accordingly,
the state is in the symmetric region when $0<c<0.58$ and in the
asymmetric region when $0.58<c<0.707$. The dotted line is the
numerical solution for the state \eqref{3.cm+1-1}. Now another
coefficient may exceed the first critical value. Therefore there
are two first critical values, for the last and the preceding
coefficients, respectively. The first critical value for the next
to last coefficient $c$ is $\approx0.606$ and for the last
coefficient $d$ is $\approx0.59$ which is attained at $c=0.45657$.
Thus the state is in the symmetric region when $0.45657<c<0.606$
and in the asymmetric region otherwise. Remarkably, the three
curves coincide when $c>0.606$.}
\end{center}
\end{figure}

\subsection{Four-parameter W states}
However, there are W state that are outside the realm of the model
sketched in the previous subsection. These are states with few (at
most three) coefficients close to the first critical value
$r_1\sim1/\s{3}$. In this case these coefficients are not small
and the resulting $r$ should has a different behavior.

Notice, two coefficients cannot exceed the first critical value
simultaneously. But we can construct W states whose coefficients
depend on a free parameter in such a way that at one value of the
free parameter the last coefficient exceeds the first critical
value and at another value of the free parameter the preceding
coefficient exceeds the first critical value. Below we construct
an illustrative example of a such state and analyze its
entanglement diameter.

An example is the 19-qubit four-parameter W state with
coefficients
\begin{equation}\label{3.cm+1-1}
 c_1=\cdots=c_7\equiv a,\;c_{8}=\cdots=c_{17}\equiv b,\; c_{18}\equiv c,\;c_{19}\equiv d.
\end{equation}
For the normalized states we can use free parameters $\v,\;k$ and
$c$  as follows
$$a^2 = \frac{\cos^2\v}{7k}(1-c^2),\; b^2 = \frac{\sin^2\v}{10k}(1-c^2),\;d^2 = \frac{k-1}{k}(1-c^2).$$

Now we analyze the function $r(c)$ at $k=1.8, \v=\pi/4$.

\begin{enumerate}
 \item The  next to last coefficient $c$ coincides with its first
critical value $r_1(c)$  at $c\approx0.606$ , that is, the
solution of the system
$$7\s{r_1^2-a^2}+10\s{r_1^2-b^2}+\s{r_1^2-d^2}=17r_1\;\, {\rm
and}\;\, r_1=c$$ is $r_1=c\approx 0.606$. Then $r(c)$ should range
from $r_1(c)$ to infinity when $c$ ranges from $r_1(c)$ to 1/2 and
should has a vertical asymptote at $c^2\to1/2$.
 \item The last
coefficient $d$ coincides with its first critical value $r_1(d)$
at $d\approx0.593$, that is, the solution of the system
$$7\s{r_1^2-a^2}+10\s{r_1^2-b^2}+\s{r_1^2-c^2}=17r_1\;\, {\rm and}\;\, r_1=d$$
is $r_1 = d \approx 0.593$. Note that at this point
$c\approx0.45657$. Then $r$ should increase when $d$ ranges from
$r_1(d)$ to $d_{\max}$. But the maximum value of $d$ is less than
the second critical value since
$d_{\max}^2=d^2(c=0)=(k-1)/k=4/9<1/2$. Therefore $r$ should be
bounded above in the interval $[r_1(d),d_{max}]$ and attain a
maximum at $d_{max}$. As $d$ is a decreasing function on $c$, $r$
should attain a maximum at $c=0$ and then decrease when $c$ ranges
from 0 to 0.45657.
 \item The state is in the symmetric region when
$d<r_1(d)$ and $c<r_1(c)$. Hence $r(c)$ should be minimal and
nearly constant when $0.45657<c<0.606$.
\end{enumerate}

The dotted line in Fig.\ref{r-mn1} represents the $c$ dependence
of the function $r(c)$. It agrees completely with the above
analyze.

The main point is that all the three curves coincide when
$c>r_1(c)$. In the next section we will show that this is not
accidental and the curves must coincide. In this context the
equation \eqref{4.r} is a surprising result. The quantity $r$, as
well as the maximal product overlap $\Lambda_{max}$, depends from
$c$ only. The rest of the state parameters appear in \eqref{4.r}
in the combination $|\psi|^2-c^2$ and drop out by the
normalization condition!
\begin{figure}[ht!]
\begin{center}
\includegraphics[width=7cm]{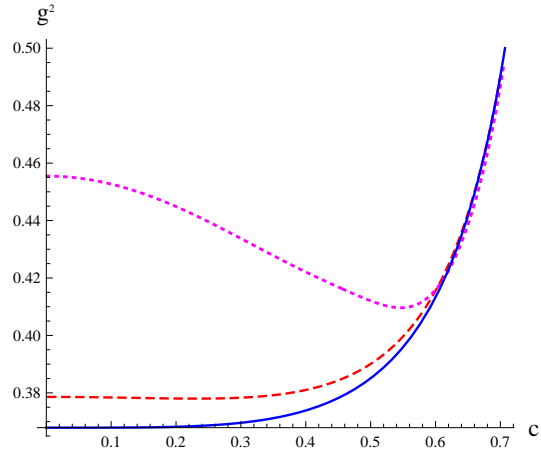}
\caption[fig5]{\label{g-mn1}(Color online) The plots of the
function $\Lambda_{max}(c)$. The solid line is the approximate
solution \eqref{4.g-as-ap}, the dashed and dotted lines are the
numerical solutions for the states \eqref{3.cmk1} and
\eqref{3.cm+1-1}, respectively. The curves may have different
behaviors when $c_N<r_1$, but coincide when $c_N\geq r_1$.}
\end{center}
\end{figure}

Furthermore, we can derive an analytic expression for the maximal
overlap. Using approximations \eqref{3.app-c} one obtains
\begin{equation}\label{4.g-as-ap}
\lm(c)=(1-c^2)e^{-(1-2c^2)/(1-c^2)}.
\end{equation}

The behavior of the function $\Lambda_{max}(c)$ is shown in
Fig.\ref{g-mn1}. The solid line is the curve \eqref{4.g-as-ap},
the dashed curve is the numerical solution for the state
\eqref{3.cmk1} and the dotted line is the numerical computation
for the state \eqref{3.cm+1-1}. They all coincide when $c>r_1(c)$.

For highly entangled states the maximal product overlap ranges
from its lower to the upper bound  when $c$ ranges from $r_1$ to
$r_2$. On the other hand the Bloch vector ${\bm b}$ of $N$th qubit
is collinear with axis $z$ and $b_z=1-2c^2$.  Thus $\Lambda_{max}$
is a one-variable function on $b_z$ and one can vary the
entanglement of the multiqubit W state by altering the Bloch
vector of a single qubit. The remaining qubits should be present
in order to create an entanglement, but their individual
characteristics do not play any role within the domain
$-1<b_z<1-2r_1^2,N\gg1$. These qubits are just spectators, they
should appear in the W state, but have no influence on the
entanglement of the state.

\section{General case}

The results of the previous sections are based on the fact that
the entanglement diameter $r$ is bounded below. In the symmetric
region it is rigidly bound by the following theorem.

{\bf Theorem 1.} If $r$ is a solution of Eq.\eqref{2.symr}, then
\begin{equation}\label{4.bounds}
\frac{1}{4}\leq r^2\leq \frac{1}{2}.
\end{equation}

{\bf Proof.} Note that
$$\frac{c_i^2}{r^2}=\left(1+\s{1-\frac{c_i^2}{r^2}}\right)\left(1-\s{1-\frac{c_i^2}{r^2}}\right) \le 2\left(1-\s{1-\frac{c_i^2}{r^2}}\right).$$
By summing over $i$ the above inequality and using \eqref{2.symr}
and the normalization condition one obtains
$$\frac{1}{r^2}\leq 2(n-n+2)=4.$$
Hence $r^2\geq1/4$. Next, from $x \le \sqrt{x}$ for $0 \le x \le
1$ it follows that
$$\sum_{i=1}^n\left(1-\frac{c_i^2}{r^2}\right)\le\sum_{i=1}^n\s{1-\frac{c_i^2}{r^2}},\quad
{\rm or}\quad n-\frac{1}{r^2}\leq n-2,$$ that is, $r^2<1/2$.

The inequalities \eqref{4.bounds} allow us to understand the
behavior of $\Lambda_{max}$ of arbitrary N-qubit W states in the
symmetric region. Indeed, in this region $c_i^2\sim1/N$ and
therefore $c_i^2/r^2\ll1$. Then one can expand the radicals in
\eqref{2.symr} and obtain
$$N-\frac{1}{2r^2}\approx N-2,$$
which generalizes \eqref{2.symr} and \eqref{2.symg} to arbitrary W
states with $c_N\ll1$.

In Eq.\eqref{3.cmkl} we have chosen equal coefficients in order to
reduce the number of independent parameters and make it easier the
analyze. Now Theorem\;1 states that it is irrelevant whether some
coefficients coincide. Decisive factor is that the coefficients
$c_i$ are small($\sim1/\s{N}$). Then the ratios $c_i/r$ are small
since $r$ is bounded below $(\sim1/2)$ and we can keep first
nonvanishing orders of these ratios. Surprisingly, all these
ratios are combined in such a way that they yield the Euclidean
norm of the state function and the final answer becomes
independent on the state parameters as well as the number of
particles involved.

In the asymmetric region the entanglement diameter $r$ should has
a lower bound but has not an upper bound since $r\to\infty$ at
$c_2\to r_2$. One may expect that the lower bound of $r$ in the
asymmetric region coincides with the upper bound of $r$ in the
symmetric region. But the following theorem shows that this is not
the case.

{\bf Theorem 2.} If $r$ is a solution of Eq.\eqref{2.asymr}, then
\begin{equation}\label{bounda}
r^2\geq \frac{1}{3}.
\end{equation}

{\bf Proof.} We use the same technique, namely
$$\frac{1}{r^2} = \sum_i^{N-1}\frac{c_i^2}{r^2} + \frac{c_N^2}{r^2} \le \sum_i^{N-1}2\left(1-\s{1-\frac{c_i^2}{r^2}}\right) + \frac{c_N^2}{r^2},$$
or
$$\frac{1}{r^2} \le 2-2\s{1-\frac{c_N^2}{r^2}} + \frac{c_N^2}{r^2} \le 3\quad {\rm since}\quad c_N\le r.$$
This bound, as well as bounds \eqref{4.bounds}, is tight, for
example, $r^2\to1/3$ at $c^2\to1/3$ in \eqref{4.r}.

Theorem\,2 explains why the asymmetric approximate solution
\eqref{4.r} fits the numerical date more quickly ($N\sim10$) than
the symmetric one \eqref{3.rapp}($N\sim20$). First, the lower
bound of $r$ is greater in this case. Second, since $c_N$ is
greater($c_N>r_1$) the remaining coefficients should be smaller
due to the normalization condition. These two factors together
make the ratio $c_i/r$ smaller. Hence the approximate solution
should has a better agreement with the exact one. Aside from that,
$r$ is a fast increasing function and goes to the infinity unlike
to the symmetric case. Hence the values of the  coefficients $c_i$
become irrelevant when $r\gg1$.

In fact  there is no W state in the asymmetric region that differs
markedly from the above model when many qubits are involved. The
following theorem completes the proof that in the asymmetric
region the maximal product overlap is a one-variable function.

{\bf Theorem 3.} If $c_N=r_1$, then
\begin{equation}\label{bound-cr1}
r_1^2=\frac{1}{3}+O(\frac{1}{N})
\end{equation}

{\bf Proof.} Note that on the boundary of the symmetric and
asymmetric regions $r=r_1=c_N$ and therefore $r_1^2\geq1/3$.
Expanding the radicals in \eqref{2.r1} in powers of $c_i^2/r_1^2$
one obtains
$$N-1-\frac{1-c_N^2}{2c_N^2}+O(\frac{1}{N})=N-2,$$
which gives \eqref{bound-cr1}.

Now we are ready to explain what is happening in the asymmetric
region.
\begin{enumerate}
 \item When many qubits $(N\gg1)$ are involved the first critical
value depends neither the number of qubits nor the state
parameters and is a constant, $r_1\approx1/\s{3}$.
 \item Regardless what is happening in the interval $0<c_N<r_1$ all
functions $r(c)$ must  converge to the point
$r(1/\s{3})\approx1/\s{3}$. This is the effect of the first
critical value.
 \item All functions $r(c)$ have the the same
vertical asymptote, namely, $r(c)\to\infty$ at $c\to1/\s{2}$. This
is the effect of the second critical value.
\end{enumerate}

These statements together give no chance to differ markedly exact
and approximate solutions in the asymmetric region. In conclusion,
when $N\gg1$, everywhere the maximal product overlap of W states
is governed by the smallest $b_z$ among the $z$ components of the
Bloch vectors. Using approximations
$$\frac{1}{2}\left(1+\s{1-\frac{c_i^2}{r^2}}\right)\approx e^{-c_i^2/4r^2},\quad i=1,2,\cdots,N-1$$
and equations \eqref{2.slight} and \eqref{4.r}  one obtains
\begin{equation}\label{5.g}
\lm(N\gg1)=
\begin{cases}
 \enskip \frac{1+b_z}{2}\,e^{-\frac{2b_z}{1+b_z}}, &{\rm if}\quad 0<b_z<\frac{1}{3}\cr
 \enskip \frac{1-b_z}{2}, & {\rm if}\quad b_z<0
\end{cases}
\end{equation}
 Graphic comparison of the interpolating formula and numerical computation of $\Lambda_{max}$ is shown in Fig.~\ref{int},
 where the $b_z$ dependence of $\Lambda_{max}$ is plotted for $N=10$. The solid and dashed lines represent the interpolating function \eqref{5.g} and numerical computation, respectively.

\begin{figure}[ht!]
\begin{center}
\includegraphics[width=8cm]{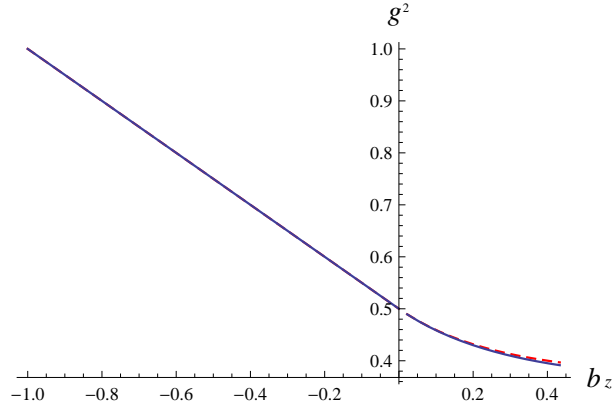}
\caption[fig6]{\label{int}(Color online) The maximal product
overlap $\Lambda_{max}$ as a function of $z$ component of the
Bloch vector $b_z$. The solid line is the interpolating formula
\eqref{5.g}. The dashed line is the numerical computation for a
10-qubit W state.}
\end{center}
\end{figure}

We did not plotted numerical results for different states because
different curves overlap and become indistinguishable. We failed
to find the states for which the numerical results markedly differ
from the plotted one provided $N\gg1$ holds.

\section{Discussion}

The main result of this work is the formula \eqref{5.g}. First, it
shows that sometimes the characterization and manipulation of the
entanglement of many qubit states is a simple task, while the case
of few or several qubits is a complicated problem. Second, it
states that when $N\gg1$ the maximal product overlap of W states
is universal in the asymmetric and slightly entangled regions and
the only exceptions are W states in the symmetric region that are
almost maximally entangled states. Then a question arises: Why do
the maximal product overlaps of the different W states far apart
from the exceptional points have the same behavior? Perhaps the
reason is that these states are all W-class states. Classification
of entangled states explains that pure states can be
probabilistically converted to one another within the same class
by stochastic local operations and classical
communication~\cite{Chir,four,bast}. And one can assume that
large-scale systems within the same class have the feature, aside
from the interconvertibility, that their entanglement is
universal. An argument in favor of this assumption is that the
geometric measure of entanglement~\cite{wei-03}, the relative
entropy of entanglement~\cite{vedr-97} and the logarithmic global
robustness~\cite{robust-universal} are related by bounding
inequalities and, moreover, the relative entropy of entanglement
is an upper bound to entanglement of distillation. Hence it is
unlikely that these measures may exhibit contradicting results and
each of them predicts its own and very different entanglement
behavior of large-scale W-states. If this argument is true, then
entanglement of large-scale states within the same class is
universal. However, states from the different classes may exhibit
different behaviors. By no means it is obvious, and probably not
true, that the maximal product overlap of GHZ-class states should
have a behavior similar to that of W states.

Another possible explanation is that the universality of the
maximal overlap of large scale W states is the inherent feature of
the geometric entanglement measure rather than the inherent
feature of quantum states. If it is indeed the case, then a
reasonable question is the following: do the exceptional points
really exist or they are just the fabrication of the geometric
entanglement measure? In this context the second exceptional point
is a fundamental quantity. Indeed, there are states applicable for
the perfect teleportation and dense coding and these states all
should possess the same amount of entanglement. Hence there is an
specific entanglement point(infinite entanglement diameter in the
case of the geometric measure) that can be associated with the
exceptional point. And one can assume that the second exceptional
point is a property of quantum states rather than a property of
the maximal product overlap.  And how about the first exceptional
point? Unfortunately, we do not know any strong arguments in favor
of it. In order to clarify the existence or nonexistence of the
first exceptional point, as well as the second exceptional point,
one has to analyze another reliable entanglement measure, say
relative entropy of entanglement~\cite{park}, and see whether it
possesses exceptional points.


\chapter[Generalized Schmidt Decomposition]{Non-strict inequality for Schmidt coefficients of three-qubit states}\label{schmidt}

\renewcommand{\p}{{\prime}}
\newcommand{\lz}{\lambda_0}
\newcommand{\lf}{\lambda_1}
\newcommand{\ls}{\lambda_2}
\newcommand{\lt}{\lambda_3}
\newcommand{\lh}{\lambda_4}
\renewcommand{\ra}{\rangle}
\renewcommand{\la}{\langle}
\newcommand{\h}{\mathcal{H}}
\renewcommand{\ket}[1]{\vert #1 \ra}
\renewcommand{\bra}[1]{\la #1 \vert}
\renewcommand{\bk}[2]{\la #1 \vert #2 \ra}
\renewcommand{\kb}[2]{\vert #1 \ra \la #2 \vert}
\renewcommand{\ov}[2]{\left\la #1 | #2 \right\ra}

In this chapter we analyze generalized Schmidt decomposition for three-qubit states and establish a relation between those Schmidt coefficients~\cite{inequal}.

Tripartite entanglement is a difficult subject for physicists. Essential results were obtained in this field~\cite{hig,acin,Chir,coff}, but fundamental problems remain unsolved. Two of them are the main obstacles to understand tripartite entanglement so well as bipartite entanglement.

The first problem is the entanglement transformation problem. Its essence is the set of necessary and sufficient conditions for transforming a given pure tripartite state to another pure tripartite state by local operations and classical communication. This problem is solved for bipartite systems~\cite{niels} and therefore the conditions for bipartite entanglement transformation based on majorization give a concise answer to the questions: among given states which ones are more/less entangled and which ones are incomparable? Unfortunately these problem is a puzzle in the case of tripartite systems.

The second problem, closely related to the first one, is the notion of maximally entangled states. This problem also is solved for bipartite systems and maximally entangled two-qubit states are the  Einstein-Podolsky-Rosen state~\cite{epr} and its local unitary(LU) equivalents known as  Bell states~\cite{bel}. However, there is no clear and unique definition of a maximally entangled state in multipartite settings. Consequently it is impossible to introduce operational entanglement measures based on optimal rates of conversion between arbitrary states and maximally entangled states~\cite{ben-conc,woot-98,ben-error}.

For bipartite systems these problems have been solved with the help of the Schmidt decomposition~\cite{schmidt1907,ek-sch}. Therefore its generalization to multipartite states can solve difficult problems related to multipartite entanglement. This generalization for three qubits is done by Ac\'in {\it et al}~\cite{acin}, where it is shown that an arbitrary pure state can be written as a linear combination of five product states. Independently, Carteret {\it et al} developed a method for such a generalization for pure states of arbitrary multipartite system, where the dimensions of the individual state spaces are finite but otherwise arbitrary~\cite{hig}. The main idea of this method is the following. First one finds the product vector which gives maximal overlap with a given quantum state vector. Then one considers product vectors orthogonal to the first product vector and finds among them the product vector that gives maximal overlap with the state vector. Continuing in this way, one finds a set of orthogonal product states and presents the state function as a linear combination of these product vectors. Since the first product vector is a stationarity point, the resulting canonical form contains a minimal set of state parameters.

Just as in bipartite case, the largest coefficient of this canonical form is the maximal product overlap which is an increasing entanglement monotone~\cite{barn-01}. Just as in bipartite case, the second largest coefficient is the maximal overlap over product states orthogonal to the nearest product state and so on. Additionally, this generalization of the Schmidt decomposition(GSD) gives insight into the nature of the maximally entangled three-qubit states~\cite{maxim} and is a good tool to extend Nielsen's theorem and operational entanglement measures to multipartite cases. Hence we accept that the amplitudes of GSD proposed in~\cite{hig} are multipartite Schmidt coefficients.

However, for a given quantum state the canonical form is not unique and the same state can have different canonical forms and therefore different sets of such amplitudes. The reason is that the stationarity equations defining stationarity points are nonlinear equations and in general have several solutions of different types. For instance, three-parameter W type states have four stationary points that create four equivalent canonical forms for the same W type state. This point is explained in detail in section III and now we focus on the question which of amplitude sets should be treated as Schmidt coefficients and which ones should be treated as insignificant mathematical solutions. A criterion should exist that can distinguish right Schmidt coefficients from false ones and we need such a criterion. It is unlikely that we can solve problems of three-qubit entanglement without knowledge of what quantities are the relevant entanglement parameters.

The canonical form whose largest coefficient is the maximal product overlap, as in bipartite case, presents GSD and others are irrelevant solutions of stationarity equations. Then our task is to single out the canonical form whose largest coefficient is the maximal product overlap and this requirement gives rise to a nontrivial relation between Schmidt coefficients of three-qubits. This situation differs from the bipartite case, where each set of positive numbers satisfying the normalization condition presents Schmidt coefficients of a some quantum state and its LU-equivalents. In contrary, in three-qubit case four positive and one complex coefficients satisfying the normalization condition are Schmidt coefficient if they satisfy an equality (derived in section V), otherwise they do not present relevant entanglement parameters at all. This is the main result of this work.

It is clear how do we single out the canonical form whose largest coefficient is the maximal product overlap. We should single out the closest product state of a given quantum state that gives a true maximum for overlap.  Of course, we cannot find closest product states of generic three-qubit states because there is no method to solve generic stationarity equation so far. Hence to distinguish the true maximum from other stationary points we require that the second variation of the maximal product overlap is negative everywhere and this condition yields the desired inequality.

However, the derived non-strict inequality is a necessary but not a sufficient condition for specifying uniquely the Schmidt coefficients. It establishes an upper bound for the three middle coefficients and this upper bound is defined by the largest coefficient. But it does not give an upper bound for the last coefficient which also should have an upper bound conditioned by four previous coefficients. The existence of an upper bound for the last coefficient is clarified in section IV which means that an additional inequality is needed to distinguish clearly the right Schmidt coefficients from the false ones.

This chapter is organized as follows. In Sec. II we repeat the derivation of GSD for three-qubit systems. In Sec.III we present an illustrative example showing that the canonical form is not unique. In Sec. IV we compute the second variation of the maximal product overlap. In Sec.V we derive the non-strict inequality for three-qubit Schmidt coefficients and analyze particular cases. In Sec. VI we show that another inequality is needed to specify uniquely Schmidt coefficients. In Sec. VII we discuss our results.

\section{Generalized Schmidt decomposition\\ for three-qubits}

In this section we derive GSD for three-qubit pure states in detail since the derivation method is used in Sec.IV to compute the second variation of the maximal product overlap.

For a three-qubit pure state $\ket{\psi}$ the maximal product overlap $\lz(\psi)$ is defined as
\begin{equation}\label{2.lzdef}
\Lambda_{max}\equiv\lz(\psi)=\max|\la u_1u_2u_3|\psi\ra|,
\end{equation}
\noindent where the maximum is over all tuples of vectors $|u_k\ra$ with $\|u_k\|=1, (k=1,2,3)$. Note, hereafter the labels within each ket refer to qubits 1, 2 and 3 in that order.

To find the maximum of $\lz(\psi)$ with constraints $\|u_k\|=1$ we form the auxiliary function $\Lambda$ given by
\begin{equation}\label{2.Lzdef}
\Lambda=|\la u_1u_2u_3|\psi\ra|^2 + \alpha_1(\la u_1|u_1\ra-1 )+ \alpha_2(\la u_2|u_2\ra-1) +\alpha_3(\la u_3|u_3\ra-1),
\end{equation}
where the Lagrange multipliers $\alpha_k$ enforce unit nature of the local vectors $|u_k\ra$.

Now we consider small variation of $|u_k\ra$ and $\alpha_k$, that is $|u_k\ra\to|u_k\ra+|\delta u_k\ra;\,\alpha_k\to\alpha_k+\delta\alpha_k$, and compute the resulting variation of $\Lambda$.
Hereafter $\delta\Lambda$  and $\delta^n\Lambda$ mean the full and the $n$th variation of $\Lambda$, respectively.

First we consider the first variation and require that $\delta^1\Lambda=0$. Then the vanishing of the partial derivatives of $\Lambda$ with respect to the Lagrange multipliers $\alpha_k$ gives
\begin{equation}\label{2.real}
\la u_1|u_1\ra-1=\la u_2|u_2\ra-1=\la u_3|u_3\ra-1=0,
\end{equation}
which are constraints on the local states $\ket{u_i}$.

The vanishing of the partial derivatives of $\Lambda$ with respect to these local states gives
\begin{eqnarray}\label{2.stp}
  \nonumber
  \ov{\psi}{u_1u_2u_3}\ov{u_2u_3}{\psi}+\alpha_1\ket{u_1} &=& 0 \\
  \ov{\psi}{u_1u_2u_3}\ov{u_1u_3}{\psi}+\alpha_2\ket{u_2} &=& 0 \\
  \nonumber
  \ov{\psi}{u_1u_2u_3}\ov{u_1u_2}{\psi}+\alpha_3\ket{u_3} &=& 0
\end{eqnarray}
and their Hermitian conjugates. From \eqref{2.stp} it follows that $\alpha_1=\alpha_2=\alpha_3=-\lz^2$ and therefore we can adjust phases of $\ket{u_k}$ so that stationarity equations \eqref{2.stp} become
\begin{equation}\label{2.stf}
\ov{u_2u_3}{\psi}=\lz\ket{u_1},\quad\ov{u_1u_3}{\psi}=\lz\ket{u_2},\quad \ov{u_1u_2}{\psi}=\lz\ket{u_3}.
\end{equation}

In the case of three-qubit states these equations are sufficient to construct GSD as follows. For each single-qubit state $|u_k\ra$ there is, up to an arbitrary phase, a unique single-qubit
state $|v_k\ra$ orthogonal to it. Then from \eqref{2.stf} it follows that the product states $$\ket{u_1u_2v_3},\quad\ket{u_1v_2u_3},\quad\ket{v_1u_2u_3}$$
are orthogonal to $\ket{\psi}$ and \eqref{2.stf} can be written as
\begin{eqnarray}\label{2.stlup}
  \nonumber
  \ov{u_1}{\psi} &=& \lz\ket{u_2u_3}+\lf\ket{v_2v_3}, \\
  \ov{u_2}{\psi} &=& \lz\ket{u_1u_3}+\ls\ket{v_1v_3}, \\
  \nonumber
  \ov{u_3}{\psi} &=& \lz\ket{u_1u_2}+\lt\ket{v_1v_2}.
\end{eqnarray}
 We choose the phases of $\ket{v_k}$ such that $\lf,\ls,\lt\ge0$. Note that after this choice the collective sign-flip of $\ket{v_k}$'s does not change anything and we will use this freedom in a little while.

 The state $\psi$ can be written as a linear combination of five product states as follows
\begin{equation}\label{2.gsd}
\ket{\psi}=\lz\ket{u_1u_2u_3} + \lf\ket{u_1v_2v_3} + \ls\ket{v_1u_2v_3} + \lt\ket{v_1v_2u_3} + \lh\ket{v_1v_2v_3},
\end{equation}
where $\lh$ is a complex number. It has two constraints, first $\lz\ge|\lh|$ and second $-\pi/2\leq {\rm Arg}(\lh) \leq\pi/2$, which can be achieved by the simultaneous change of the signs of the local states $\ket{v_k}$.

Sometimes one relabels $\ket{u}\to\ket{0},\;\ket{v}\to\ket{1}$ for the simplicity. We leave \eqref{2.gsd} as is and refer to as GSD for three-qubits.

\section{Illustrative example}

In this section we show that for a given state $\ket{\psi}$ the canonical form \eqref{2.gsd} is not unique except rare cases and additional relations are needed to single out the Schmidt decomposition from the useless canonical forms.

Consider a three-parameter family of W type states~\cite{Chir} given by
\begin{equation}\label{3.w}
|w(a,b,c)\ra=a|100\ra+b|010\ra+c|001\ra,
\end{equation}
where parameters $a,b,c$ are all positive since their phases can be eliminated by appropriate LU transformations. Stationarity equations \eqref{2.stf} of this state have three simple solutions and one special solution which exists if and only if parameters $a,b,c$ can form a triangle~\cite{analyt}.

The three simple solutions are
\begin{eqnarray}\label{3.soltr}
  \ket{u_1(1)}=\ket{1},\quad \ket{u_2(1)}=\ket{0},\quad\ket{u_3(1)}=\ket{0},\quad\lz(1)&=& a;\label{3.soltr1} \\
  \ket{u_1(2)}=\ket{0},\quad \ket{u_2(2)}=\ket{1},\quad\ket{u_3(2)}=\ket{0},\quad\lz(2)&=& b;\label{3.soltr2} \\
  \ket{u_1(3)}=\ket{0},\quad \ket{u_2(3)}=\ket{0},\quad\ket{u_3(3)}=\ket{1},\quad\lz(3)&=& c;\label{3.soltr3}
\end{eqnarray}
where numbers within brackets mark solutions.

The fourth nontrivial solution is
\begin{eqnarray}\label{3.solnon}
  \nonumber
  \ket{u_1(4)} &=& \frac{a\sqrt{2r_a}\,\ket{0}+\sqrt{r_br_c}\,\ket{1}}{4S},\quad
  \ket{u_2(4)} \frac{b\sqrt{2r_b}\,\ket{0}+\sqrt{r_ar_c}\,\ket{1}}{4S} \\
  \ket{u_3(4)} &=& \frac{c\sqrt{2r_c}\,\ket{0}+\sqrt{r_ar_b}\,\ket{1}}{4S},\quad \lz(4)=\frac{abc}{2S}
\end{eqnarray}
where
\begin{eqnarray}\label{3.bloch}
\nonumber
  & & r_a=b^2+c^2-a^2,\\
  && r_b=a^2+c^2-b^2,\\ \nonumber
  & & r_c=a^2+b^2-c^2
\end{eqnarray}

and $S$ is the area of the triangle $(a,b,c)$.

At $r_ar_br_c=0$ the special solution reduces to a trivial solution. Note that absolute values of these quantities $|r_a|,|r_b|,|r_c|$ are magnitudes of Bloch vectors of the first, second and third qubits, respectively and $r_ar_br_c=0$ means that some of one-particle reduced densities is a multiple of the unit matrix. In other words, the states with a completely mixed subsystems appear at the edge of the special solution and viceversa.

These four solutions of \eqref{2.stf} give the following four canonical forms for the state \eqref{3.w}

\begin{eqnarray}\label{3.cf}
  \ket{w(a,b,c)}&=&\\ \nonumber
  & &\!\!\!\lz(1)\ket{u_1(1)u_2(1)u_3(1)} + b\ket{v_1(1)u_2(1)v_3(1)} + c\ket{v_1(1)v_2(1)u_3(1)};\label{3.cf1}\\ \nonumber
 & &\\
  \ket{w(a,b,c)}&=&\\ \nonumber
  & &\!\!\!\lz(2)\ket{u_1(2)u_2(2)u_3(2)} + c\ket{u_1(2)u_2(2)u_3(2)} +  a\ket{v_1(2)v_2(2)u_3(2)};\label{3.cf2}\\ \nonumber
  & &\\
  \ket{w(a,b,c)}&=&\\ \nonumber
  & &\!\!\!\lz(3)\ket{u_1(3)u_2(3)u_3(3)} + b\ket{u_1(2)v_2(2)v_3(2)} + a\ket{v_1(3)u_2(3)v_3(3)};\label{3.cf3}
\end{eqnarray}
\begin{eqnarray}\label{3.cf4}
 \ket{w(a,b,c)} &=&\\ \nonumber
  & &\!\!\!\!\!\lz(4)\ket{u_1(4)u_2(4)u_3(4)} +i\frac{\sqrt{2r_ar_br_c}}{4S}\ket{v_1(4)v_2(4)v_3(4)}+ \\ \nonumber
 & & \!\!\!\!\!\!\frac{ar_a}{4S}\ket{u_1(4)v_2(4)v_3(4)}+\frac{br_b}{4S}\ket{v_1(4)u_2(4)v_3(4)}+  \frac{cr_c}{4S}\ket{u_1(4)u_2(4)v_3(4)}.
\end{eqnarray}

Now which of these canonical forms is a right decomposition?

It is easy to clarify this question in this particular case since we have all solutions of the stationarity equations \eqref{2.stf} and can single out the one whose largest coefficient is the dominant eigenvalue of \eqref{2.stf}.

The answer is~\cite{analyt}:
\begin{enumerate}
  \item if $r_a<0$ then only $\lz(1)$ is the maximal eigenvalue of \eqref{2.stf}, but $\lz(2)$, $\lz(3)$, $\lz(4)$ are not.
  \item if $r_b<0$ then only $\lz(2)$ is the maximal eigenvalue of \eqref{2.stf}, but $\lz(1)$, $\lz(3)$, $\lz(4)$ are not.
  \item if $r_c<0$ then only $\lz(3)$ is the maximal eigenvalue of \eqref{2.stf}, but $\lz(1)$, $\lz(2)$, $\lz(4)$ are not.
  \item otherwise only $\lz(4)$ is the maximal eigenvalue of \eqref{2.stf}, but $\lz(1)$, $\lz(2)$, $\lz(3)$ are not.
\end{enumerate}
However, we are unable to solve \eqref{2.stf} for generic states and single out the maximal eigenvalue in this way. Also we are not forced to compare all eigenvalues of \eqref{2.stf} to see whether the largest coefficient of a given decomposition is the maximal product overlap. We can just require that it is a truly maximum of the product overlap instead and obtain a criteria which shows whether the largest coefficient of a given canonical form is the maximal product overlap of the state. This will be done in next sections.

\section{The second variation\\ of the maximal product overlap}

In this section we compute the second variation of the maximal product overlap.

We compute it at stationary points to single out truly maximums and therefore we use the results coming from the vanishing of the first variation. Straightforward calculation gives
\begin{eqnarray}\label{4.var1}
  \delta^2\Lambda &=& \lz^2\left|\ov{\delta u_1}{u_1} + \ov{\delta u_2}{u_2} +\ov{\delta u_3}{u_3} \right|^2 \\ \nonumber
  &-& \lz^2\left(\|\delta u_1\|^2 +\|\delta u_2\|^2 +\|\delta u_3\|^2 \right)\\ \nonumber
  &+& \lz^2\left( \ov{\delta u_1}{u_1}\ov{\delta u_2}{u_2} + \ov{\delta u_1}{u_1}\ov{\delta u_3}{u_3} + \ov{\delta u_2}{u_2}\ov{\delta u_3}{u_3} + {\rm cc}\right)\\
  \nonumber
  &+&\lz\big(\lt\ov{\delta u_1}{v_1}\ov{\delta u_2}{v_2} + \ls\ov{\delta u_1}{v_1}\ov{\delta u_3}{v_3}\\ \nonumber
  &+&\lf\ov{\delta u_2}{v_2}\ov{\delta u_3}{v_3} + {\rm cc}\big)\\
  \nonumber
  &+&\delta\alpha_1\delta||u_1||^2 + \delta\alpha_2\delta||u_2||^2 + \delta\alpha_3\delta||u_3||^2,
\end{eqnarray}
where cc means complex conjugate.

Using the identity $\|\delta u_k\|^2\equiv\left|\ov{\delta u_k}{u_k}\right|^2 + \left|\ov{\delta u_k}{v_k}\right|^2 $ it can be rewritten as
\begin{eqnarray}\label{4.var2}
  \delta^2\Lambda &=& -\lz^2\left(\left|\ov{\delta u_1}{v_1}\right|^2 +  \left|\ov{\delta u_2}{v_2}\right|^2 +\left|\ov{\delta u_3}{v_3}\right|^2 \right) \\ \nonumber
   &+& \lz\big(\lt\ov{\delta u_1}{v_1}\ov{\delta u_2}{v_2} + \ls\ov{\delta u_1}{v_1}\ov{\delta u_3}{v_3} \\ \nonumber
   &+& \lf\ov{\delta u_2}{v_2}\ov{\delta u_3}{v_3} + {\rm cc}\big) \\ \nonumber
   &+& \lz^2\left(\delta||u_1||^2\delta||u_2||^2 + \delta||u_1||^2\delta||u_3||^2) + \delta||u_2||^2\delta||u_3||^2\right)\\ \nonumber
   &+&\delta\alpha_1\delta||u_1||^2 + \delta\alpha_2\delta||u_2||^2 + \delta\alpha_3\delta||u_3||^2.
\end{eqnarray}
From \eqref{2.real} it follows that terms containing $\delta||u_k||^2$ vanish and the second variation takes the form
\begin{eqnarray}\label{4.var3}
  \delta^2\Lambda &=& -\lz^2\left(\left|\ov{\delta u_1}{v_1}\right|^2 +  \left|\ov{\delta u_2}{v_2}\right|^2 +\left|\ov{\delta u_3}{v_3}\right|^2 \right) \\\nonumber
   &+& \lz\big(\lt\ov{\delta u_1}{v_1}\ov{\delta u_2}{v_2} + \ls\ov{\delta u_1}{v_1}\ov{\delta u_3}{v_3}\\ \nonumber
   &+& \lf\ov{\delta u_2}{v_2}\ov{\delta u_3}{v_3} + {\rm cc}\big).
\end{eqnarray}
From $\ov{\delta u_i}{v_i}\ov{\delta u_j}{v_j}\leq|\ov{\delta u_i}{v_i}\ov{\delta u_j}{v_j}|$ it follows that
\begin{equation}\label{4.est}
   \delta^2\Lambda\leq-\lz\sum_{i,j=1}^3|\ov{\delta u_i}{v_i}||\ov{\delta u_j}{v_j}|A_{ij},
\end{equation}
where the real and symmetric matrix $A$ is given by

\begin{equation}\label{4.mat}
A=
\begin{pmatrix}
\lz & -\lt & -\ls\\
-\lt & \lz & -\lf\\
-\ls & -\lf & \lz
\end{pmatrix}
.
\end{equation}
Note that the inequality \eqref{4.est} can be saturated when vectors $\ket{\delta u_k}$ are all multiples of vectors $v_k$ and therefore \eqref{4.est} gives the least upper bound of $\delta^2\Lambda$.

\section{A non-strict inequality\\ for the Schmidt coefficients}

In this section we derive a non-strict inequality for the Schmidt coefficients.

The condition $\delta^2\Lambda\leq0$ holds everywhere if and only if the matrix $A$ is positive which means that
\begin{equation}\label{5.mpos}
   {\rm tr}(A)\geq0,\quad ({\rm tr}(A))^2- {\rm tr}(A^2)\geq0,\quad \det(A)\geq0,
\end{equation}
where tr and det mean the trace and the determinant of a matrix, respectively.

The first condition ${\rm tr}(A)=3\lz>0$  is satisfied and does not give anything. Similarly, the second condition $({\rm tr}(A))^2- {\rm tr}(A^2)=6\lz^2-2(\lf^2+\ls^2+\lt^2)>0$ is a triviality since $\lz$ is the largest coefficient. But the third condition $\det(A)\geq0$ gives
\begin{equation}\label{5.ineq}
   \lz^2\;\geq\;\lf^2+\ls^2+\lt^2+2\;\frac{\lf\ls\lt}{\lz}.
\end{equation}
This is a new and unexpected relation which says that nondiagonal coefficients all together are bounded above by the quantity depending only on the largest coefficient and therefore they should be small.

Let us consider some particular cases. First consider the case when some of nondiagonal coefficients, namely $\lf$, vanishes. Then \eqref{5.ineq} reduces to the
\begin{equation}\label{5.inl10}
   \lz^2\;\geq\;\ls^2+\lt^2,\quad\lf=0.
\end{equation}
 The solution \eqref{3.soltr1} and the canonical form \eqref {3.cf1} present this case. This happens when a quantum state is a linear combination of three product states and its amplitudes in a computational basis satisfy \eqref{5.inl10}. Then the largest amplitude is the largest Schmidt coefficient and GSD is achieved by a simple flipping of local states. Similarly, the solution \eqref{3.soltr2} with the form \eqref {3.cf2} and solution \eqref{3.soltr3} with the form \eqref {3.cf3} are the cases $\ls=0$ and $\lt=0$, respectively.

 Conversely, when amplitudes of a three-term state in a computational basis do not satisfy \eqref{5.inl10} it appears a special solution \eqref{3.solnon} which creates a new factoizable basis. In this basis new amplitudes of the state given by \eqref {3.cf4} satisfy \eqref{5.ineq}.  Indeed,
 \begin{equation}\label{5.intri}
 4(abc)^2\geq (ar_a)^2 + (br_b)^2 + (cr_c)^2 +r_ar_br_c,
 \end{equation}
 which can be checked using triangle inequalities. This means that if amplitudes of the state were not satisfying \eqref{5.ineq} in the initial basis from product states then it appears a special solution giving rise to a new basis from product states and in this final basis amplitudes do satisfy \eqref{5.ineq}.

 In conclusion, \eqref{5.ineq} clearly  indicates whether a given canonical form is GSD or not and this is its main advantage.

 Another particular case which we would like to elucidate is the following. We want to find a quantum state for which \eqref{5.ineq} is saturated and nondiagonal coefficients have the maximal value. We equate all nondiagonal coefficients for the simplicity and \eqref{5.ineq} reduces to
\begin{equation}\label{5.insym}
   \lz\;\geq\;2\lambda,\quad\lf=\ls=\lt\equiv\lambda
\end{equation}
and we are looking for the states with $ \lz=2\lambda$. The W state is a such state, this is easy to see by setting $a=b=c$ in \eqref {3.cf4}. These substitutions yield
\begin{equation}\label{5.inw}
   \lz(W)=2\lambda(W)=\sqrt{2}|\lh(W)|,
\end{equation}
which shows that \eqref{5.ineq} is indeed a non-strict inequality and gives the least upper bound for the nondiagonal coefficients.

\section{Missed inequality}

In this section we show that another inequality is needed to specify uniquely the Schmidt coefficients of three-qubit states. To prove this statement let us assume the converse. Then \eqref{5.ineq} is a necessary and sufficient condition and GSD coefficients should satisfy only \eqref{5.ineq} and $\lz\ge|\lh|$. Consider symmetric states  and put $\lf=\ls=\lt=\lambda$ which yields $\lz\ge 2\lambda$. Then it exists a state such that $\lz=|\lh|=2\lambda$ and its GSD is given by
\begin{equation}\label{6.wrongpsi}
   \ket{\psi_{contr}} =\frac{1}{\sqrt{11}} \left(2\ket{000} + \ket{011} + \ket{101} + \ket{110} + 2\ket{111}\right).
\end{equation}
This is a wrong GSD. Indeed,
$$\lz^2(wrong)=\frac{4}{11},$$
but it is shown in Ref.\cite{maxim} that absolute minimum of $\lz^2$ over three-qubit pure states is 4/9 and this minimum is reached at the W state. Hence no three qubit state exists for which $\lz^2<4/9$. For the sake of clarity we present the maximal product overlap and nearest product state for the state \eqref{6.wrongpsi}
\begin{equation}\label{6.right}
   \lz^2(right)=\frac{14+3\sqrt{2}}{22},\; \ket{u_1u_2u_3} = (\cos\theta\ket{0}+\sin\theta\ket{1})^{\otimes3},\; \tan\theta=1+\sqrt{2},
\end{equation}
which can be derived by usual maximization tools.

 This example shows that conditions $\lz\ge 2\lambda$ and $\lz\ge |\lh|$ are insufficient and another relation should exist and this new relation should give bounds for the last Schmidt coefficient. We know that when all nondiagonal coefficients vanish the upper bound is $|\lh(\max)|=\lz$(known as GHZ state) and when all nondiagonal elements are maximal given by \eqref{5.inw} the upper bound is $|\lh(\max)|=\lz/\sqrt{2}$(at the W state). Hence for $|\lh|$ it exists an upper bound depending on the remaining coefficients and this upper bound gives those particular bounds at GHZ and W states, respectively.

 We can derive this upper in some simple cases, for instance, when $\ls=\lt=0$ and the state is
 \begin{equation}\label{6.simple}
   \ket{\psi_{simple}}=\lz\ket{000}+\lf\ket{011}+\lh\ket{111},
 \end{equation}
 where $\lh$ is positive as its phase is meaningless in this case.

 The stationarity equations \eqref{2.stf} for the state \eqref{6.simple} have a relevant solution given by
 \begin{equation}\label{6.simsol}
    \ket{u_1}=\frac{\lf\ket{0}+\lh\ket{1}}{\sqrt{\lf^2+\lh^2}},\quad \ket{u_2}=\ket{1}, \quad \ket{u_3}=\ket{1}, \quad \lz^\p=\sqrt{\lf^2+\lh^2}.
 \end{equation}
 From this solution it follows that \eqref{6.simple} is a right decomposition if and only if $\lz\ge\lz^\p$, that is
 \begin{equation}\label{6.simboun}
   \lz^2\,\ge\lf^2+\lh^2, \quad \ls=\lt=0.
 \end{equation}
 This inequality gives the least upper bound for the last Schmidt coefficient when two nondiagonal coefficients vanish. Unfortunately the tools used in this work were unable to find the least upper bound of $|\lh|$ for generic states.

\section{Summary}

The main result of this work is the inequality \eqref{5.ineq}. Its role is to separate out three-qubit Schmidt coefficients from the set of four positive and one complex numbers. As is explained in above section, it is a necessary but not a sufficient condition and another inequality should exist to complete the task.

It is likely that the three nondiagonal elements together define bounds for the last Schmidt coefficients in the missed inequality. Then the nondiagonal coefficients are not just extra terms in GSD, but the ones which can show some important features of tripartite entanglement unknown so far.

Another application of the derived non-strict inequality is that it can give us a hint how do we extend Nielsen's protocol or operational entanglement measures to three-qubit states. For instance, in bipartite case the protocol relies on inequalities quadratic on Schmidt coefficients. In three-qubit case such a theorem should include cubic relations as is evident from \eqref{5.ineq}.


\chapter[Summary]{Summary}\label{summary}

In this chapter we list the main results of the thesis.

\bigskip

{\bf 1.} We have  developed a method to derive algebraic equations for the geometric measure of entanglement of three-qubit pure states. Owing to it we have presented the first calculation of the geometric measure of entanglement in a wide range of three-qubit systems, including the general class of W states and states which are symmetric under the permutation of two qubits. Additionally, we have shown that the nearest separable states are not necessarily unique, and highly entangled states are surrounded by a one-parametric set of equally distant separable states.

\medskip

{\bf 2.} We have derived an explicit expression for the geometric measure of entanglement for three-qubit states that are linear combinations of four orthogonal product states and thus have Schmidt rank 4. Any pure three-qubit state can be written in terms of five preassigned orthogonal product states via Schmidt decomposition. Thus the states discussed here are more general states compared to the well-known Greenberger--Horne--Zeilinger and W states that have less rank. In fact, just a single step is needed to compute analytically the geometric measure for five-parameter states and thereby to get the answer for arbitrary three-qubit states.

\medskip

{\bf 3.} Using derived analytic expressions we have established that the geometric measure for three-qubit states has three different expressions depending on the range of definition in parameter space. Each expression of the measure has its own geometrically meaningful interpretation.  The states that lie on joint surfaces separating different ranges of definition, designated as shared states, are quantum channels for perfect teleportation and dense coding. Hence we have found a criterion which shows whether or not a given state can be be applied as a dual quantum channel.

\medskip

{\bf 4.} The Groverian measures are analytically computed in various types of three-qubit states and the final results are also expressed in terms of local-unitary invariant quantities in each type. Hence we use this relations to classify entangled regions of the Hilbert space as follows: in each region some of polynomial invariants are important and define uniquely Groverian measure of entanglement, while the remaining polynomial invariants are irrelevant. In this way we obtained six different entangled
regions for three-qubit pure states.

\medskip

{\bf 5.} We have developed a powerful method to compute analytically entanglement measures of multipartite systems. The method is based on duality which consists of two bijections. The first one creates a map between highly entangled n-qubit quantum states and n-dimensional unit vectors. The second one does the same between n-dimensional unit vectors and n-part product states. In this way we have obtained a double map or duality. The main advantage of the map is that, if one knows any of the three vectors, then one instantly finds the other two.

\medskip

{\bf 6.} We have found the nearest product states for arbitrary generalized W states of n qubits, and shown that the nearest product state is essentially unique if the W state is highly entangled. It is specified by a unit vector in Euclidean n-dimensional space. We have used this duality between unit vectors and highly entangled W states to find the geometric measure of entanglement of such states. The duality  map reveals two critical values for quantum state parameters. The first critical value separates symmetric and asymmetric entangled regions of highly entangled states, while the second one separates highly and slightly entangled states.

\medskip

{\bf 7.} We have shown that when N $\gg 1$ the geometric entanglement measure of general N-qubit W states, except maximally entangled W states, is a one-variable function and depends only on the Bloch vector with the minimal z component. Hence one can prepare a W state with the required maximal product overlap by altering the Bloch vector of a single qubit. Also we have computed analytically the geometric measure of large-scale W states by describing these systems in terms of very few parameters. The final formula relates two quantities, namely the maximal product
overlap and the Bloch vector, that can be easily estimated in experiments.

\medskip

{\bf 8.} We have derived an interpolating formula for the geometric measure of entanglement even if neither the number of particles nor the most of state parameters are known. The importance of the interpolating formula in quantum information is threefold. First, it connects two quantities, namely the Bloch vector and maximal product overlap, that can be easily estimated in experiments. Second, it is an example of how we compute entanglement of a quantum state with many unknowns. Third, if the Bloch vector varies within the allowable domain then maximal product overlap ranges from its lower to its upper bounds. Then one can prepare the W state with the given maximal product overlap  bringing into the position the Bloch vector.

\medskip

{\bf 9.} We have derived a non-strict inequality between three-qubit Schmidt coefficients, where the largest coefficient defines the least upper bound for the three nondiagonal coefficients or, equivalently, the three nondiagonal coefficients together define the greatest lower bound for the largest coefficient. In addition, we have shown the existence of another inequality which should establish an upper bound for the remaining Schmidt coefficient. The role of the inequalities is to separate out three-qubit Schmidt coefficients from the set of four positive and one complex numbers. Another application of the derived non-strict inequality is that it can give us a hint how do we extend entanglement transforming protocols or operational entanglement measures to three-qubit states.


\chapter*{Acknowledgments}

{}\hspace{0.45cm} I want to thank my coauthors Prof. Anthony
Sudbery, Prof. Dae-Kil Park and Dr. Sayatnova Tamaryan for the
collaboration.

I am grateful to my supervisor Dr. Lekdar Gevorgian who helped and
supported my research.

Also I would like to thank Mrs. Eleonora Tameyan who was always
near to my office and always ready to help.

\begin{appendices}


\chapter{Matrix ${\cal O}$}

\setcounter{equation}{0}
\renewcommand{\theequation}{A.\arabic{equation}}

One can easily show that the elements of ${\cal O}$ defined in
Eq.(\ref{lu1}) are given by
\begin{eqnarray}
\label{app1} & &{\cal O}_{11} = \frac{1}{2} \left( u_{11} u_{22}^*
+ u_{11}^* u_{22} +
               u_{12} u_{21}^* + u_{12}^* u_{21} \right)
                                                \\   \nonumber
& &{\cal O}_{22} = \frac{1}{2} \left( u_{11} u_{22}^* + u_{11}^*
u_{22} -
               u_{12} u_{21}^* - u_{12}^* u_{21} \right)
                                                \\   \nonumber
& &{\cal O}_{33} = |u_{11}|^2 - |u_{12}|^2    \\   \nonumber &
&{\cal O}_{12} = \frac{i}{2} \left( u_{12} u_{21}^* + u_{11}
u_{22}^* -
               u_{12}^* u_{21} - u_{11}^* u_{22} \right)
                                                \\   \nonumber
& &{\cal O}_{21} = \frac{i}{2} \left( u_{12} u_{21}^* + u_{11}^*
u_{22} -
               u_{12}^* u_{21} - u_{11} u_{22}^* \right)
                                                \\   \nonumber
& &{\cal O}_{13} = u_{11} u_{12}^* + u_{11}^* u_{12}
                                                     \\   \nonumber
& &{\cal O}_{31} = u_{11} u_{21}^* + u_{11}^* u_{21}
                                                      \\   \nonumber
& &{\cal O}_{23} = -i \left( u_{11} u_{12}^* + u_{21}^* u_{22}
\right)
                                                    \\   \nonumber
& &{\cal O}_{32} = i \left( u_{11} u_{21}^* + u_{12}^* u_{22}
\right)
\end{eqnarray}
where $u_{ij}$ is element of the unitary matrix defined in
Eq.(\ref{lu1}). It is easy to prove ${\cal O} {\cal O}^T = {\cal
O}^T {\cal O} = \openone$, which indicates that ${\cal O}_{\alpha
\beta}$ is an element of O(3).


\chapter{Bloch representation}

\setcounter{equation}{0}
\renewcommand{\theequation}{B.\arabic{equation}}
If the density matrix associated from the pure state $|\psi
\rangle$ in Eq.(\ref{state1}) is represented by Bloch form like
Eq.(\ref{density1}), the explicit expressions for $\vec{v}_i$ are
\begin{eqnarray}
\label{appb1} & &\vec{v}_1 = \left(     \begin{array}{c}
                         2 \lambda_0 \lambda_1 \cos \varphi  \\
                         2 \lambda_0 \lambda_1 \sin \varphi  \\
         \lambda_0^2 - \lambda_1^2 - \lambda_2^2 - \lambda_3^2 - \lambda_4^2
                          \end{array}
                                                    \right)
\hspace{1.0cm} \vec{v}_2 = \left(     \begin{array}{c}
            2 \lambda_1 \lambda_3 \cos \varphi + 2 \lambda_2 \lambda_4 \\
                        - 2 \lambda_1 \lambda_3 \sin \varphi  \\
         \lambda_0^2 + \lambda_1^2 + \lambda_2^2 - \lambda_3^2 - \lambda_4^2
                          \end{array}
                                                    \right)
                                                               \\  \nonumber
& &\hspace{3.0cm} \vec{v}_3 = \left(     \begin{array}{c}
            2 \lambda_1 \lambda_2 \cos \varphi + 2 \lambda_3 \lambda_4 \\
                        - 2 \lambda_1 \lambda_2 \sin \varphi  \\
         \lambda_0^2 + \lambda_1^2 - \lambda_2^2 + \lambda_3^2 - \lambda_4^2
                          \end{array}
                                                    \right)
\end{eqnarray}
and the components of $h^{(i)}$ are
\begin{eqnarray}
\label{appb2} & &h^{(1)}_{11} = 2 \lambda_2 \lambda_3 + 2
\lambda_1 \lambda_4 \cos \varphi, \hspace{1.0cm} h^{(1)}_{22} = 2
\lambda_2 \lambda_3 - 2 \lambda_1 \lambda_4 \cos \varphi
                                                                   \\  \nonumber
& &h^{(1)}_{33} = \lambda_0^2 + \lambda_1^2 - \lambda_2^2 -
\lambda_3^2
                  + \lambda_4^2,
\hspace{1.0cm} h^{(1)}_{12} = h^{(1)}_{21} = -2 \lambda_1
\lambda_4 \sin \varphi
                                                                   \\  \nonumber
& &h^{(1)}_{13} = -2 \lambda_2 \lambda_4 + 2 \lambda_1 \lambda_3
\cos \varphi, \hspace{1.0cm} h^{(1)}_{31} = -2 \lambda_3 \lambda_4
+ 2 \lambda_1 \lambda_2 \cos \varphi
                                                                    \\  \nonumber
& &h^{(1)}_{23} = -2 \lambda_1 \lambda_3 \sin \varphi,
\hspace{1.0cm} h^{(1)}_{32} = -2 \lambda_1 \lambda_2 \sin \varphi
                                                                    \\  \nonumber
& &h^{(2)}_{11} = - h^{(2)}_{22} = 2 \lambda_0 \lambda_2,
\hspace{1.0cm} h^{(2)}_{33} = \lambda_0^2 - \lambda_1^2 +
\lambda_2^2 - \lambda_3^2
                  + \lambda_4^2
                                                                     \\  \nonumber
& &h^{(2)}_{12} = h^{(2)}_{21} = 0, \hspace{1.0cm} h^{(2)}_{13} =
2 \lambda_0 \lambda_1 \cos \varphi
                                                                      \\  \nonumber
& &h^{(2)}_{31} = -2 \lambda_3 \lambda_4 - 2 \lambda_1 \lambda_2
\cos \varphi, \hspace{1.0cm} h^{(2)}_{23} = 2 \lambda_0 \lambda_1
\sin \varphi
                                                                   \\  \nonumber
& &h^{(2)}_{32} = 2 \lambda_1 \lambda_2 \sin \varphi.
\end{eqnarray}
The matrix $h^{(3)}_{\alpha \beta}$ is obtained from
$h^{(2)}_{\alpha \beta}$ by exchanging $\lambda_2$ with
$\lambda_3$. The non-vanishing components of $g_{\alpha \beta
\gamma}$ are
\begin{eqnarray}
\label{appb3} & &g_{111} = -g_{122} = -g_{212} = -g_{221} = 2
\lambda_0 \lambda_4
                                                         \\  \nonumber
& &g_{113} = -g_{223} = 2 \lambda_0 \lambda_3, \hspace{1.0cm}
g_{131} = -g_{232} = 2 \lambda_0 \lambda_2
                                                         \\  \nonumber
& &g_{133} = 2 \lambda_0 \lambda_1 \cos \varphi, \hspace{1.0cm}
g_{233} = 2 \lambda_0 \lambda_1 \sin \varphi
                                                         \\  \nonumber
& &g_{312} = g_{321} = 2 \lambda_1 \lambda_4 \sin \varphi,
\hspace{1.0cm} g_{311} = -2 \lambda_2 \lambda_3 - 2 \lambda_1
\lambda_4 \cos \varphi
                                                         \\   \nonumber
& &g_{313} = 2 \lambda_2 \lambda_4 - 2 \lambda_1 \lambda_3 \cos
\varphi, \hspace{1.0cm} g_{322} = -2 \lambda_2 \lambda_3 + 2
\lambda_1 \lambda_4 \cos \varphi
                                                           \\  \nonumber
& &g_{323} = 2 \lambda_1 \lambda_3 \sin \varphi, \hspace{1.0cm}
g_{331} = 2 \lambda_3 \lambda_4 - 2 \lambda_1 \lambda_2 \cos
\varphi
                                                               \\  \nonumber
& &g_{332} = 2 \lambda_1 \lambda_2 \sin \varphi, \hspace{1.0cm}
g_{333} = \lambda_0^2 - \lambda_1^2 + \lambda_2^2 + \lambda_3^2 -
\lambda_4^2.
\end{eqnarray}


\chapter[Geometrical interpretation]{Geometrical interpretation of the duality}

\begin{figure}[ht!]
\begin{center}
\includegraphics[height=7cm]{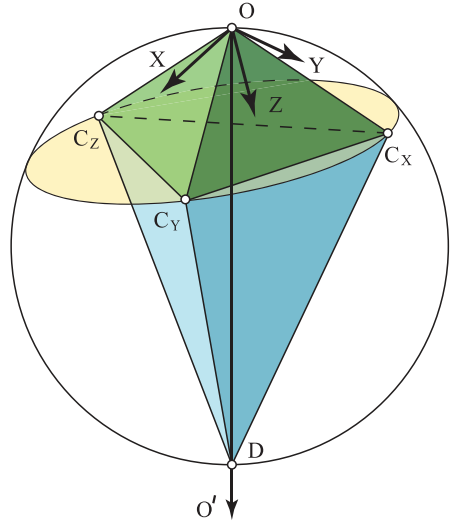}
\caption[fig7]{\label{vec}(Color online) The geometrical
interpretation of the duality for three-qubit W states. Mutually
perpendicular bold lines $OX$, $OY$ and $OZ$ are coordinate axes
and  $\overrightarrow{OO^\prime}$ is an arbitrary direction.
$OC_X,\,OC_Y$ and $OC_Z$ are mirror images of the line $OO^\prime$
in respect to the three axes. The points $C_X,\,C_Y$ and $C_Z$ are
intersections of these lines with the sphere uniquely defined by
the two conditions: its center lies on the line $OO^\prime$  and
its diameter $OD\equiv r$ is the sum of the lateral sides of the
upper pyramid (with the apex $O$ and base $C_XC_YC_Z$). Now the
direction cosines (and sines) of the vector $\overrightarrow{OD}$
are coefficients of the local states $\ket{u_i}$ in a
computational basis. And the lateral sides of the lower pyramid
(with the apex $D$ and base $C_XC_YC_Z$) are the coefficients of a
3-qubit W-state in the same basis. Thus each direction singles out
a product state and a W state and thereby establishes a
correspondence among them.}
\end{center}
\end{figure}
The nearest product state
$\ket{u_1}\o\ket{u_2}\o\cdots\o\ket{u_N}$  of the W state
\eqref{2.w} can be parameterized as follows
\begin{equation}\label{0.near}
 \ket{u_k}=\sin\t_k\ket{0}+\cos\t_k\ket{1},\, 0\leq\t_k\leq\frac{\pi}{2}, \, k=1,2,...,N,
\end{equation}
where
\begin{equation}\label{0.dircos-universal}
\cos^2\t_1+\cos^2\t_2+\cdots+\cos^2\t_N=1.
\end{equation}
Thus the angles $\cos\t_k$ define a unit N-dimensional vector in
Euclidean space. They satisfy the equalities
\begin{equation}\label{0.rmod-universal} \frac{1}{r} \equiv
\frac{\sin2\t_1}{c_1} = \frac{\sin2\t_2}{c_2} = \cdots =
\frac{\sin2\t_N}{c_N}.
\end{equation}
These equalities can be interpreted as trigonometric relations for
the right triangles with hypotenuses $r$, angles $2\t_k$, opposite
legs $c_k$ and adjacent legs $\s{r^2-c_k^2}$. If $2\t_k>\pi/2$,
then one takes the angle $\pi-2\t_k$ instead. All of these
triangles has the same hypotenuse $r$ and therefore can be
circumscribed by a single sphere with the diameter $r$. The final
picture represents two inscribed N-dimensional pyramids with a
common base and lateral sides $c_1,c_2,...,c_N$ and
$\s{r^2-c_1^2},\s{r^2-c_2^2},\s{r^2-c_N^2}$, respectively. The
case $N=3$ is illustrated in Fig.\ref{vec}.

\end{appendices}


\end{document}